\newtheorem{theorem}{Theorem}[section]
\newtheorem{corollary}[theorem]{Corollary}
\newtheorem{lemma}[theorem]{Lemma}
\newtheorem{definition}[theorem]{Definition}
\newtheorem{remark}[theorem]{Remark}
\newtheorem{proposition}[theorem]{Proposition}
\newcommand{\eps}{\mbox{$\epsilon$}}
\newcommand{\defeq}{\stackrel{\textup{def}}{=}}
\renewcommand{\vec}[1]{\mathbf{#1}}
\def\E{\ensuremath{\mathbf{E}}}
\begin{document}

\title[The route to chaos  in routing games 
]{
{The route to chaos  in routing games:\\}  
{
When is Price of Anarchy too optimistic?}
}

\author[T. Chotibut]{Thiparat Chotibut}
\address[T. Chotibut]{Department of Physics, Faculty of Science, Chulalongkorn University, Bangkok 10330, Thailand \newline 
 Engineering Systems and Design, Singapore
  University of Technology and Design, 8 Somapah Road, Singapore 487372 }
\email{Thiparat.C@chula.ac.th, thiparatc@gmail.com}

\author[F. Falniowski]{Fryderyk Falniowski}
\address[F. Falniowski]{Department of Mathematics, Cracow University
  of Economics, Ra\-ko\-wicka~27, 31-510 Krak\'ow, Poland}
\email{fryderyk.falniowski@uek.krakow.pl}

\author[M. Misiurewicz]{Micha{\l} Misiurewicz}
\address[M. Misiurewicz]{Department of Mathematical Sciences, Indiana
  University-Purdue University Indianapolis, 402 N. Blackford
  Street, Indianapolis, IN 46202, USA}
\email{mmisiure@math.iupui.edu}

\author[G. Piliouras]{Georgios Piliouras}
\address[G. Piliouras]{Engineering Systems and Design, Singapore
  University of Technology and Design, 8 Somapah Road, Singapore 487372}
\email{georgios@sutd.edu.sg}

\begin{abstract}
Routing games are amongst the most studied classes of games. Their two most well-known properties are that learning dynamics converge to equilibria and that all equilibria are approximately optimal. In this work, we perform a stress test for these classic results by studying the ubiquitous dynamics, Multiplicative Weights Update, in different classes of congestion games, uncovering intricate non-equilibrium phenomena. \textit{As the system demand increases}, the learning dynamics go through period-doubling bifurcations, leading to instabilities, chaos and large inefficiencies even in the simplest case of non-atomic routing games with two paths of linear cost where the Price of Anarchy is equal to one.

Starting with this simple class, we show that every system has a carrying capacity, above which it becomes unstable. If the equilibrium flow is a symmetric $50-50\%$ split, the system exhibits one period-doubling bifurcation. A single periodic attractor of period two replaces the attracting fixed point. Although the Price of Anarchy is equal to one,  in the large population limit  the time-average social cost for all but a zero measure set of initial conditions converges to its worst possible value. For asymmetric equilibrium flows, increasing the demand  eventually forces the system into Li-Yorke chaos with positive topological entropy and periodic orbits of all possible periods.  Remarkably, in all non-equilibrating regimes, the time-average flows on the paths converge {\it exactly} to the equilibrium flows, a property akin to no-regret learning in zero-sum games. These results are \textit{robust}.  We extend them to routing games with arbitrarily many strategies, polynomial cost functions, non-atomic as well as atomic routing games and heteregenous users. Our results are also applicable to any sequence of shrinking learning rates, e.g., $1/\sqrt{T}$, by allowing for a dynamically increasing population size.

\end{abstract}


\maketitle

\vspace{-32pt} 

\begin{figure}[ht]

\centering
\includegraphics[width=0.67\textwidth]{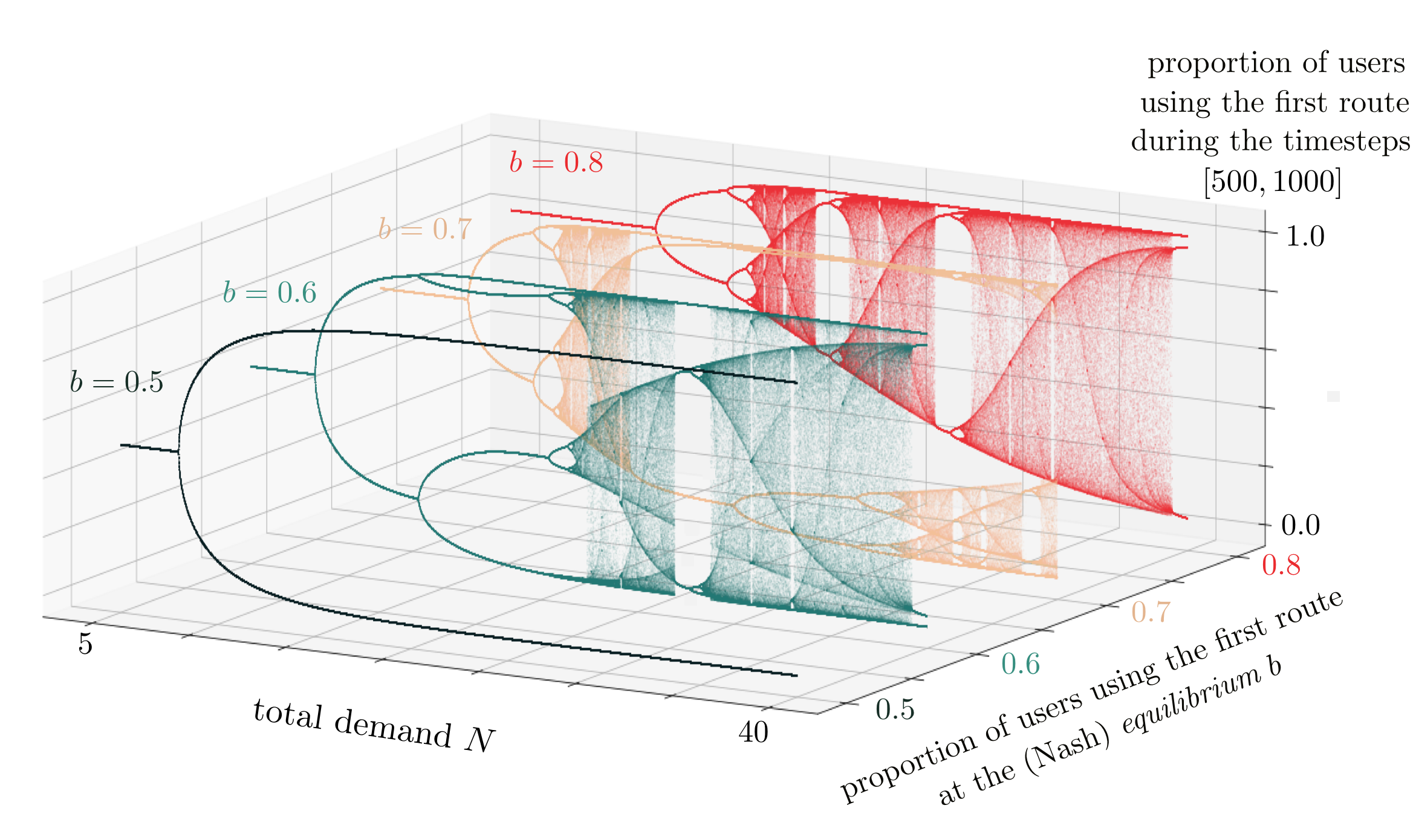} 
\caption{If the equilibrium flow is symmetric between the two routes ($b = 0.5$), large total demand $N$ leads to a limit cycle of period $2$. In any game with an asymmetric equilibrium split ($b\neq 0.5$), chaos emerges at large $N$. For detailed discussions, see Figure \ref{fig: intro_summary}.}
\label{fig: intro_summary3D}

\end{figure}


\newpage

\section{Introduction}
\label{s:intro}

Congestion and  routing games \cite{rosenthal73} are amongst the most well studied class of games in game theory.  Being isomorphic to potential games \cite{monderer1996fictitious}, congestion games are one of the few classes of games in which a variety of learning dynamics are known to converge to Nash equilibria  \cite{Even-Dar:2005:FCS:1070432.1070541,Fotakis08,Fischer:2006:FCW:1132516.1132608,Kleinberg09multiplicativeupdates,kleinberg2011load,hoo}. Proving convergence to equilibria typically exploits the existence of the potential function that acts as a (strong) Lyapunov function for learning dynamics; this function is strictly decreasing when the system is out-of-equilibrium.

Congestion games also play a pivotal role in the study of Price of Anarchy \cite{KoutsoupiasP99WorstCE,roughgarden2002bad,christodoulou,Fotakis2005226,bilo2017lookahead,correa2018inefficiency}. Price of Anarchy (PoA)  is defined as the ratio of the social cost of the worst Nash equilibrium to the optimal social cost. 
 A small Price of Anarchy implies that all Nash equilibria are near optimal, and hence any equilibrating learning dynamics suffices to reach approximately optimal system performance. One of the hallmarks of the Price of Anarchy research has been the development of tight Price of Anarchy bounds for congestion games that are independent of the topology of the network or the number of users. Specifically, under the prototypical assumption of linear cost functions, Price of Anarchy in the case of non-atomic agents (in which each agent controls an infinitesimal amount of flow) is at most $4/3$ \cite{roughgarden2002bad}. In the atomic case (in which each agent controls a discrete unit of flow), Price of Anarchy is at most $5/2$ \cite{christodoulou}, with small networks sufficing to provide tight lower bounds.

Additionally, congestion games have paved the way for recent developments in  Price of Anarchy research, extending our understanding of system performance even for non-equilibrating dynamics.
 Roughgarden \cite{Roughgarden09} showed that most Price of Anarchy results could be organized in a common framework known as $(\lambda,\mu)$-smoothness.
For classes of games that satisfy this property, such as congestion games, the Price of Anarchy bounds derived for worst case Nash equilibria immediately carry over to worst case instantiations of regret minimizing algorithms.
 An (online) algorithm is said to minimize regret as long as its time-average performance is roughly as good as that of the best fixed action with hindsight. The most ubiquitous member of this class of algorithms is arguably the Multiplicative Weights Update (MWU) \cite{Arora05themultiplicative}.  The aforementioned Price of Anarchy results readily apply to any learning dynamics or any sequence of strategic plays, as long as the algorithms achieve small time-average regret. In the case of congestion games, near optimal time-average performance is guaranteed \textit{asymptotically} even when learning does not equilibrate, however, as we discuss in Section \ref{s:discussion} 
  slow convergence rates that reduce the applicability of such results in some applications of interest, including congestion games with many agents. 

All these positive results have inspired growing efforts to achieve  stronger efficiency guarantees. How close to $1$ can the Price of Anarchy get? For example, in linear congestion games, what is the final correct answer after all? Is it $5/2$ as the atomic model suggests, or is the much better non-atomic bound of $4/3$ binding? In the case of polynomial cost functions, the gap between these predictions grows exponentially fast with the degree of the polynomials, making this question even more pressing.
In a recent development, \cite{Feldman:2016:PAL:2897518.2897580} argues that even if the agents are atomic, as long as the number of agents $N$ is large, atomic congestion games behave approximately as non-atomic ones; hence, $4/3$ is the correct bound. That is, the much smaller non-atomic bound is the correct one after all.
A series of related results has followed \cite{colini2016price, colini2017asymptotic,colini2018price}, suggesting strong bounds on Price of Anarchy under the assumption of large demand. What is the connection between atomic and non-atomic congestion games?

Let us consider the simplest congestion game example. 
A game with two strategies and two agents where the cost/latency of each strategy is equal to its load. The worst Nash equilibrium of this game has both agents choosing a strategy uniformly at random. The expected cost of each agent is $3/2$, i.e., a cost  equal to $1$ due to their own load, and an expected extra cost of $1/2$ due to the $50\%$ chance of adopting the same strategy as the other agent. On the other hand, at the optimal state, each agent selects a distinct strategy at a cost of $1$. As a result, the Price of Anarchy for this game is $3/2$.
Suppose now that we increase the number of agents from $2$ to $N$\footnote{For simplicity, let $N$ be an even number.}. The worst equilibrium still has each agent choosing a strategy uniformly at random at an expected cost of $(N-1)/2+1=(N+1)/2$. The optimal configuration splits the agents, deterministically and equally to both strategies at a cost of $N/2$ per agent. The Price of Anarchy is $1+1/N,$ converging to $1$ as $N$ grows. Indeed, as the population size grows, the atomic game is more conveniently described by its effective non-atomic counterpart, with a continuum of users, and a unique equilibrium that equidistributes the total demand $N$  between the two strategies. 
So, the equilibria indeed are effectively optimal. How does this large demand, however, affect the dynamics?

 {\bf Informal Meta-Theorem:} We analyze MWU in routing/congestion games under a wide range of settings and combinations thereof (two/many paths, non-atomic/atomic, linear/polynomial, etc).  Given any such game $G$ and an arbitrary small learning rate (step-size) $\epsilon$, we show that 
 there exist a system capacity $N_0(G, \epsilon)$ such that \textit{if  
 the total demand exceeds this threshold the system is provably unstable} with complex non-equilibrating behavior. Both \textit{periodic behavior} as well as  \textit{chaotic behavior} is proven and we give formal guarantees about the conditions under which they emerge. Despite this unpredictability of the non-equilibrating regimes the \textit{time-average} costs/flows 
 exhibit regularity and for linear costs
 \textit{converge to equilibrium}. The variance, however, of the resulting non-equilibrium flows leads to increased inefficiencies, as the \textit{time-average cost can be  arbitrarily high}, even for simple games where all equilibria are optimal.  
 

 {\bf Intuition behind instability:}
 To build an intuition of why instability can arise, let us revisit the simple example with two strategies and let us consider a continuum/large number of users 
  updating their strategies according to a learning dynamic, e.g. MWU with a step-size $\epsilon$. Given any non-equilibrium initial condition, the agents on the over-congested strategy have a strong incentive to migrate to the other strategy. As they all act in unison, \textit{if the total demand is sufficiently large}, the corrective deviation to the other strategy will be overly aggressive, resulting in the other strategy becoming over-congested. 
With this heuristic consideration, a self-sustaining non-equilibrating behavior, where users bear higher time-average costs than those at the equilibrium flow, is indeed plausible. In this work, we show that it is in fact provably true, even for games with arbitrarily many strategies.

 {\bf Significance of results:} \textit{When is (robust) PoA analysis too optimistic?} 
 In our simple example with two parallel links and a continuum/large number of users we have that the PoA of the game is equal to $1$. All linear non-atomic congestion game have an upper bound PoA of $4/3$ and in the case of atomic congestion games with many agents, recent work \cite{Feldman:2016:PAL:2897518.2897580} implies a  $(\lambda,\mu)$-smoothness robust PoA bound of the same magnitude. However, for any arbitrarily small fixed $\epsilon$ (e.g. $\epsilon=2^{-100}$), we can choose a large enough demand/population size such that the time average cost is the worst possible, which is a factor of $2$ away from optimal (Theorems \ref{t:SC} and \ref{t:Disc1}). Hence, robust PoA bounds do not necessarily reflect an upper bound on the actual behavior of MWU under arbitrarily small fixed learning rates.  Applying \textit{a decreasing learning rate $\epsilon$ is not an easy fix} either. 
 Our work reveals that there exists an effective tension between the size of the step-size/learning rate $\epsilon$ of the dynamic and the total system demand/population size $N$.
 Despite the  undeniable usefulness of PoA analysis and $(\lambda,\mu)$-smoothness robust PoA, 
 in games with many agents even with decreasing step-sizes 
 these Price of Anarchy guarantees do not become binding for regret-minimizing dynamics until after arbitrarily long chaotic histories of possibly arbitrarily high time-average social cost. Our analysis peers exactly into these non-equilibrating regimes and by understanding their geometry and variance show that they can indeed imply worst case social costs. 
  Moreover, if we incorporate a slowly increasing population size then even with decreasing step sizes the system will provable stay in its chaotic, inefficient regime forever.
  See Section \ref{s:discussion} for a detailed discussion of these issues.

%


{\bf Base model \& results:} We start by focusing on the minimal case of linear non-atomic congestion games with two edges and total demand $N$. All agents are assumed to evolve their behavior using Multiplicative Weights Updates with an arbitrarily small, fixed learning rate $\epsilon$.
In Section \ref{s:1d} we prove that every such system has a critical threshold, a hidden system capacity, which when exceeded, the system exhibits a bifurcation and no longer converges to its equilibrium flow.
 If the unique equilibrium flow is the $50-50\%$ split (doubly symmetric game),  the system proceeds through exactly one period-doubling bifurcation, where a single attracting periodic orbit of period two replaces the attracting fixed point. In the case where the game possesses an asymmetric equilibrium flow, the bifurcation diagram is much more complex (Figures \ref{fig: intro_summary3D}, \ref{fig: intro_summary} and \ref{fig: cobwebb_b0p7}). As the total demand changes, we will see the birth and death of periodic attractors of various periods. All such systems provably exhibit Li-Yorke chaos, given sufficiently large total demand.  This implies that there exists an uncountable set of initial conditions such that the set is "scrambled", i.e.,  given any two initial conditions $x(0), y(0)$ in this set,  $\liminf dist(x(t),y(t))=0$ while $\limsup dist(x(t),y(t))>0$. 
 Everywhere in the non-equilibrating regime, MWU's time-average behavior is reminiscent of its behavior in zero-sum games. Namely, the time-average flows and costs of the strategies converge {\it exactly} to their \textit{equilibrium values}. Unlike zero-sum games, however, these non-equilibrium dynamics exhibit large regret (Section \ref{s: regret}), and (possibly arbitrarily) high time-average social costs (Section \ref{s: socialcost}), even when the Price of Anarchy is equal to one. In Section \ref{s:entropy}, we argue that the system displays another signature of chaotic behavior, positive topological entropy.  We provide an intuitive explanation by showing that 
 if we encode three events: A) the system is approximately at equilibrium, B) the first strategy is overly congested, C) the second strategy is overly congested, then the number of possible sequences on the alphabet $\{A,B,C\}$,  encapsulating possible system dynamics, grows exponentially with time. Clearly, if the system reached an (approximate) equilibrium, any sequences must terminate with an infinite string of the form $\dots$AAA$\dots$. Instead, we find that the system can become truly unpredictable.
 In Appendix \ref{s:properties_orbits}, we show that the system may possess multiple distinct attractors and hence the time-average regret and social cost depend critically on initial conditions. Properties of periodic orbits, the evidence of Feigenbaum's universal route to chaos in our non-unimodal map, are also provided.

{\bf Extensions:} We conclude the paper by examining the robustness of our findings. 
In Appendix \ref{s:extension}, we prove that our results hold not only for graphs with two paths but extend for arbitrary number of paths.
In Appendix \ref{s:nonlinear} ,we prove Li-Yorke chaos, positive topological entropy and time-average results for 
polynomial costs.
In Appendix \ref{s:mixture}, we provide extensions for games with heterogeneous users.
Finally, in Appendix \ref{s:reductions}, we produce a reduction for MWU dynamics from atomic to non-atomic   congestion games.
This allows us to extend our proofs of chaos, inefficiency to atomic congestion games with many agents.




\begin{figure}[h!]
\centering
\includegraphics[width=0.68\textwidth]{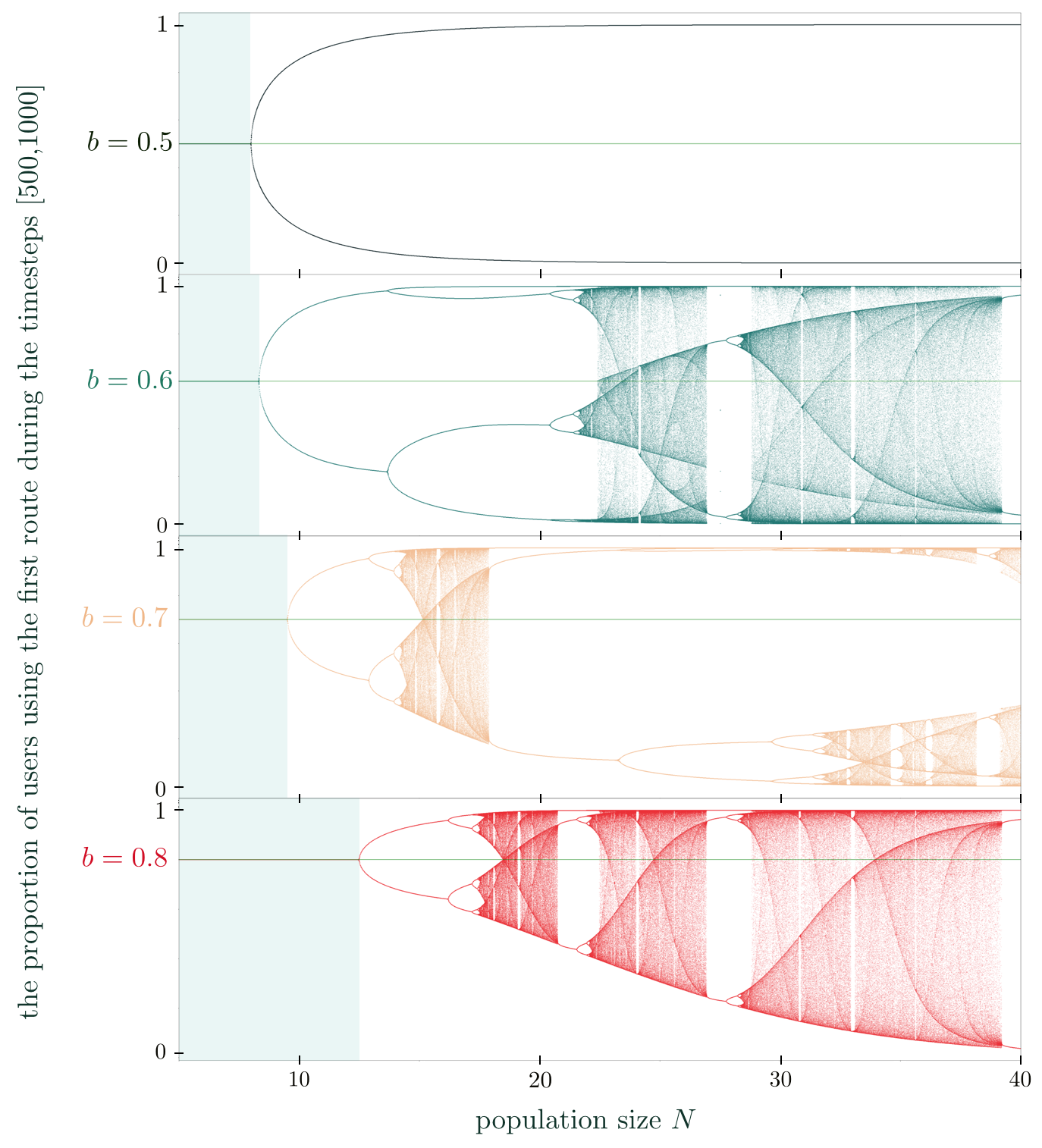}
\caption{These bifurcation diagrams summarize the non-equilibrium phenomena identified in this work. When Multiplicative Weights Update (MWU) learning is applied on a non-atomic linear congestion games with two routes, population increase drives period-doubling instability and chaos. Standard equilibrium analysis only holds at small population sizes, shown in light cyan regions. As population size $N$ (up to a rescaling factor of a fixed learning rate) increases, regret-minimizing MWU algorithm {\it no longer converges} to the Nash equilibrium flow $b$, depicted as the green horizontal lines; the proportion of users using the first route deviates significantly from the Nash equilibrium flow. When the equilibrium flow is symmetric between the two routes ($b = 0.5$), large $N$ leads to non-equilibrium dynamics that is attracted toward a limit cycle of period two. For large $N$, the two periodic points can approach $1$ or $0$ arbitrarily close, meaning that almost all users will occupy the same route, while simultaneously alternating between the two routes. Thus, the time-average social cost can become as bad as possible. In {\it any} game with an asymmetric equilibrium flow ($b \neq 0.5$), Li-Yorke chaos is inevitable as $N$ increases. Although the dynamics is non-equilibrating or chaotic, the time-average of the orbits still converges {\it exactly} to the equilibrium $b$. 
 This work proves the aforementioned statements, and investigates the implications of non-equilibrium dynamics on the standard Price of Anarchy analysis. Properties of chaotic attractors and of the period-doubling bifurcations, as well as extensions to more complex congestion games are studied in the appendices.}
\label{fig: intro_summary}
\end{figure}


\section{Model}
\label{s:prelim}
We consider a two-strategy \emph{congestion game} (see~\cite{rosenthal73}) with a continuum of players(agents), where all of them apply the \emph{multiplicative weights update} to update their strategies \cite{Arora05themultiplicative}. Each of the players controls an infinitesimal small fraction of the flow.
  We will assume that the total flow of all the players is equal to $N$. 
  We will denote the fraction of the players adopting the first strategy at time $n$ as $x_n$. The second strategy is then chosen by $1-x_n$ fraction of the players. This model physically encapsulates how a large population of commuters selects between the two alternative paths that connect the initial point to the end point.
When a large fraction of the players adopt the same
strategy, congestion arises, and the cost of choosing the same strategy increases.

{\bf Linear routing games}:
We 
focus on linear cost functions. Specifically, the cost of each path (link, route, or strategy) here will be assumed proportional to the \emph{load}. By denoting $c(j)$ the cost of selecting the
strategy number $j$ (when $x$ fraction of the agents choose the first strategy), if the coefficients of proportionality are
$\alpha,\beta>0$, we obtain
\begin{equation}\label{cost}
c(1)=\alpha N x, \hspace{50pt} c(2)=\beta N (1-x).
\end{equation}

Our analysis on the emergence of bifurcations, limit cycles and chaos will carry over immediately to the cost functions of the form $\alpha x+\gamma$. As we will see, the only parameter that is important is the value of the equilibrium split, i.e. the percentage of players using the first strategy at equilibrium. The first advantage of this formulation is that the fraction of agents using each strategy at equilibrium is independent of the flow $N$.
The second advantage is that the Price of Anarchy of these games is exactly $1$, independent of $\alpha, \beta, $ and $N$. Hence, our model offers a natural benchmark for comparing equilibrium analysis, which suggests optimal social cost, to the time-average social cost arising from non-equilibrium learning dynamics which as we show can be as large as possible.

\subsection{Learning in congestion games with multiplicative weights}

At time $n+1$, we assume the players know the cost of the strategies at
time $n$ (equivalently, the probabilities $x_n, 1-x_n$) and update
their choices. Since we have a continuum of agents, 
the realized flow (split) is accurately described by the probabilities $(x_n, 1-x_n)$.
The algorithm for updating the probabilities that we focus on is the
\emph{multiplicative weights update} (MWU), the ubiquitous learning algorithm widely employed in Machine Learning, optimization, and game theory \cite{Arora05themultiplicative,Nisan:2007:AGT:1296179,roughgarden2016twenty}. Namely, there is a
parameter $\eps\in(0,1)$, which can be treated as the  common learning rate of
all players, such that each probability gets multiplied
by $(1-\eps)$ to the power which is the cost of playing a given
strategy by the given player. 
 The numbers obtained in this way usually
will not be probabilities, so we have to normalize them. Thus, we get
\begin{equation}\label{mwu}
\begin{aligned}
x_{n+1}&=\frac{x_n(1-\eps)^{c(1)}}{x_n(1-\eps)^{c(1)}+
(1-x_n)(1-\eps)^{c(2)}}\\
           &=\frac{x_n}{x_n+
(1-x_n)(1-\eps)^{c(2)-c(1)}}.\\
\end{aligned}
\end{equation}
In this way, a large cost at time $n$ will decrease
the probability of choosing the same strategy at time $n+1$.

By substituting into~\eqref{mwu} the values of the cost functions
from~\eqref{cost} we get:
\begin{equation}\label{mwu1}
\begin{aligned}
x_{n+1}&=\frac{x_n(1-\eps)^{\alpha N x_n}}{x_n(1-\eps)^{\alpha N x_n}+
(1-x_n)(1-\eps)^{\beta N (1-x_n)}}\\
&=\frac{x_n}{x_n+(1-x_n)(1-\eps)^{\beta N -(\alpha+\beta) N x_n}}.
\end{aligned}
\end{equation}

We introduce the new variables
\begin{equation}\label{var}
a=(\alpha+\beta) N \ln\left(\frac1{1-\eps}\right),\ \ \ b=\frac{\beta}{\alpha+\beta}.
\end{equation}

{In fact, we can assume without loss of generality that $\alpha+\beta=1$ (i.e., by the transformation $\alpha'= \frac{\alpha}{\alpha+\beta}$, $\beta'= \frac{\beta}{\alpha+\beta}$, and $\epsilon'=1- (1-\epsilon)^{\alpha+\beta}$). Under these assumptions, equations (\ref{var}) simplify to 
\begin{equation}\label{var2}
a=N \ln\left(\frac1{1-\eps}\right),\ \ \ b=\beta.
\end{equation}}

We will thus study the dynamical systems generated by the one-dimensional map:
\begin{equation}\label{map}
\begin{aligned}
f_{a,b}(x)&=\frac{x}{x+(1-x)\exp(a(x-b))}.\\
\end{aligned}
\end{equation}

As commonly adopted as a standard assumption, the learning rate $\epsilon$ can be regarded as a small, fixed constant in the following analysis but the exact value of constant $\epsilon$ is not of particular interest as our analysis/results will hold for any fixed choice of $\epsilon$ no matter how small. Setting $\epsilon=1-1/e$ such that $\ln\left(\frac1{1-\eps}\right)=1$ simplifies notation as under this assumption $a=N$.
We will then study the effects of the remaining two parameters on system performance, i.e.  $a$, the \textit{(normalized) system demand} and $b$, the \textit{(normalized) equilibrium flow}. 
When $b=0.5$ the routing game is fully symmetric; whereas, when $b$ is close to $0$ or $1$, the routing instance becomes close to a Pigou network with almost all agents selecting the same edge at equilibrium.

\subsection{Regret, Price of Anarchy and time average social cost}\label{sec: regretPoASC}

We will now consider this game from the perspective of each agent as an instance of an online optimization problem.
Consider the set $A=\{1,2\}$ of $2$ actions and a time horizon $T \ge 1.$  At each time step $n = 1,2,\dots,T:$ A decision maker picks a probability distribution ${\bf x}_n= (x_n, 1-x_n)$ over her actions $A$. An adversary picks a cost vector ${\bf c}_n: A \rightarrow [-1,1]$. An action $a_n$ is chosen according to the distribution ${\bf x}_n$, and the decision-maker receives reward $r_n(a_n)$. The decision-maker learns ${\bf r}_n$, the entire reward vector.

An {\it online decision-making algorithm} such as MWU specifies for each $n$ the probability distribution ${\bf x}_n $, as a function of the cost vectors ${\bf c}_1,\dots,{\bf c}_{n - 1}$ and the realized actions $a_1,\dots,a_{n-1}$ of the first $n - 1$ time steps. An {\it adversary} for such an algorithm $\mathcal{A}$ specifies for each $n$ the cost vector ${\bf c}_n$, as a function of the probability distributions $x_1, \dots , x_n$ used by $\mathcal{A}$ on the first $n$ days and the realized actions $a_1, \dots , a_{n-1}$ of the first $n - 1$ days.
For example, elements of $A$ could represent different investment strategies, different driving routes between home and work, different wireless routers, etc. 

Rather than typically comparing the expected reward of an algorithm to that of the best action {\it sequence} in hindsight, we compare it to the reward incurred by the {best fixed action} in hindsight. Namely, we change our benchmark from $\sum_{n=1}^T \min_{a \in A} c_n(a)$ to $\min_{a \in A} \sum_{n=1}^T c_n(a)$.
\\

\noindent\textbf{Regret}: Fix  cost vectors ${\bf c}_1,\dots,{\bf c}_T$. The (expected) {\it regret} of  the (randomized) algorithm  $\mathcal{A}$  choosing actions according $x_1,\dots,x_T$ is
\begin{equation}
 \underbrace{\sum_{n=1}^T \E_{a_n \sim x_n} c_n(a_n)}_{\text{our algorithm}}- \underbrace{\min_{a \in A} \sum_{n=1}^T c_n(a)}_{\text{best fixed action}},
\end{equation}
where $\E_{a_n \sim x_n} c_n(a_n)$ expresses the expected cost of the algorithm in time period $n$, when an action $a_n\in A$ is chosen according to the probability distribution $x_n$.

In the game theory context, the cost vector ${\bf c}_n$ of the each player is incurred by playing the (congestion) game with the other players. Formally, the cost vector for each agent at time $n$ is $\vec{c}_n =  \big(\alpha N x_n, \beta N (1-x_n)\big)$. 
The expected accumulated cost  (of any of the symmetric infinitesimally small) agents in time periods $1,\dots, T$ is equal to $\sum_{n=1}^T \big(\alpha  N x^2_n+  \beta N (1-x_n)^2\big)$. The expected regret of the algorithm is given by the formula 
 
 $$\sum_{n=1}^T \big(\alpha N x^2_n+  \beta N (1-x_n)^2\big) -  \min_{a \in A} \sum_{n=1}^T c_n(a) , $$
 
 \noindent
 or  equivalently,
 
 $$\sum_{n=1}^T \big(\alpha N x^2_n+  \beta N (1-x_n)^2\big) - \min \left\{\sum_{n=1}^T \alpha N x_n, \sum_{n=1}^T \beta N (1-x_n)\right\}. $$

\medskip



 \noindent{\textbf{Price of Anarchy:}} The \textit{Price of Anarchy} of a game is the ratio of the supremum of the social cost over all Nash equilibria divided by the social cost of the optimal state, where the social cost of a state is the sum of the costs of all agents. In our case, where the total flow (demand, or population size)  is $N$, and a fraction $x$ of the population adopts the first strategy, the social cost is $SC(x)= \alpha N^2 x^2 + \beta N^2 (1-x)^2$.

In non-atomic congestion games, it is well known that all equilibria have the same social cost. Moreover, for linear cost functions $c_1(x) = \alpha x, \ c_2(x)=\beta x$ it is straightforward to see that the Price of Anarchy is equal to one, as the unique equilibrium flow, $\beta$, also is the unique minimizer of the social cost, which attains a value\footnote{This statement is true under the normalization assumption $\alpha+\beta=1$.} of $N^2 \alpha \beta$. 

With above terminology in mind, we will study the time-average social cost given some initial condition. Since the time average of social cost may not converge to a specific value, research in algorithmic game theory typically focuses on
 the supremum over all initial conditions over all convergent subsequences.  If we only consider time sequences with vanishing time-average regret, their time averages normalized by the social cost of the optimal state is called the \textit{Price of Total Anarchy}, and, for non-atomic congestion games, it is equal to the Price of Anarchy, which is equal to $1$ in our case, suggesting optimal system performance \cite{blum2006routing}. Since MWU is not run with a decreasing step-size, the time-average regret may not vanish and a more careful analysis is needed. Lastly, taking a dynamical systems point of view, we will also study typical dynamical trajectories, since simulations suffice to identify the limits of time-averages of these trajectories, which occur for initial conditions with  positive Lebesgue measure.  

As we will show, as the total demand increases, the system will bifurcate away from the Nash equilibrium; and the time-average social cost will be strictly greater than its optimal value (in fact it can be artibtrarily close to its worst possible value). The Price of Anarchy result is no longer be predictive of the true system's non-equilibrium behavior. We now state the formula that will be used later. Under the assumption $\alpha+\beta=1$, we define the normalized time-average social cost as follows: 


\begin{equation}
\label{eqn: PoA_var}	 
\frac{\text{Time-average social cost}}{\text{Optimum social cost}}=
\frac{ \frac1T \sum_{n=1}^T \big(\alpha N^2 x^2_n+  \beta N^2 (1-x_n)^2\big)}{ N^2 \alpha \beta} =  \frac{ \frac1T\sum_{n=1}^T \big( x_n^2 -2\beta x_n +\beta \big)}{\beta (1-\beta)}.  
\end{equation}


\section{Limit cycles and chaos, with time-average convergence to Nash equilibrium}\label{s:1d}

This section discusses the behavior of the one-dimensional map defined by ~\eqref{map}, and its remarkable time-average properties, which we will later employ to analyze the time-average regret and the normalized time-average social cost in Sections \ref{s: regret} and \ref{s: socialcost}.
The map generated by non-atomic congestion games here reduces to the map studied in \cite{CFMP}, in which two-agent linear congestion games are studied. Up to redefinition of the parameters as well as with the symmetric initial conditions, i.e., on the diagonal, the one-dimensional map in the two scenarios are identical. Thus in this section, we restate some key properties of the map.  For the proofs we refer the reader to \cite{CFMP}.

We will start by investigating the dynamics under the map
~\eqref{map} given by 
\begin{equation}\label{1dmap}
f_{a,b}(x)=\frac{x}
{x+(1-x)\exp(a(x-b))},
\end{equation}
with $a>0$, $b\in (0,1)$. This map is generally asymmetric, unless $b=1/2$, see the middle column of Figure~\ref{fig: cobwebb_b0p7}.
It has three fixed points: $0$, $b$ and $1$.
The derivatives at the three fixed points are
\[
f'_{a,b}(0)=\exp(ab),\ \ f'_{a,b}(1)=\exp(a(1-b)),\ \ f'_{a,b}(b)=ab^2-ab+1.
\]
Hence, the fixed points 0 and 1 are repelling, while $b$ is repelling whenever $a>\frac2{b(1-b)}$.
\begin{figure}[t]
  \centering
   {\includegraphics[width = 0.8\textwidth]{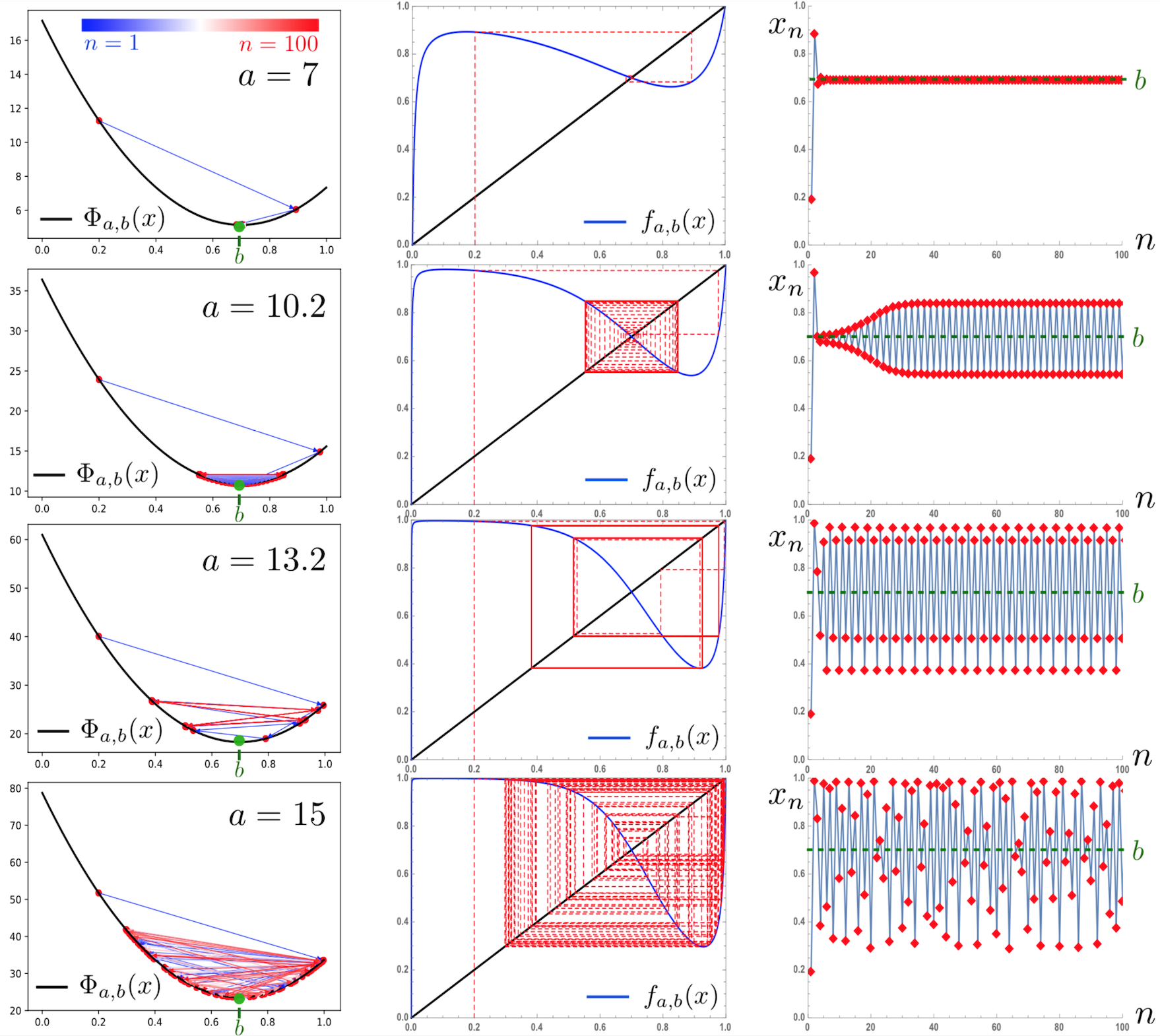}}
   \caption{ Population increase drives period-doubling instability and chaos. Although our congestion game has an associated convex potential function $\Phi_{a,b}(x) = \frac{N^2}{2}\left(\alpha x^2 + \beta (1-x)^2 \right) = \frac{a^2}{2}\left( (1-b)x^2 + b(1-x)^2\right)$ whose unique global minimum is the Nash equilibrium $b$ (without loss of generality, we set $\alpha+\beta = 1$ and $\epsilon = 1-1/e$ so that $a = N$ and $b = \beta$), MWU at large $a$, or equivalently large $N$ with a fixed $\epsilon$, do not converge to the equilibrium, unlike a gradient-like update with a small step size. A line with an arrow connecting $\Phi_{a,b}(x_n)$ to $\Phi_{a,b}(x_{n+1}) = \Phi_{a,b}(f_{a,b}(x_n))$ in the left column is encoded with the color representing the timestep $n$. Later times are shown in red, while earlier times are shown in blue.  Cobweb diagrams of the map $f_{a,b}$  are shown in the middle column, while the dynamics of the map are shown on the right column.  From top to bottom, values of $a$ increase while $b$ remains fixed at $0.7$, demonstrating population size-driven instability. At these parameter values, the map $f_{a,b}$ is bimodal (blue curves in the middle column). For small $a$ (top row), the dynamics converge to the Nash equilibrium $b$. As $a$ increases, the dynamics converge to a period $2$ attractor (second row), a period $4$ attractor (third row), and a chaotic attractor (bottom row). As shown in Section 3, however, the time average of these orbits is exactly the Nash equilibrium $b$, represented by the horizontal green dashed lines on the right column. The initial condition here is set to $x_0 = 0.2.$ The bifurcation diagram associated with $b=0.7$ is shown in Fig. \ref{fig: regbound_b0p7}.}
  \label{fig: cobwebb_b0p7}
\end{figure}

The critical points of $f_{a,b}$ are solutions to $ax^2-ax+1=0$. Thus,
if $0<a\le 4$, then $f_{a,b}$ is strictly increasing. If $a>4$, it has
two critical points
\begin{equation}\label{xlxf}
x_l=\frac 12\left(1-\sqrt{1-\frac 4a}\right), \ \ \ 
x_r=1-x_l=\frac 12\left(1+\sqrt{1-\frac 4a}\right)
\end{equation}
so $f_{a,b}$ is bimodal.

Let us investigate regularity of $f_{a,b}$. It is clear that it is
analytic. However, nice properties of interval maps are guaranteed not
by analyticity, but by the negative Schwarzian derivative. Let us recall
that the Schwarzian derivative of $f$ is given by the formula
\[
Sf=\frac{f'''}{f'}-\frac32\left(\frac{f''}{f'}\right)^2.
\]
A ``metatheorem'' states that almost all natural noninvertible
interval maps have negative Schwarzian derivative. Note that if $a\le
4$ then $f_{a,b}$ is a homeomorphism, so we should not expect negative
Schwarzian derivative for that case.

\begin{proposition}\label{nS}
If $a>4$ then the map $f_{a,b}$ has negative Schwarzian derivative.
\end{proposition}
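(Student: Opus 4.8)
The plan is to exploit the cocycle (chain) rule for the Schwarzian derivative, $S(g\circ k)=(Sg\circ k)\cdot (k')^2+Sk$, to strip off the elementary pieces of $f_{a,b}$. First I would write
\[
f_{a,b}(x)=\frac{1}{1+h(x)},\qquad h(x)=\frac{1-x}{x}\exp\bigl(a(x-b)\bigr).
\]
Since $y\mapsto \tfrac{1}{1+y}$ is Möbius it has zero Schwarzian, so the chain rule gives $Sf_{a,b}=Sh$ wherever the derivatives exist. Next, writing $h=\exp\circ\,\phi$ with $\phi(x)=\ln(1-x)-\ln x+a(x-b)$ and using the constant value $S\exp\equiv-\tfrac12$, the chain rule yields $Sh=S\phi-\tfrac12(\phi')^2$. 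Hence it suffices to prove $S\phi-\tfrac12(\phi')^2<0$ at every $x\in(0,1)$ with $\phi'(x)\neq0$; multiplying by $(\phi')^2>0$ this is equivalent to $Q(x):=\phi'''\phi'-\tfrac32(\phi'')^2-\tfrac12(\phi')^4<0$. The condition $\phi'\neq0$ is exactly $f'_{a,b}\neq0$, so this reduction concerns only noncritical points; at the two critical points (which are nondegenerate since $f_{a,b}$ is bimodal for $a>4$) one has $Sf_{a,b}\to-\infty$, consistent with negative Schwarzian, so they need no separate treatment beyond this remark.

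The second step is the computation of $Q$. Differentiating $\phi$ I note that the constant $-ab$ disappears, so $b$ plays no role (consistent with the proposition holding for every $b$). I would introduce $s=1/x$ and $t=1/(1-x)$, which satisfy the identity $s+t=st$ because $\tfrac1s+\tfrac1t=x+(1-x)=1$. Setting $p:=s+t=\tfrac{1}{x(1-x)}$, so that $p\geq4$ on $(0,1)$ with equality only at $x=\tfrac12$, one finds $\phi'=a-p$, $\phi''=s^2-t^2$, and $\phi'''=-2(s^3+t^3)$. Using $s+t=st=p$ to rewrite $s^3+t^3=p^3-3p^2$ and $(s^2-t^2)^2=p^4-4p^3$, and substituting $\phi'=a-p$, the quartic- and cubic-in-$p$ contributions cancel and $Q$ collapses to
\[
Q=a\Bigl[\,3(2-a)p^2+2a^2p-\tfrac12a^3\,\Bigr].
\]
(As a check, $\phi'=a-p=0\iff p=a\iff ax^2-ax+1=0$, recovering the stated critical points.)

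Since $a>0$, it remains to prove $B(p):=3(2-a)p^2+2a^2p-\tfrac12a^3<0$ for all $p\geq4$. For $a>4$ this is a downward parabola, so I would bound its maximum over $[4,\infty)$. At the endpoint, $B(4)=-\tfrac12(a-4)(a^2-12a+48)$, which is negative because $a>4$ and $a^2-12a+48$ has negative discriminant. The vertex lies at $p^*=\tfrac{a^2}{3(a-2)}$: if $p^*\leq4$ then $B$ is decreasing on $[4,\infty)$ and $B(4)<0$ finishes it, while $p^*>4$ forces $a>6+2\sqrt3$, in which case the global maximum $-\tfrac12a^3+\tfrac{a^4}{3(a-2)}$ is negative precisely when $a>6$, which holds automatically. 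Either way $B<0$ on $[4,\infty)$, hence $Q<0$ and $Sf_{a,b}<0$.

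The hard part will be the algebraic collapse in the second step: one must verify that, after the substitution $\phi'=a-p$, the $p^4$ and $p^3$ terms cancel exactly so that only a quadratic in $p$ survives. The symmetric-function identities forced by $s+t=st$ are what make this happen, and a single sign slip would destroy the cancellation, so this is where I would be most careful. Everything after the reduction to $B(p)<0$ is then a routine one-variable parabola estimate.
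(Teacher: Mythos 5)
Your proof is correct: the two chain-rule reductions, the symmetric-function computation (the $p^4$ and $p^3$ terms do cancel, leaving $Q=a\bigl[3(2-a)p^2+2a^2p-\tfrac12 a^3\bigr]$), and the parabola analysis on $p\in[4,\infty)$ all check out, and restricting to non-critical points is exactly what the negative-Schwarzian property requires. The route is essentially the paper's, organized differently. The paper also exploits the cocycle rule and Möbius invariance, but in the opposite direction: it pre-composes with the Möbius map $x\mapsto e^{ab}\tfrac{1-x}{x}$, so that $g(x)=\tfrac{1-x}{x}e^{ax}$ satisfies $Sg=Sf_{a,b}$, computes $Sg$ by direct differentiation, and is left with the inequality $a^3t^2-4a^2t+6a-12>0$ for $t=x(1-x)\in(0,1/4]$, settled by the same vertex/endpoint case split you use. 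Your decomposition $f_{a,b}=M\circ\exp\circ\phi$ with $S\exp\equiv-\tfrac12$, together with the identity $s+t=st$, replaces the paper's brute-force third derivative of $\tfrac{1-x}{x}e^{ax}$ with cleaner bookkeeping and makes it transparent why $b$ drops out; note that your final inequality $B(p)<0$ on $p\ge 4$ is literally the paper's $P_a(t)>0$ after the substitution $p=1/t$ (multiply $B(1/t)$ by $-2t^2$), so the two arguments converge to the same one-variable estimate. What each buys: the paper's version needs only one composition and a single, if messy, differentiation; yours trades that mess for the cancellation check you flagged, which indeed goes through.
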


For maps with negative Schwarzian derivative each attracting or
neutral periodic orbit has a critical point in its immediate basin of
attraction. Thus, we know that if $a>4$ then $f_{a,b}$ can have at
most two attracting or neutral periodic orbits.

\subsection{Time-average convergence to Nash equilibrium $b$}

While we know that the fixed point $b$ is often repelling, especially
for large values of $a$, we can show that it is attracting in a time-average sense.

\begin{definition}\label{d:Cesaro}
For an interval map $f$ a point $p$ is \emph{Ces\`{a}ro attracting} if there is a set $U$ such that for every $x\in U$ the averages
\[
\frac1T\sum_{n=0}^{T-1}f^n(x)
\]
converge to $p$.
\end{definition}

We can show that $b$ is globally Ces\`{a}ro attracting. Here by
``globally'' we mean that the set $U$ from the definition is the
interval $(0,1)$.

\begin{theorem}\label{t:Cesaro}
For every $a>0$, $b\in(0,1)$ and $x\in(0,1)$ we have
\begin{equation}\label{e:Cesaro}
\lim_{T\to\infty}\frac1T\sum_{n=0}^{T-1}f_{a,b}^n(x)=b.
\end{equation}
\end{theorem}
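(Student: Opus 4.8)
The plan is to linearize the dynamics by passing to the log-odds (logit) coordinate, in which the multiplicative-weights map becomes an \emph{additive} recursion, and then to telescope. Writing $x_n = f_{a,b}^n(x)$ and $u_n = \ln\frac{x_n}{1-x_n}$, a one-line computation from \eqref{1dmap} gives
\[
\frac{x_{n+1}}{1-x_{n+1}} = \frac{x_n}{1-x_n}\exp\big(-a(x_n-b)\big),
\]
so that $u_{n+1} = u_n - a(x_n - b)$. Summing this identity from $n=0$ to $T-1$ telescopes to $u_T - u_0 = -a\sum_{n=0}^{T-1}(x_n - b)$, which rearranges to the exact identity
\[
\frac1T\sum_{n=0}^{T-1}f_{a,b}^n(x) - b = \frac{u_0 - u_T}{aT}.
\]
Since $x\in(0,1)$ the initial value $u_0$ is a fixed finite constant, so the theorem reduces entirely to showing that $u_T/T \to 0$, for which it suffices to prove that the sequence $(u_n)$ is bounded. (Note $f_{a,b}$ maps $(0,1)$ into itself, so every $u_n$ is well defined and finite.)

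Boundedness is then a short ``restoring force'' estimate. The increment satisfies $u_{n+1} - u_n = -a(x_n - b)$, so it is bounded in absolute value by $K := a\max(b,\,1-b)$, and it is negative precisely when $x_n > b$ and positive when $x_n < b$. Fix $\eta\in(0,\min(b,1-b))$ and set $L_+ = \ln\frac{b+\eta}{1-b-\eta}$. Because $x_n = (1+e^{-u_n})^{-1}$ is increasing in $u_n$, the condition $u_n \ge L_+$ forces $x_n \ge b+\eta$ and hence $u_{n+1}-u_n \le -a\eta < 0$. Combining this with the uniform bound $|u_{n+1}-u_n|\le K$, a trivial induction shows $u_n \le \max(u_0,\,L_+ + K)$ for all $n$: if $u_n < L_+$ then $u_{n+1} \le u_n + K < L_+ + K$, while if $u_n \ge L_+$ then $u_{n+1} < u_n$. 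A symmetric argument with $L_- = \ln\frac{b-\eta}{1-b+\eta}$ bounds $(u_n)$ from below, so $(u_n)$ is bounded, $u_T/T\to 0$, and the conclusion \eqref{e:Cesaro} follows from the displayed identity.

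The only genuine insight here is the logit substitution converting the MWU map into $u_{n+1} = u_n - a(x_n-b)$; once it is in hand the telescoping identity is exact, and the rest is bookkeeping. The lone technical point is ruling out escape of the orbit to the endpoints $0$ or $1$ (equivalently $u_n \to \pm\infty$); I handle this by the elementary estimate above, but it could equally be deduced from the existence of an absorbing interval $[\delta,1-\delta]\subset(0,1)$, which is guaranteed by the fact (stated earlier) that $0$ and $1$ are repelling fixed points, $f'_{a,b}(0)=e^{ab}>1$ and $f'_{a,b}(1)=e^{a(1-b)}>1$. Crucially, the whole argument uses no information about whether the fixed point $b$ is attracting or repelling, so it applies verbatim across the equilibrating, periodic, and chaotic regimes alike.
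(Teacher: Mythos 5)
Your proof is correct, and its engine is the same exact identity the paper relies on: your telescoped relation $u_T-u_0=-a\sum_{n=0}^{T-1}(x_n-b)$ is precisely the paper's closed-form expression for the iterates, $x_n = x_0\big/\big(x_0+(1-x_0)\exp\big(a\sum_{k=0}^{n-1}(x_k-b)\big)\big)$, written in log-odds coordinates, so in both arguments the theorem reduces to showing that the cumulative sum $\sum_{k}(x_k-b)$ stays bounded, equivalently that the orbit does not drift to the endpoints. Where you genuinely diverge is in how that boundedness is obtained. The paper invokes a separate lemma asserting an invariant, globally absorbing interval $I_{a,b}\subset(\delta,1-\delta)$ (deduced from the fixed points $0,1$ being repelling), assumes first that $x_0\in I_{a,b}$, bounds the exponential factor between $\delta^2$ and $\delta^{-2}$ to get $\big|a\sum_{k<n}(x_k-b)\big|<2\log(1/\delta)$, and then treats general $x_0\in(0,1)$ by noting the orbit enters $I_{a,b}$ after finitely many steps. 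You instead give a self-contained ``restoring force'' induction directly on $u_n$: the increment $-a(x_n-b)$ is uniformly bounded by $K=a\max(b,1-b)$ and has the correct sign once $|x_n-b|\ge\eta$, yielding the explicit bounds $\min(u_0,L_--K)\le u_n\le\max(u_0,L_++K)$. Your route avoids the auxiliary absorbing-interval lemma entirely and handles all $x_0\in(0,1)$ in one pass (at the cost of a bound that depends on $u_0$, which is harmless here); the paper's route yields a bound independent of the initial condition once the orbit is absorbed, and the lemma it uses is recycled elsewhere in the paper (e.g.\ for the polynomial-cost extension and the regret bounds), which is presumably why the authors factored the argument that way.
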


\begin{corollary}\label{cmper}
For every periodic orbit $\{x_0,x_1,\dots,x_{T-1}\}$
of $f_{a,b}$ in $(0,1)$ its center of mass (time average)
\[
\frac{x_0+x_1+\dots+x_{T-1}}T
\]
is equal to $b$.
\end{corollary}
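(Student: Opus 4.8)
The plan is to deduce the corollary directly from Theorem~\ref{t:Cesaro}: a periodic orbit is just an orbit whose Ces\`{a}ro averages can be computed by hand, and matching this explicit value against the limit guaranteed by the theorem pins it down.

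First I would fix the point $x_0$ of the orbit. Since $x_0\in(0,1)$, Theorem~\ref{t:Cesaro} applies to the forward orbit $x_n=f_{a,b}^n(x_0)$ and guarantees that $\frac1T\sum_{n=0}^{T-1}x_n\to b$ as $T\to\infty$. Next I would exploit periodicity to evaluate the Ces\`{a}ro averages along the subsequence of times that are multiples of the period $T$. Because $f_{a,b}^T(x_0)=x_0$, the orbit is $T$-periodic, so for every integer $m\ge 1$ the first $mT$ iterates consist of $m$ identical copies of $\{x_0,x_1,\dots,x_{T-1}\}$; hence
\[
\frac1{mT}\sum_{n=0}^{mT-1}x_n=\frac{m(x_0+x_1+\dots+x_{T-1})}{mT}=\frac{x_0+x_1+\dots+x_{T-1}}{T},
\]
which is independent of $m$ and equals exactly the center of mass we wish to identify.

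Finally I would combine the two observations. Theorem~\ref{t:Cesaro} asserts that the full sequence of Ces\`{a}ro averages converges to $b$, so every subsequence converges to $b$ as well; in particular the subsequence indexed by $mT$ with $m\to\infty$ does. But that subsequence is constantly equal to the orbit's center of mass, which therefore must equal $b$. This completes the argument.

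There is essentially no obstacle here: the entire analytic content is carried by Theorem~\ref{t:Cesaro}, whose proof bounds the partial sums $a\sum_{k=0}^{n-1}(x_k-b)$ uniformly using the invariant interval $I_{a,b}$. The only thing left to verify is the elementary bookkeeping that the periodic average arises as a convergent (indeed constant) subsequence of the Ces\`{a}ro averages, which the periodicity computation above supplies directly. If one preferred a self-contained route, the same uniform bound on $\big|a\sum_{k=0}^{n-1}(x_k-b)\big|$ applied at $n=T$ on the orbit would give the result without invoking limits at all, but routing through Theorem~\ref{t:Cesaro} is the cleanest.
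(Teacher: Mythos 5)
Your proof is correct and is essentially the paper's own route: the corollary is stated there as an immediate consequence of Theorem~\ref{t:Cesaro}, and your observation that periodicity makes the Ces\`{a}ro averages along times $mT$ constantly equal to the center of mass, which must therefore coincide with the limit $b$, is exactly the intended argument. Your remark that one could instead apply the uniform bound on $\bigl|a\sum_{k=0}^{n-1}(x_k-b)\bigr|$ at $n=mT$ is a valid, slightly more self-contained variant, but it does not change the substance.
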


Applying the Birkhoff Ergodic Theorem (see Section \ref{s:entropy}), we get a stronger corollary.

\begin{corollary}\label{cmmeas}
For every probability measure $\mu$, invariant for $f_{a,b}$ and such
that $\mu(\{0,1\})=0$, we have
\[
\int_{[0,1]} x\; d\mu=b.
\]
\end{corollary}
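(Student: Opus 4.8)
The plan is to invoke the Birkhoff Ergodic Theorem with the bounded observable $\varphi(x)=x$ and to use Theorem~\ref{t:Cesaro} to identify the resulting time-average function as the constant $b$. Since the state space is the interval $[0,1]$, the identity $\varphi(x)=x$ is bounded, hence lies in $L^1(\mu)$ for any probability measure $\mu$. Birkhoff's theorem then supplies an $f_{a,b}$-invariant function $\varphi^*\in L^1(\mu)$ with
\[
\varphi^*(x)=\lim_{T\to\infty}\frac1T\sum_{n=0}^{T-1}f_{a,b}^n(x)
\]
for $\mu$-almost every $x$, together with the integral-preservation identity $\int_{[0,1]}\varphi^*\,d\mu=\int_{[0,1]}\varphi\,d\mu=\int_{[0,1]}x\,d\mu$.

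Next I would evaluate $\varphi^*$. Theorem~\ref{t:Cesaro} asserts that for \emph{every} $x\in(0,1)$ the Ces\`aro average $\frac1T\sum_{n=0}^{T-1}f_{a,b}^n(x)$ converges to $b$; that is, $\varphi^*(x)=b$ for all $x\in(0,1)$. The hypothesis $\mu(\{0,1\})=0$ gives $\mu\big((0,1)\big)=1$, so $\varphi^*=b$ holds $\mu$-almost everywhere and $\int_{[0,1]}\varphi^*\,d\mu=b$. Combining with the integral-preservation identity yields $\int_{[0,1]}x\,d\mu=\int_{[0,1]}\varphi^*\,d\mu=b$, as claimed.

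I do not expect a serious obstacle, since the hard analytic content — pointwise convergence of the time averages — is already delivered by Theorem~\ref{t:Cesaro}; Birkhoff merely transports that statement from the time average (along a single orbit) to the spatial average $\int x\,d\mu$. The only point requiring care is the role of $\mu(\{0,1\})=0$: the fixed points $0$ and $1$ lie outside the domain where Theorem~\ref{t:Cesaro} applies, and indeed the Dirac masses at $0$ and at $1$ are invariant measures for which $\int x\,d\mu$ equals $0$ and $1$ respectively, so the conclusion genuinely fails for them; this hypothesis is exactly what discards those degenerate cases.

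As an even more elementary alternative that avoids the full ergodic theorem, one may argue by dominated convergence. The $f_{a,b}$-invariance of $\mu$ gives $\int_{[0,1]} f_{a,b}^n\,d\mu=\int_{[0,1]}x\,d\mu$ for every $n$, hence $\int_{[0,1]}\big(\frac1T\sum_{n=0}^{T-1}f_{a,b}^n\big)\,d\mu=\int_{[0,1]}x\,d\mu$ for all $T$. The integrands are uniformly bounded by $1$ and, by Theorem~\ref{t:Cesaro}, converge pointwise to $b$ on the full-measure set $(0,1)$; letting $T\to\infty$ and applying the bounded convergence theorem gives $\int_{[0,1]}x\,d\mu=b$.
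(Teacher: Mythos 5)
Your main argument is exactly the paper's intended proof: the corollary is stated as a direct application of the Birkhoff Ergodic Theorem to the observable $x\mapsto x$, with Theorem~\ref{t:Cesaro} identifying the almost-everywhere limit as the constant $b$ on the full-measure set $(0,1)$, and the hypothesis $\mu(\{0,1\})=0$ serving precisely to exclude the Dirac measures at the endpoint fixed points. Your bounded-convergence variant is a correct and slightly more elementary packaging of the same ingredients (invariance of $\mu$ plus the everywhere-convergence of Theorem~\ref{t:Cesaro}), but it does not change the substance of the argument.
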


These statements show that the time average of the orbits generated by $f_{a,b}$ converges exactly to the Nash equilibrium $b$. Next, we discuss what happens as we fix $b$ and increase the total demand by letting $a$ grow large. 


\subsection{Periodic orbits and chaotic behavior}
When $b=0.5$, the coefficients of the cost functions are identical, i.e., 
$\alpha=\beta$. 

\begin{theorem}\label{trajf}
If  $0<a\le 8$ then $f_{a,0.5}$-trajectories of all points of $(0,1)$
converge to the fixed point $0.5$. If $a>8$ then $f_{a,0.5}$ has a periodic
attracting orbit $\{\sigma_a,1-\sigma_a\}$, where $0<\sigma_a<0.5$. This
orbit attracts trajectories of all points of $(0,1)$, except countably
many points, whose trajectories eventually fall into the repelling
fixed point $0.5$.
\end{theorem}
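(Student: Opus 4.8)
The plan is to exploit the reflection symmetry that appears precisely when $b=1/2$. Let $\phi(x)=1-x$. A direct computation from \eqref{1dmap} shows $f_{a,1/2}\circ\phi=\phi\circ f_{a,1/2}$, so writing $f:=f_{a,1/2}$ and $g:=\phi\circ f$ we obtain the two identities on which everything rests: since $\phi$ is an involution, $g^2=\phi f\phi f=f^2$, and since $f$ commutes with $\phi$, so does $g$. Because $\phi$ is affine it has zero Schwarzian derivative, so by the cocycle formula $S(\phi\circ f)=(S\phi\circ f)(f')^2+Sf=Sf$; hence for $a>4$ the map $g$ inherits the negative Schwarzian derivative of Proposition~\ref{nS}.

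First I would reduce the dynamics of $f$ to the much simpler dynamics of $g$. Iterating \eqref{1dmap} gives $f^2(x)=x/\bigl(x+(1-x)\exp(a(x+f(x)-1))\bigr)$, so for $x\in(0,1)$ one has $f^2(x)=x$ iff $\exp(a(x+f(x)-1))=1$, i.e. iff $f(x)=1-x$, i.e. iff $g(x)=x$. Thus the fixed points of $f^2$ in $(0,1)$ are exactly the fixed points of $g$; since $g^2=f^2$, every fixed point of $g^2$ is a fixed point of $g$, so $g$ has \emph{no} periodic orbit of period $2$. Restricting $g$ to the forward-invariant interval $I_a\subset(0,1)$ produced earlier, which may be taken $\phi$-symmetric because $b=1/2$, Sharkovsky's theorem then forces $g|_{I_a}$ to have only fixed points as periodic points, and for continuous interval maps this is known to imply that every trajectory converges to a fixed point. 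As every trajectory of $(0,1)$ enters $I_a$ in finite time, every $g$-trajectory converges to some fixed point $p$ of $g$. Translating back through $g^{2m}=f^{2m}$ yields $f^{2m}(x)\to p$ and $f^{2m+1}(x)\to f(p)=1-p$, so every $f$-trajectory converges either to the fixed point $1/2$ (if $p=1/2$) or to a symmetric period-$2$ orbit $\{p,1-p\}$ (if $p\neq 1/2$).

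It then remains to count the fixed points of $g$ and find the threshold. Solving $f(x)=1-x$ reduces, after taking logarithms, to the zeros of
\[
u(x)=2\ln\frac{x}{1-x}-a\Bigl(x-\tfrac12\Bigr),\qquad u'(x)=\frac{2}{x(1-x)}-a .
\]
Since $x(1-x)\le 1/4$ we have $u'(x)\ge 8-a$, with equality only at $x=1/2$; moreover $u$ is odd about $1/2$ with $u(1/2)=0$. Hence for $a\le 8$ the map $u$ is strictly increasing, so $1/2$ is its only zero, $g$ has the single fixed point $1/2$, and every $f$-trajectory converges to $1/2$. For $a>8$, $u'(1/2)=8-a<0$ together with $u(0^+)=-\infty$, $u(1^-)=+\infty$ and oddness produces exactly three zeros $\sigma_a<\tfrac12<1-\sigma_a$, hence a \emph{unique} period-$2$ orbit $\{\sigma_a,1-\sigma_a\}$ with $0<\sigma_a<1/2$.

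To finish I would settle stability. From $f'(1/2)=1-a/4$, the fixed point $1/2$ is attracting for $a<8$ and repelling for $a>8$. In the repelling case the only points whose $f$-trajectory tends to $1/2$ are those eventually landing on it, i.e. $\bigcup_{n\ge0}f^{-n}(1/2)$; as $f$ is bimodal each level set is finite, so this exceptional set is countable, and every other point of $(0,1)$ must converge to the unique period-$2$ orbit, which is therefore attracting (it captures a co-countable set, and by negative Schwarzian carries the critical points in its immediate basin). I expect the one genuinely delicate point to be the boundary $a=8$, where $f'(1/2)=-1$ is neutral and linearization is inconclusive: there I would use that $f$ is odd about $(1/2,1/2)$, which kills the quadratic term of $f^2$ at $1/2$ and makes the sign of the cubic coefficient — equivalently the sign of $Sf(1/2)<0$, valid since $8>4$ — decide stability, so that $1/2$ remains attracting and the first period-doubling occurs exactly at $a=8$.
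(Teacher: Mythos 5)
Your proposal is correct and follows essentially the same route as the paper's own proof: the symmetry map $g=\phi\circ f_{a,1/2}$ with $g^2=f_{a,1/2}^2$, the identification of fixed points of $f_{a,1/2}^2$ in $(0,1)$ with fixed points of $g$, the Sharkovsky argument forcing convergence of $g$-trajectories to fixed points, the count of solutions of $f_{a,1/2}(x)=1-x$ giving the threshold $a=8$, and the countable exceptional set when $1/2$ is repelling. Your only deviations are cosmetic (the logarithmic form $u$ of the fixed-point equation instead of the paper's $\gamma_a$, finiteness of preimages instead of analyticity for countability), and your closing worry about $a=8$ is unnecessary since your global argument already covers that case.
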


Now we proceed with the case when $b\neq 0.5$, that is, when the
cost functions differ. As we noticed previously in Section~\ref{s:1d}, if $a\le
4$ then $f_{a,b}$ is strictly increasing and has three fixed points: 0
and 1 repelling and $b$ attracting. Therefore in this case
trajectories of all points of $(0,1)$ converge in a monotone way to
$b$.
We know that the fixed point $b$ is repelling if and only if
$a>\frac2{b(1-b)}$.
This is a simple situation, so we turn to the case of large $a$. We
fix $b\in(0,1)\setminus\{0.5\}$ and let $a$ go to infinity. We will
show that if $a$ becomes sufficiently large (but how large, depends on
$b$), then $f_{a,b}$ is Li-Yorke chaotic and has periodic orbits of
all possible periods.

\begin{definition}[Li-Yorke chaos]  \label{LYchaos-def}
Let $(X,f)$ be a dynamical system and $(x,y)\in X\times X$.
We say that $(x,y)$ is a \emph{Li-Yorke pair} 
if
\begin{align*}
\liminf_{n\to\infty} dist (f^n(x),f^n(y))&=0,\\
\limsup_{n\to\infty} dist (f^n(x),f^n(y))&>0.
\end{align*}
A dynamical system $(X,f)$ is \emph{Li-Yorke chaotic} if there is an uncountable set $S\subset X$ (called \emph{scrambled set}) such that every pair $(x,y)$ with $x,y\in S$ and $x\neq y$ is a Li-Yorke pair.\footnote{Intuition behind this definition as well as other properties of chaotic behavior of dynamical systems are discussed in Section \ref{s:entropy}.} 
\end{definition}

The crucial ingredient of this analysis is the existence of periodic orbit of period 3.

\begin{theorem}\label{per3}
If $b\in (0,1)\setminus\{0.5\}$, then there exists $a_b$ such that if
$a>a_b$ then $f_{a,b}$ has periodic orbit of period 3.
\end{theorem}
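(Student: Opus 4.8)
The plan is to produce a period-three orbit by a covering (horseshoe-type) argument, after first using a symmetry to fix the direction of the asymmetry. First I would record the conjugacy $\phi\circ f_{a,b}\circ\phi=f_{a,1-b}$, where $\phi(x)=1-x$; a direct substitution in \eqref{1dmap} verifies it. Since $\phi$ is a homeomorphism, $f_{a,b}$ and $f_{a,1-b}$ have identical periodic structure, so it suffices to treat $b\in(1/2,1)$ and set $a_b=a_{1-b}$ for $b<1/2$.

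Next I would pin down the shape of $f_{a,b}$ for large $a$. Recall that for $a>4$ the map is bimodal with critical points $x_l<x_r=1-x_l$: increasing on $[0,x_l]$, decreasing on $[x_l,x_r]$, increasing on $[x_r,1]$. As $a\to\infty$ one has $x_l\to0$, $x_r\to1$, the peak value $f_{a,b}(x_l)\to1$, and the valley value $f_{a,b}(x_r)\to0$; intuitively $f_{a,b}$ tends to the step function equal to $1$ on $(0,b)$ and $0$ on $(b,1)$, but with an \emph{overshooting} peak near $0$. The two estimates I actually need are: (a) the peak overshoots the near-$1$ preimage of $b$, namely with $\xi\in(x_r,1)$ defined by $f_{a,b}(\xi)=b$ one has $f_{a,b}(x_l)>\xi$; and (b) $f_{a,b}(b+\delta)<x_l$ for a fixed small $\delta$. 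Writing $1-f_{a,b}(x_l)\sim a\,e^{-ab}$ and $1-\xi\sim e^{-a(1-b)}$, inequality (a) follows from $a\,e^{-ab}\ll e^{-a(1-b)}$, which is exactly where the hypothesis $b>1/2$ (so $e^{-a(2b-1)}\to0$) enters; estimate (b) is immediate since $f_{a,b}(b+\delta)\sim e^{-a\delta}\ll 1/a\sim x_l$.

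With these estimates in hand I would build three pairwise disjoint closed intervals located near $b$, near $0$, and near $1$, and chase a covering loop. Concretely, let $J_2=[b-\delta,b+\delta]$ with $\delta$ fixed small (so $J_2\subset(0,1)$ and, for large $a$, $J_2\subset[x_l,x_r]$); let $J_3=[x_l-\eta,x_l+\eta]$ straddle the left critical point; and let $J_1=[\xi^-,\xi^+]\subset(x_r,1)$ be the preimage interval with $f_{a,b}(J_1)=[b-\delta,b+\delta]=J_2$, which exists because $f_{a,b}$ is an increasing homeomorphism of $[x_r,1]$ onto $[f_{a,b}(x_r),1]\supset J_2$. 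Then $J_1\to J_2$ holds by construction; $J_2\to J_3$ holds because $f_{a,b}(J_2)=[f_{a,b}(b+\delta),f_{a,b}(b-\delta)]$ runs from below $x_l$ (estimate (b)) up to near $1$, hence contains the tiny interval $J_3$; and $J_3\to J_1$ holds because $f_{a,b}(J_3)$ tops out at the peak $f_{a,b}(x_l)$, which by estimate (a) exceeds $\xi^+$, while a suitable radius $\eta$ forces the image to drop below $\xi^-$. For large $a$ these intervals sit near $b$, $0$, and $1$ respectively, so they are pairwise disjoint.

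Finally I would invoke the standard covering lemma of one-dimensional dynamics: a cyclic chain $J_1\to J_2\to J_3\to J_1$ of compact intervals (each image containing the next) yields a point $x^*$ with $f_{a,b}^i(x^*)\in J_{i+1}$ and $f_{a,b}^3(x^*)=x^*$; since $J_1,J_2,J_3$ are pairwise disjoint, the points $x^*,f_{a,b}(x^*),f_{a,b}^2(x^*)$ are distinct, so $x^*$ has period \emph{exactly} $3$. Taking $a_b$ to be any threshold past which estimates (a)-(b) and the disjointness hold proves the theorem. I expect the main obstacle to be the asymptotics of the second step, especially establishing the overshoot $f_{a,b}(x_l)>\xi$ with the correct exponential rates, and choosing $\eta$ simultaneously small enough for disjointness and for $J_2\to J_3$ yet large enough for $J_3\to J_1$ (which is comfortable because $|f_{a,b}''(x_l)|$ grows only polynomially in $a$ while the required drop $1-\xi$ is exponentially small); the remaining covering verifications are routine applications of the intermediate value theorem.
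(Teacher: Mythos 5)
Your overall architecture (the symmetry $\phi\circ f_{a,b}\circ\phi=f_{a,1-b}$, bimodality for $a>4$, and a cyclic covering $J_1\to J_2\to J_3\to J_1$ of pairwise disjoint intervals yielding a point of exact period $3$) is a legitimate strategy, and it is genuinely different from the paper's argument, which simply picks a point $x\in(3b-1,b)$ (for $b<1/2$), uses the telescoped iteration formula to show $f_{a,b}^3(x)<x<f_{a,b}(x)$ for large $a$, and then invokes the Li--Misiurewicz--Pianigiani--Yorke criterion. However, your proposal has a genuine gap at the step that closes the loop, $f_{a,b}(J_3)\supseteq J_1$, and the quantitative reason you give for it is false.

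Concretely, the curvature at the peak is not polynomially large: a direct computation gives $f''_{a,b}(x_l)\approx -a^3e^{1-ab}$, exponentially small. Worse, for every $x$ in a window of width $x_l\approx 1/a$ around the critical point (so in particular on all of $J_3=[x_l-\eta,x_l+\eta]$ with $\eta<x_l$, as long as $x\gtrsim 1/a$) one has $1-f_{a,b}(x)\le e^{a(x-b)}/x=O\bigl(ae^{-ab}\bigr)$, whereas $1-\xi^-\asymp e^{-a(1-b)}$; since $b>1/2$, the first quantity is exponentially \emph{smaller} than the second --- the same inequality $ae^{-ab}\ll e^{-a(1-b)}$ that gives your overshoot (a) also shows that the image of such a neighbourhood of the peak stays entirely above $\xi^-$, so $f_{a,b}(J_3)\subset(\xi^-,1]$ and the covering of $J_1$ fails. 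The only points to the left of $x_r$ that map below $\xi^-$ are those with $x\lesssim e^{-a(2b-1)}$ (exponentially close to $0$) or $x\gtrsim 2b-1$ (a fixed distance to the right of $x_l$). Forcing $\min J_3\lesssim e^{-a(2b-1)}$ is possible, but then the covering $J_2\to J_3$ requires $f_{a,b}(b+\delta)\approx Ce^{-a\delta}\le\min J_3$, i.e. $\delta\gtrsim 2b-1$, which is incompatible with $b+\delta<1$ once $b\ge 2/3$; extending $J_3$ rightwards to about $2b-1+\delta'$ instead does repair the loop, but that is a different interval from the small symmetric one you propose, and the ``comfortable curvature'' justification must be abandoned. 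So either rebuild $J_3$ along those lines with the correct exponential bookkeeping, or switch to the paper's one-point criterion, which avoids the interval construction altogether.
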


By the Sharkovsky Theorem (\cite{sha}, see also~\cite{liyorke}), existence
of a periodic orbit of period 3 implies existence of periodic orbits
of all periods, and by the result of~\cite{liyorke}, it implies that the
map is Li-Yorke chaotic.
Thus, we get the following corollary:

\begin{corollary}\label{chaos}
If $b\in (0,1)\setminus\{0.5\}$, then there exists $a_b$ such that if
$a>a_b$ then $f_{a,b}$ has periodic orbits of all periods and is
Li-Yorke chaotic.
\end{corollary}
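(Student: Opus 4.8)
The plan is to obtain this corollary as an immediate consequence of Theorem~\ref{per3} together with two classical theorems on continuous self-maps of an interval, so that the real content has already been established and here only needs to be repackaged. First I would fix $b\in(0,1)\setminus\{0.5\}$ and let $a_b$ be the threshold supplied by Theorem~\ref{per3}, so that for every $a>a_b$ the map $f_{a,b}$ possesses a periodic orbit of period $3$. The one prerequisite to record is that the ambient hypotheses of the classical theorems are met: $f_{a,b}$ is continuous (indeed analytic) on the compact interval $[0,1]$ and satisfies $f_{a,b}([0,1])\subseteq[0,1]$, with the period-$3$ orbit lying in the open interval $(0,1)$. This places us squarely in the setting of one-dimensional interval dynamics where Sharkovsky's and Li--Yorke's theorems apply.

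Next I would apply the Sharkovsky Theorem~\cite{sha}. In the Sharkovsky ordering the integer $3$ is the first (largest) element, so the presence of a period-$3$ orbit forces the coexistence of periodic orbits of every period $n\in\mathbb{N}$. This yields the first conclusion, namely that $f_{a,b}$ has periodic orbits of all periods whenever $a>a_b$. For the second conclusion I would invoke the Li--Yorke theorem~\cite{liyorke} (``period three implies chaos''), which guarantees that a continuous interval map with a period-$3$ orbit admits an uncountable scrambled set $S$: every distinct pair $x,y\in S$ satisfies $\liminf_{n} \mathrm{dist}(f^n(x),f^n(y))=0$ and $\limsup_{n} \mathrm{dist}(f^n(x),f^n(y))>0$, which is exactly Li--Yorke chaos in the sense of Definition~\ref{LYchaos-def}. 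Combining the two applications gives both claims simultaneously for all $a>a_b$, completing the argument.

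The honest remark is that there is no genuine obstacle at the level of this corollary: once Theorem~\ref{per3} is in hand, the deduction is purely a matter of quoting Sharkovsky and Li--Yorke. The substantive difficulty lives entirely in Theorem~\ref{per3} --- producing a point $x$ (depending only on $b$) with $f_{a,b}^3(x)<x<f_{a,b}(x)$ for all sufficiently large $a$, exploiting the limit $f_{a,b}(x)\to 1$ as $a\to\infty$ for $x<b$, and then invoking the result of~\cite{LMPY} to extract an actual period-$3$ point from that ordering configuration. The only detail one must not overlook in the present deduction is the verification of the continuity and self-map hypotheses, so that the classical theorems genuinely apply to $f_{a,b}$ on $[0,1]$.
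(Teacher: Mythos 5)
Your proposal is correct and matches the paper's own argument: the corollary is deduced directly from Theorem~\ref{per3} by applying the Sharkovsky Theorem to get periodic orbits of all periods and the Li--Yorke theorem to get chaos. Your extra remark about checking that $f_{a,b}$ is a continuous self-map of $[0,1]$ is a reasonable bit of bookkeeping that the paper leaves implicit, but it does not change the route.
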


This result has a remarkable implication in non-atomic routing games. Recall that the parameter $a$ expresses the normalized total flow/demand; thus, Corollary~\ref{chaos} implies that when the game is asymmetric,
i.e. when an interior equilibrium flow is not the $50\%-50\%$ split, increasing the total demand of the system will inevitably lead to chaotic behavior, regardless of the form of the cost functions.\footnote{In fact we can strengthen this result, see Section \ref{s:entropy}.} In other words, the emergence of chaos at sufficiently large demands is a robust phenomenon.

\section{Analysis of time-average regret}\label{s: regret}
 In the previous section, we discussed the time average convergence to Nash equilibrium for the map $f_{a,b}$. We now employ this property to investigate the time-average regret from learning with MWU. 
 
 \begin{theorem} \label{thm: reg_var}
The limit of the time-average regret is the total demand $N$ times the
limit of the observable $(x-b)^2$ (provided this limit exists).
That is 
\begin{equation}
\label{eqn: reg_var}
\lim_{T \rightarrow \infty}\frac {R_T}{T} = N \left(\lim_{T
  \rightarrow \infty}\frac 1T \sum_{n=1}^T (x_n-b)^2\right).
\end{equation}
\end{theorem}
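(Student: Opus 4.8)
The plan is to decompose the time-average regret into the time-average cost of the algorithm minus the time-average of the best fixed action, and to evaluate each piece using the Ces\`aro convergence established in Theorem~\ref{t:Cesaro}. Throughout I work under the normalization $\alpha+\beta=1$, so $\alpha=1-b$ and $\beta=b$. Note that since the regret sum runs over $n=1,\dots,T$ while~\eqref{e:Cesaro} runs over $n=0,\dots,T-1$, the two differ by finitely many bounded terms and hence share the same Ces\`aro limit; in particular $\frac1T\sum_{n=1}^T x_n\to b$.

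First I would handle the cost term. Expanding the per-period cost gives $\alpha N x_n^2+\beta N(1-x_n)^2=N\left(x_n^2-2bx_n+b\right)$, so
\begin{equation*}
\frac1T\sum_{n=1}^T\big(\alpha N x_n^2+\beta N(1-x_n)^2\big)
=N\left(\frac1T\sum_{n=1}^T x_n^2-2b\,\frac1T\sum_{n=1}^T x_n+b\right).
\end{equation*}
The hypothesis that $\lim_{T\to\infty}\frac1T\sum_{n=1}^T(x_n-b)^2$ exists is, after expanding $(x_n-b)^2=x_n^2-2bx_n+b^2$ and using $\frac1T\sum x_n\to b$, equivalent to the existence of $L:=\lim_{T\to\infty}\frac1T\sum_{n=1}^T x_n^2$; thus the cost time-average converges to $N(L-2b^2+b)$.

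Next I would handle the benchmark $\min\big\{\sum_{n=1}^T\alpha N x_n,\ \sum_{n=1}^T\beta N(1-x_n)\big\}$, which is the crux of the argument. The key observation is that the two candidate fixed actions become asymptotically tied: dividing by $T$ and applying Theorem~\ref{t:Cesaro},
\begin{equation*}
\frac1T\sum_{n=1}^T\alpha N x_n=(1-b)N\,\frac1T\sum_{n=1}^T x_n\longrightarrow (1-b)Nb,
\qquad
\frac1T\sum_{n=1}^T\beta N(1-x_n)=bN\Big(1-\frac1T\sum_{n=1}^T x_n\Big)\longrightarrow bN(1-b).
\end{equation*}
Both limits equal $b(1-b)N$. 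Since $\min$ is continuous and both of its arguments (normalized by $T$) converge to the same value, $\frac1T\min\{\cdots\}\to b(1-b)N$. This reflects the ``zero-sum-like'' behavior highlighted in the introduction: because the time-average flow equilibrates to $b$, neither pure strategy is cheaper in hindsight, so the best-fixed-action benchmark coincides with the equilibrium cost.

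Finally I would combine the two pieces to obtain
\begin{equation*}
\lim_{T\to\infty}\frac{R_T}{T}=N(L-2b^2+b)-b(1-b)N=N\big(L-b^2\big).
\end{equation*}
On the other hand, expanding $(x_n-b)^2$ and again using $\frac1T\sum x_n\to b$ gives $N\lim_{T\to\infty}\frac1T\sum_{n=1}^T(x_n-b)^2=N(L-2b^2+b^2)=N(L-b^2)$, which matches and proves~\eqref{eqn: reg_var}. The only genuine obstacle is the treatment of the minimum; everything else is elementary algebra, and the whole argument hinges on the fact—supplied by Theorem~\ref{t:Cesaro}—that the two fixed-action costs are asymptotically equal.
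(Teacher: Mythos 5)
Your proposal is correct and follows essentially the same route as the paper: both arguments use Theorem~\ref{t:Cesaro} to show the two candidates in the best-fixed-action minimum become asymptotically equal (the paper phrases this by showing their normalized difference vanishes, you by computing each limit separately as $b(1-b)N$), and then reduce the remaining cost term to $N$ times the time-average of $(x_n-b)^2$ by the same algebraic expansion. Your extra remarks on the index offset and on the existence of the limit being equivalent to the existence of $\lim \frac1T\sum x_n^2$ are fine but do not change the substance.
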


\begin{proof}
Recall that the time-average regret is
\begin{equation}\label{e51}
\frac 1T R_T=\frac 1T\sum_{n=1}^T \big(\alpha Nx^2_n+  \beta
N(1-x_n)^2\big) - \min\left \{\frac 1T\sum_{n=1}^T \alpha N x_n, \frac
1T\sum_{n=1}^T \beta N (1-x_n) \right\}
\end{equation}
Consider
\begin{equation}\label{eqn: var_pf_step}
\frac 1T \left(\sum_{n=1}^T\alpha x_n-\sum_{n=1}^T\beta (1-x_n)\right)
=\frac 1T \sum_{n=1}^T\left[(\alpha+\beta)x_n-\beta\right]=
(\alpha+\beta)\left(\frac{1}{T} \sum_{n=1}^Tx_n
-\frac{\beta}{\alpha+\beta}\right).
\end{equation}
The quantity $\frac{\beta}{\alpha+\beta}$ is the system equilibrium
$b$. Without loss of generality we assume as mentioned earlier
$\alpha+\beta=1$. Then, $b=\beta$ and $\lim_{T \rightarrow \infty}
\left(\frac{1}{T}\sum_{n=1}^T x_n \right) = b$, by
Theorem~\ref{t:Cesaro}.

Therefore, in the limit $T \rightarrow \infty$, two terms in
$\min$-term of~\eqref{e51} coincide and we have by substituting
$\alpha= 1-\beta$ and remembering that $\beta=b$
\begin{eqnarray}
\lim_{T \rightarrow \infty}\frac {R_T}{T}&=& \lim_{T \rightarrow
  \infty}\frac NT \sum_{n=1}^T \left((1-\beta) x^2_n+ \beta (1-x_n)^2-
\beta (1-\beta) \right) \nonumber\\ &=& \lim_{T \rightarrow
  \infty}\frac NT \sum_{n=1}^T\left( x_n^2 - 2\beta x_n +\beta^2
\right)= \lim_{T \rightarrow \infty}\frac NT \sum_{n=1}^T(x_n-b)^2.
\nonumber
\end{eqnarray}
\end{proof}

Observe that if $x$ is a generic point of an ergodic invariant
probability measure $\mu$, then the time limit of the observable
$(x-b)^2$ is equal to its space average $\int_0^1(x-b)^2\;d\mu(x)$.
This quantity is the variance of the random variable identity (we will
denote this variable $X$, so $X(x)=x$) with respect to $\mu$. Typical
cases of such a measure $\mu$, for which the set of generic points has
positive Lebesgue measure, are when there exists an attracting
periodic orbit $P$ and $\mu$ is the measure equidistributed on $P$,
and when $\mu$ is an ergodic invariant probability measure absolutely
continuous with respect to the Lebesgue measure \cite{denker2006ergodic}. In analogue to the
family of quadratic interval maps \cite{lyubich2000quadratic}, we have reasons to expect that for
the Lebesgue almost every pair of parameters $(a,b)$ Lebesgue almost
every point $x\in(0,1)$ is generic for a measure of one of those two
types.

\vskip 0.2in

\noindent{\bf Upper bound for time-average regret}: 
Let $a>4$ and $b\in (0,1)$. Recall from \eqref{xlxf} that $f_{a,b}$ has two critical points 
$x_l=\frac 12\left(1-\sqrt{1-\frac 4a}\right)$ and $x_r=1-x_l=\frac 12\left(1+\sqrt{1-\frac 4a}\right)$.

Let \[y_{\min}=f_{a,b}(x_r),\;\;\;y_{\max}=f_{a,b}(x_l).\]

\begin{lemma} \label{interval}
For $a>\frac{1}{b(1-b)}$   
the interval $I=[y_{\min},y_{\max}]$ is invariant and globally absorbing on $(0,1)$.
\end{lemma}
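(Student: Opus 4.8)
The plan is to read off the geometry of the bimodal map and then reduce the invariance claim to a single ``cobweb'' sign condition. First I would record the shape of $f_{a,b}$: a direct differentiation of~\eqref{1dmap} gives
\[
f_{a,b}'(x)=\frac{\exp(a(x-b))\,(1-ax(1-x))}{\big(x+(1-x)\exp(a(x-b))\big)^2},
\]
so $f_{a,b}'$ has the sign of $1-ax(1-x)$ and therefore vanishes exactly at the critical points $x_l,x_r$ of~\eqref{xlxf}. Hence $f_{a,b}$ increases on $[0,x_l]$, decreases on $[x_l,x_r]$, and increases on $[x_r,1]$, with local maximum value $y_{\max}=f_{a,b}(x_l)$ and local minimum value $y_{\min}=f_{a,b}(x_r)$. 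The hypothesis $a>\frac{1}{b(1-b)}$ is exactly $b(1-b)>1/a$, which (since $x(1-x)>1/a$ precisely on $(x_l,x_r)$) says $x_l<b<x_r$; thus the fixed point $b$ sits on the middle, decreasing branch. Because $f_{a,b}$ is decreasing there and fixes $b$, this immediately yields $y_{\min}=f_{a,b}(x_r)<f_{a,b}(b)=b<f_{a,b}(x_l)=y_{\max}$, so $I=[y_{\min},y_{\max}]$ is a nondegenerate interval with $b$ in its interior.

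The key simplification I would exploit is the identity
\[
f_{a,b}(x)-x=\frac{x(1-x)\big(1-\exp(a(x-b))\big)}{x+(1-x)\exp(a(x-b))},
\]
whose sign equals that of $b-x$; so $f_{a,b}(x)>x$ on $(0,b)$, $f_{a,b}(x)<x$ on $(b,1)$, and $b$ is the unique fixed point in $(0,1)$. To prove invariance I would show $f_{a,b}(x)\le y_{\max}$ for every $x\in I$. Since $y_{\max}$ is the maximum of $f_{a,b}$ on $[0,x_r]$, the set where $f_{a,b}>y_{\max}$ is an interval $(x^\ast,1)$ with $x^\ast\in(x_r,1)$ determined by $f_{a,b}(x^\ast)=y_{\max}$; hence $f_{a,b}(I)\le y_{\max}$ reduces to $y_{\max}\le x^\ast$. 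If $y_{\max}\le x_r$ this is immediate since $x_r<x^\ast$; otherwise $y_{\max}\in(x_r,1)$, and because $y_{\max}>b$ the sign identity gives $f_{a,b}(y_{\max})<y_{\max}=f_{a,b}(x^\ast)$, so monotonicity of $f_{a,b}$ on $(x_r,1)$ forces $y_{\max}\le x^\ast$. The mirror-image argument (using $y_{\min}<b$, hence $f_{a,b}(y_{\min})>y_{\min}$) gives $f_{a,b}(x)\ge y_{\min}$ on $I$. Together these establish $f_{a,b}(I)\subseteq I$.

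For the absorbing property I would introduce $x_{\ast\ast}\in(0,x_l)$ with $f_{a,b}(x_{\ast\ast})=y_{\min}$, so that $f_{a,b}([x_{\ast\ast},x^\ast])=I$. Given $x\in(0,x_{\ast\ast})$, the sign identity makes the orbit strictly increasing while it remains in $(0,x_{\ast\ast})$; an increasing orbit trapped in $(0,x_{\ast\ast})$ would converge to a fixed point in $(0,x_{\ast\ast}]\subset(0,b)$, of which there is none, so the orbit must leave in finite time. Since $f_{a,b}((0,x_{\ast\ast}))=(0,y_{\min})$ and $y_{\min}<x^\ast$, the first exit value lies in $[x_{\ast\ast},x^\ast]$; symmetrically, orbits starting in $(x^\ast,1)$ decrease and enter $[x_{\ast\ast},x^\ast]$ in finite time. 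Once the orbit is in $[x_{\ast\ast},x^\ast]$ the next iterate lies in $I$, and invariance keeps it there forever, so every point of $(0,1)$ is eventually absorbed into $I$.

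I expect the main obstacle to be the invariance step, namely controlling the two possible positions of $y_{\max}$ relative to $x_r$ (and symmetrically $y_{\min}$ relative to $x_l$) without a proliferating case analysis. The clean resolution I would lean on is the cobweb identity, which supplies $f_{a,b}(y_{\max})<y_{\max}$ and $f_{a,b}(y_{\min})>y_{\min}$ for free from $y_{\min}<b<y_{\max}$, collapsing all of the geometric casework into one monotonicity comparison.
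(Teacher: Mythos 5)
Your proof is correct and takes essentially the same route as the paper's: both rest on the bimodal branch structure with $x_l<b<x_r$, the sign fact that $f_{a,b}(x)>x$ on $(0,b)$ and $f_{a,b}(x)<x$ on $(b,1)$, monotonicity on the outer branches for invariance, and the monotone-bounded-orbit/no-fixed-point contradiction for the absorbing property. The only (cosmetic) difference is that you package invariance as the preimage comparisons $y_{\max}\le x^\ast$ and $x_{\ast\ast}\le y_{\min}$, whereas the paper argues directly on the pieces $[y_{\min},x_l]$, $I\cap(x_l,x_r)$, $[x_r,y_{\max}]$ using the same cobweb-sign contradiction.
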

\begin{proof}
Fix $b\in (0,1)$. Simple calculations show that $b\in (x_l,x_r)$ if and only if $a>1/b(1-b)$.\footnote{Therefore  $b\in (x_l,x_r)$ when $x=b$ is repelling (for $a>2/b(1-b)$).} 
From now  we assume that $a>1/b(1-b)$. Recall that $b$ is a fixed point of $f_{a,b}$ so we have $b=f_{a,b}(b)\in f_{a,b}([x_l,x_r])=[y_{\min},y_{\max}]=I$. Therefore $b\in I\cap (x_l,x_r)$ and $I\cap (x_l,x_r)\neq \emptyset$. 
Because $f_{a,b}$ is decreasing between the critical points, we have  $f_{a,b}'(b)=ab^2-ab+1<0$.  The latter and the uniqueness of a fixed point in $(0,1)$ implies that  $f_{a,b}(x)>x$ for $x\in (0,b)$, and $f_{a,b}(x)<x$ for $x\in (b,1)$.

Obviously, if $x\in I\cap (x_l,x_r)$, then $f_{a,b}(x)\in f_{a,b}([x_l,x_r])=I$. For $x\in [y_{\min},x_l)$ we have $f_{a,b}(x)<f_{a,b}(x_l)=y_{\max}$. Suppose that $f_{a,b}(x)<y_{\min}$, then $f_{a,b}(x)<y_{\min}\leq x$ but it is impossible because $f_{a,b}(x)>x$ for $x\in (0,b)$. Thus $f_{a,b}([y_{\min},x_l])\subset I$. The same reasoning shows that  $f_{a,b}([x_r,y_{\max}])\subset I$. Thus $f_{a,b}(I)\subset I$.


Now let $x\in (0,x_l)$. Obviously $f_{a,b}(x)<y_{\max}$ and because $x_l<b$ 
we have $f_{a,b}(x)>x$ for $x<x_l$. To show that the orbit of $x$ falls eventually into $I$, it is sufficient to show that there exists $n$ such that $f_{a,b}^n(x)>y_{\min}$.
Suppose that there is no such $n$, that is $f_{a,b}^n(x)<y_{\min}$ for all $n$. The sequence $(f_{a,b}^n(x))_{n\geq 1}$ is increasing and bounded, so it has a limit. Denote the limit by $c \in (0,x_l)$. Then $f_{a,b}(c)$ has to be equal $c$, but this contradicts the fact that on $(0,x_l)$ there is no fixed point of $f_{a,b}$. Thus, orbits of every point from $(0,x_l)$ will eventually fall into the invariant set $I$. Similar reasoning will show that orbits of every point from $x\in (x_r,1)$ will eventually fall into $I$.

\end{proof}



Lemma \ref{interval}  implies the following bound for the variance.
\begin{remark}
For  $a>\frac{1}{b(1-b)}$    the variance of $X$ is bounded above by 
\begin{equation*}
\text{Var}(X) = \lim_{T \rightarrow \infty}\left(\frac 1T \sum_{n=1}^T (x_n-b)^2\right)\leq (y_{\max}-b)(b-y_{\min}).	
\end{equation*}

\end{remark}

To see how time-average regret changes as one increases the total demand $N$, we let as usual $\epsilon = 1 - 1/e$, so that $a = N \ln \left( \frac{1}{1-\epsilon} \right) = N$. 

\begin{theorem} For  $N>\frac{1}{b(1-b)}$   the time-average regret is bounded above by
\begin{equation}
\label{eqn: regbound}
	\lim_{T \rightarrow \infty}\frac {R_T}{T} = N \left(\lim_{T \rightarrow \infty}\frac 1T \sum_{n=1}^T (x_n-b)^2\right)\le N (y_{\max} -b)(b - y_{\min}).
\end{equation}	
\end{theorem}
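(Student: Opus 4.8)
The plan is to combine the two results already in hand. The equality asserted in \eqref{eqn: regbound} is precisely the content of Theorem~\ref{thm: reg_var}, so the only genuinely new work is the inequality, which is $N$ times the variance bound stated just after Lemma~\ref{interval}. Under the normalization $\epsilon = 1-1/e$ we have $a = N$, so the standing hypothesis $N>\frac{1}{b(1-b)}$ is exactly the hypothesis $a>\frac{1}{b(1-b)}$ required by Lemma~\ref{interval}. Thus everything reduces to showing that the limiting empirical variance of the orbit is at most $(y_{\max}-b)(b-y_{\min})$, after which one multiplies by $N$.

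First I would invoke Lemma~\ref{interval}: since $I=[y_{\min},y_{\max}]$ is invariant and globally absorbing on $(0,1)$, for any $x_0\in(0,1)$ there is an index $n_0$ with $x_n\in I$ for all $n\ge n_0$. The finitely many terms with $n<n_0$ lie in $(0,1)$ and hence are bounded, so they contribute $O(1/T)$ to every Ces\`aro average and are irrelevant in the limit. Simultaneously, Theorem~\ref{t:Cesaro} supplies the \emph{unconditional} identity $\frac1T\sum_{n=1}^T x_n\to b$, which pins down the mean of the orbit's empirical distribution.

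The key step is then an elementary Bhatia--Davis-type estimate for a quantity confined to an interval with a known mean. For every $n\ge n_0$, membership $x_n\in[y_{\min},y_{\max}]$ yields the termwise inequality $(y_{\max}-x_n)(x_n-y_{\min})\ge 0$, equivalently $x_n^2\le (y_{\max}+y_{\min})x_n-y_{\max}y_{\min}$. Averaging over $n$, discarding the finitely many pre-$n_0$ terms, and passing to $\limsup_{T\to\infty}$ while substituting the Ces\`aro limit $\frac1T\sum x_n\to b$ gives
\begin{equation*}
\limsup_{T\to\infty}\frac1T\sum_{n=1}^T x_n^2\le (y_{\max}+y_{\min})\,b-y_{\max}y_{\min}.
\end{equation*}
Since $\frac1T\sum_{n=1}^T(x_n-b)^2=\frac1T\sum x_n^2-2b\cdot\frac1T\sum x_n+b^2$ and $\frac1T\sum x_n\to b$, taking $\limsup$ and regrouping collapses the right-hand side to $(y_{\max}-b)(b-y_{\min})$. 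Multiplying through by $N$ and inserting the equality from Theorem~\ref{thm: reg_var} yields \eqref{eqn: regbound}.

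I do not expect a genuine obstacle at this stage: the substantive work lives upstream, in Lemma~\ref{interval} (confinement of the orbit to $I$) and in the unconditional convergence of Theorem~\ref{t:Cesaro} (identification of the mean as $b$). Given those, the present statement is the one-line variance estimate above. The only point requiring mild care is that $\frac1T\sum(x_n-b)^2$ need not converge; phrasing the argument with $\limsup$ sidesteps this, and because the inequality $(y_{\max}-x_n)(x_n-y_{\min})\ge 0$ holds termwise, no existence of a limit has to be assumed.
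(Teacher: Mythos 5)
Your argument is correct and is essentially the paper's own (implicit) proof: the paper derives the theorem by combining Theorem~\ref{thm: reg_var} with the variance bound stated in the remark after Lemma~\ref{interval}, which rests exactly on the absorbing interval $I=[y_{\min},y_{\max}]$ and the Ces\`aro mean $b$ from Theorem~\ref{t:Cesaro}. Your explicit Bhatia--Davis-type step $(y_{\max}-x_n)(x_n-y_{\min})\ge 0$, together with the $\limsup$ phrasing to avoid assuming the limit exists, simply fills in the elementary computation the paper leaves unstated.
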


\begin{figure}[t]
  \centering
   {\includegraphics[width = 0.9\textwidth]{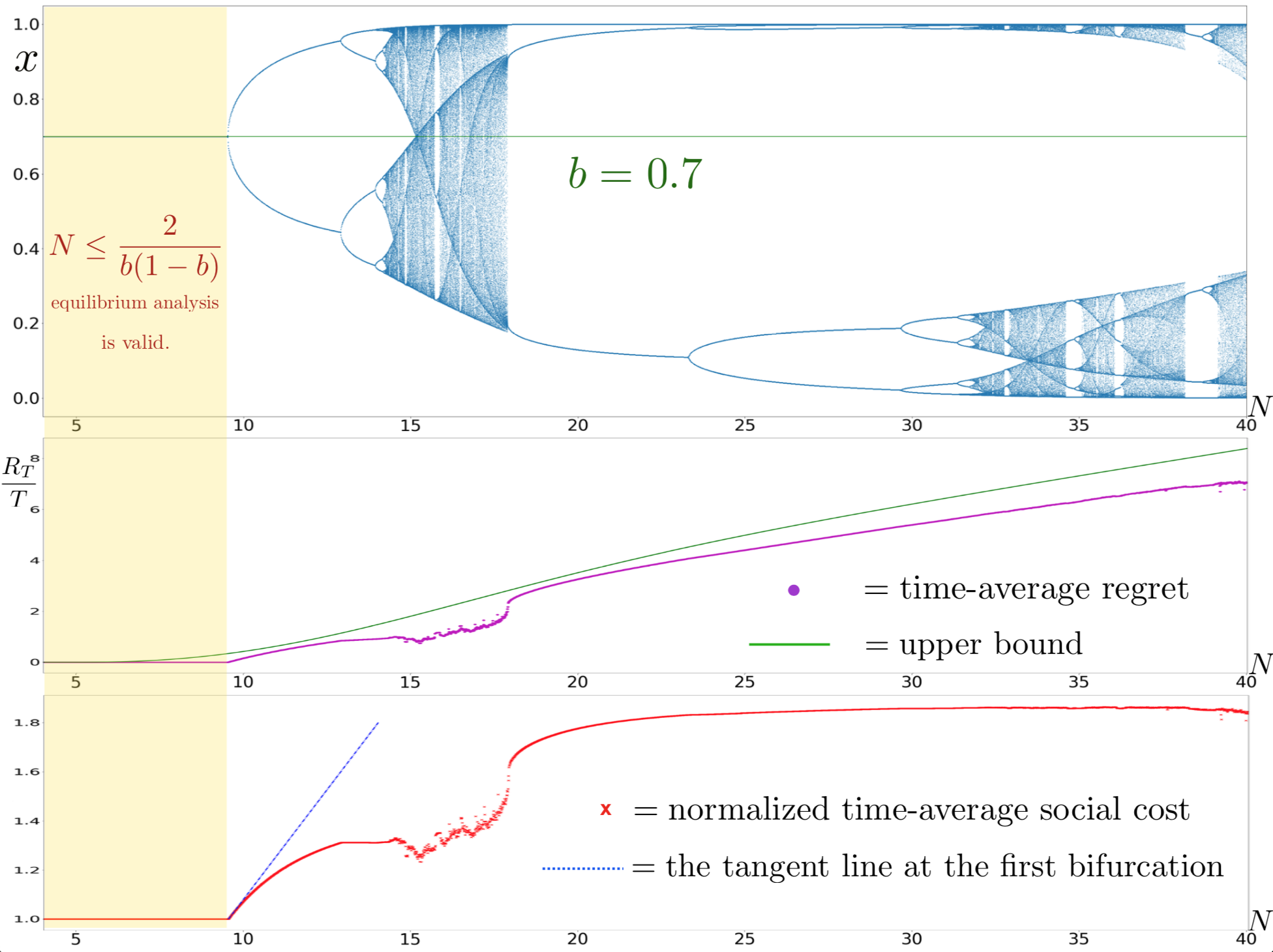}}
   \caption{ Bifurcation diagram (Top) demonstrates instability of the routing game driven by the increase in total demand $N.$  The Nash equilibrium here is set to $b = 0.7$. As usual, we fix the learning rate $\epsilon = 1 - 1/e$ so that $a = N$ for simplicity. At small $N$, the dynamics converges toward the fixed point $b$, which is the Nash equilibrium. However, as $N$ exceeds the carrying capacity of $N^*_b = 2/b(1-b)$, the Nash equilibrium becomes repelling and the dynamics no longer converge to it. The period-doubling route to chaos begins.  Remarkably, the time-average of all orbits is exactly $b$, as is evident from the green line that tracks the center of masses of the blue orbits. (Middle) The time-average regret $\frac 1T R_T$ is shown in purple. It {\it suddenly} becomes strictly positive at the first bifurcation, consistent with the prediction of (\ref{eqn: reg_var}) which states that the time-average regret is proportional to the fluctuations from the Nash equilibrium. 
  The green line shows our upper bound on the time-average regret from Equation (\ref{eqn: regbound}). (Bottom) The normalized time-average social cost (i.e., time average social cost divided by optimum) also {\it suddenly} becomes greater than $1$ at the first bifurcation, consistent with the prediction of Equation (\ref{eqn: PoA_var}). Hence, the Price of Anarchy bound (of $1$) is only a valid upper bound for the system inefficiency before the first bifurcation arises. 
  The time average cost of the non-equilibrating dynamics is proportional to its fluctuations away from the Nash equilibrium. The rate at which the normalized time-average social cost increases above unity at the first bifurcation is depicted by the tangent line (dashed blue), which is calculated from Equation (\ref{eqn: derivative_POA}).}
  \label{fig: regbound_b0p7}
\end{figure}


\section{Analysis of time-average social cost}\label{s: socialcost}

We begin this section with an extreme scenario of the time-average social cost; for a symmetric equilibrium flow ($b=0.5$), the time-average social cost can be arbitrarily close to its worst possible value! 
In contrast to the optimal social cost attained at the equilibrium $b = 0.5$, the long-time dynamics alternate between the two periodic points of the limit cycle of period two, which, at large population size, can approach $1$ or $0$ arbitrarily closely, see Figure \ref{fig: intro_summary} (top) or Figure \ref{fig: regbound_b0p5}. This means almost all users will occupy the same route, while simultaneously alternating between the two routes. The time-average social cost thus becomes worse at larger population sizes, approaching its worst possible value in the limit of infinite population.


\begin{theorem} 
\label{t:SC}
For $b=0.5$, the  time-average social cost can be arbitrarily close to its worst possible value for a sufficiently large $a$, i.e. for a sufficiently large population size\footnote{Recall from (\ref{var2}) that $a = N\ln\left( \frac{1}{1-\epsilon} \right)$.}. Formally, for any $\delta>0$, there exists an $a$ such that, for any initial condition $x_0$, except countably
many points, whose trajectories eventually fall into the fixed point $b$ we have 
$$\liminf_{T\to \infty} \frac{1}{T}\sum_{n=1}^T SC(x_n) > \max_x SC(x) -\delta$$ 
\end{theorem}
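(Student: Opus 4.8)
The plan is to leverage the essentially complete description of the symmetric dynamics provided by Theorem~\ref{trajf}: for $a>8$ every initial condition in $(0,1)$, apart from a countable exceptional set falling onto the repelling fixed point $1/2$, is attracted to the period-two orbit $\{\sigma_a,1-\sigma_a\}$ with $0<\sigma_a<1/2$. The whole proof then reduces to (i) identifying the Ces\`aro limit of $SC(x_n)$ with the social cost evaluated on this orbit, and (ii) showing that as $a\to\infty$ the orbit rushes to the boundary $\{0,1\}$ quickly enough.

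First I would record the elementary structure of the cost in the symmetric case. With $b=0.5$ and the normalization $\alpha=\beta=1/2$ we have $SC(x)=\tfrac{N^2}{2}\bigl(x^2+(1-x)^2\bigr)=N^2\bigl(x-\tfrac12\bigr)^2+\tfrac{N^2}{4}$, which is symmetric about $1/2$, so $SC(\sigma_a)=SC(1-\sigma_a)$, the optimum is $SC(\tfrac12)=N^2/4$, and $\max_x SC(x)=SC(0)=SC(1)=N^2/2$. Any trajectory attracted to the period-two orbit has its even iterates converging to one of $\sigma_a,1-\sigma_a$ and its odd iterates to the other; since $SC$ is continuous, the Ces\`aro average splits its mass equally and $\tfrac1T\sum_{n=1}^{T}SC(x_n)\to\tfrac12\bigl(SC(\sigma_a)+SC(1-\sigma_a)\bigr)=SC(\sigma_a)$. (Equivalently one may invoke the variance computation behind Theorem~\ref{thm: reg_var}.) Hence for every non-exceptional $x_0$ the limit exists and
\[
\max_x SC(x)-\lim_{T\to\infty}\tfrac1T\sum_{n=1}^{T}SC(x_n)=\tfrac{N^2}{2}-\Bigl(N^2(\sigma_a-\tfrac12)^2+\tfrac{N^2}{4}\Bigr)=N^2\sigma_a(1-\sigma_a)=a^2\sigma_a(1-\sigma_a).
\]

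Next I would pin down $\sigma_a$ and bound its decay. Being paired with $1-\sigma_a$ under $f_{a,1/2}$, it satisfies $f_{a,1/2}(\sigma_a)=1-\sigma_a$, which upon clearing denominators collapses to $\sigma_a^2=(1-\sigma_a)^2\exp\!\bigl(a(\sigma_a-\tfrac12)\bigr)$, i.e. $\ln\frac{\sigma_a}{1-\sigma_a}=\frac{a}{2}\bigl(\sigma_a-\tfrac12\bigr)$. To show $\sigma_a\to0$, fix $\eta\in(0,\tfrac12)$ and consider $\gamma_a(x)=x-(1-x)\exp\!\bigl(\tfrac a2(x-\tfrac12)\bigr)$, whose unique zero in $[0,\tfrac12)$ (for $a\ge8$) is $\sigma_a$; since $\gamma_a(0)<0$ while $\gamma_a(\eta)\to\eta>0$ as $a\to\infty$, the zero lies in $(0,\eta)$ for all large $a$, giving $\sigma_a\to0$. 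Feeding this back, once $a$ is large enough that $\sigma_a<\tfrac14$ we have $\ln\sigma_a<\ln\frac{\sigma_a}{1-\sigma_a}=\frac a2(\sigma_a-\tfrac12)<-\frac a8$, hence $\sigma_a<e^{-a/8}$. Therefore $a^2\sigma_a(1-\sigma_a)\le a^2 e^{-a/8}\to0$, and given $\delta>0$ it suffices to take $a>8$ with $a^2e^{-a/8}<\delta$ to conclude $\lim_{T}\tfrac1T\sum SC(x_n)=SC(\sigma_a)>\max_x SC(x)-\delta$, which in particular bounds the $\liminf$ from below as claimed.

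I expect the genuine difficulty to be the quantitative part of the last step rather than the soft convergence to the orbit. The qualitative statement $\sigma_a\to0$ is intuitively obvious and easy, but it is not enough: the target $\max_x SC$ itself blows up like $N^2=a^2$, so the relevant gap is the product $a^2\sigma_a(1-\sigma_a)$. The crux is therefore the exponential estimate $\sigma_a\lesssim e^{-a/8}$ (equivalently $a^2\sigma_a\to0$), extracted from the transcendental relation $\ln\frac{\sigma_a}{1-\sigma_a}=\frac a2(\sigma_a-\tfrac12)$; one must confirm that the period-two points collapse onto the boundary faster than the social-cost scale grows.
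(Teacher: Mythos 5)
Your proof is correct, and its skeleton is the same as the paper's: invoke Theorem~\ref{trajf} to reduce everything to the attracting period-two orbit $\{\sigma_a,1-\sigma_a\}$ (with the countable exceptional set falling into $1/2$), pass from the trajectory average of $SC$ to its value on the orbit, translate the claim into an estimate on $\sigma_a$, and locate $\sigma_a$ through the relation $f_{a,1/2}(\sigma_a)=1-\sigma_a$, i.e. $\bigl(\tfrac{\sigma_a}{1-\sigma_a}\bigr)^2=\exp\bigl(a(\sigma_a-\tfrac12)\bigr)$, combined with an intermediate-value argument. Where you go beyond the paper is precisely the step you flag as the crux. The paper's printed proof stops at showing that for any $\delta>0$ one can make $\sigma_a<\delta$, i.e. the soft fact $\sigma_a\to0$; since $\max_x SC(x)=\alpha N^2$ itself grows like $a^2$, that alone directly yields only the multiplicative statement (the ratio of the time-average social cost to $\max_x SC$ tends to $1$), whereas the theorem is phrased additively, with the gap along the orbit equal to $2\alpha N^2\sigma_a(1-\sigma_a)$. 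Your exponential bound $\sigma_a<e^{-a/8}$, read off from $\ln\tfrac{\sigma_a}{1-\sigma_a}=\tfrac a2\bigl(\sigma_a-\tfrac12\bigr)$ once $\sigma_a<\tfrac14$, is exactly what is needed to beat the $a^2$ prefactor, and the bookkeeping $a^2\sigma_a(1-\sigma_a)\le a^2e^{-a/8}\to0$ is right; so your write-up closes the quantitative looseness in the paper's final sentence at the cost of one elementary computation. The remaining ingredients --- even/odd iterates converging to the two periodic points plus continuity of $SC$ to identify the Ces\`aro limit with $SC(\sigma_a)$, and the uniqueness of the period-two orbit from Theorem~\ref{trajf} guaranteeing that the zero produced by the intermediate value theorem really is $\sigma_a$ --- are used in the same way as in the paper.
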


\begin{proof}
For a symmetric equilibrium $b=0.5$, the two cost functions increase with the loads at the same rate $\alpha=\beta$. 
Recall from Section \ref{sec: regretPoASC} that the social cost when  fraction $x$ of the population adopts the first strategy is $SC(x)= \alpha N^2 x^2 + \alpha N^2 (1-x)^2$. This strictly convex function attains its minimum at the equilibrium $x=b=0.5$, and its maximum of $\alpha N^2$ at $x=0$ or $x=1$. By Theorem \ref{trajf} we know that for $a>8$ 
there exists a periodic attracting orbit $\{\sigma_a,1-\sigma_a\}$, where $0<\sigma_a<0.5$. This
orbit attracts trajectories of all points of $(0,1)$, except countably many points, whose trajectories eventually fall into the repelling fixed point $0.5$.  
To establish that for all trajectories attracted by the orbit  $\{\sigma_a,1-\sigma_a\}$ the time-average limit of the social cost can become arbitrarily close to $\alpha N^2$, it suffices to show that the distance of the two periodic points (of the unique attracting period-$2$ limit cycle) to the nearest boundary goes to zero as $a \rightarrow \infty$.  
Thus, it suffices to show that given any $\delta>0$, there exist an $a$ such that $\sigma_a<\delta$.

For brevity, we denote the map $f_{a,0.5}$ by $f_a$. Then, since $\sigma_a$ is a periodic point of a limit cycle of period 2, we have $f^2_a(\sigma_a) = \sigma_a$. The last equality implies $f_{a}(\sigma_a)=1-\sigma_a$, which, after simple calculations, implies 
\[ \left(\frac{\sigma_a}{1- \sigma_a}\right)^2 = \exp[a(\sigma_a-0.5)].\]

Consider the function $\phi(x)= \left(\frac{x}{1- x}\right)^2 - \exp[a(x-0.5)]$, then $\phi(0)= - \exp[-0.5a]<0$.
On the other hand, for any $\delta \in (0,0.5)$, $\phi(\delta)= \big(\frac{\delta}{1- \delta}\big)^2 - \exp[a(\delta-0.5)]> \frac12  \big(\frac{\delta}{1- \delta}\big)^2>0 $ for a sufficiently large $a >0.$ The intermediate value theorem implies $\sigma_a \in (0, \delta),$  and the theorem follows.
\end{proof}

More generally, when the equilibrium flow is asymmetric ($b \neq 0.5$), we can relate the normalized time-average social cost to the non-equilibrium fluctuations from the equilibrium flow. From (\ref{eqn: PoA_var}), we obtain
\begin{equation}
\label{eqn: time-avg-sc-var}	
\text{normalized time-average social cost} =  \frac{ \frac1T\sum_{n=1}^T \big( x_n^2 -2\beta x_n +\beta \big)}{\beta (1-\beta)}  = 1 + \frac{\text{Var}(X)}{\beta(1-\beta)}. 
\end{equation}
If the dynamics converges to the fixed point $b$, the variance vanishes and the normalized time-average social cost coincides with the Price of Anarchy which is 1. However, as the total demand $N$ increases, the system suddenly bifurcates at $N = N^*_b \equiv 2/b(1-b)$, which is the carrying capacity of the network. Above the carrying capacity $N^*_b$, the system is non-equilibrating, and the variance becomes positive. As a result, the normalized time-average social cost becomes greater than 1. The Price of Anarchy prediction error, i.e., (normalized time-average social cost)-(Price of Anarchy)$>0$,
 depends on how fast the variance {\it suddenly} increases at the first bifurcation point, which we analyze next.

\subsection{Analysis of variance spreading at the first period-doubling bifurcation}

We study the behavior of the variance as $a$ crosses the period doubling bifurcation point. We first consider the model situation, where the map is
$g(x)=(\gamma_1-1)x+\gamma_2 x^2+\gamma_3 x^3$. Note that $\gamma_1$ and $\gamma_2$ here are not the coefficients of the cost functions. In this model situation, the bifurcation occurs at $\gamma_1=0$. If $\gamma_1>0$ then the fixed point $x=0$ is attracting, and as $\gamma_1<0$ then it is repelling, but under some conditions on the coefficients there is an attracting periodic orbit of period 2.
We are interested only at the limit behavior as $\gamma_1$ goes to zero, in a small neighborhood of $x=0$. Therefore we may ignore all powers of $x$ larger than 3 and all powers of $\gamma_1$ larger than 1.
Period 2 points are non-zero solutions of the equation
\[
x=(\gamma_1-1)[(\gamma_1-1)x+\gamma_2 x^2+\gamma_3 x^3]+\gamma_2[(\gamma_1-1)x+\gamma_2 x^2+\gamma_3 x^3]^2+\gamma_3[(\gamma_1-1)x+\gamma_2 x^2+\gamma_3 x^3]^3.
\]
Ignoring higher order terms and dividing by $x$, we get the equation
\[
(2\gamma_2^2\gamma_1-2\gamma_2^2+4\gamma_1\gamma_3-2\gamma_3)x^2-\gamma_1\gamma_2 x-2\gamma_1=0.
\]
Its discriminant is (after ignoring higher order terms in $\gamma_1$)
\[
\Delta=-16\gamma_1(\gamma_2^2+\gamma_3),
\]
so
\[
x=\frac{\gamma_1\gamma_2\pm 4\sqrt{-\gamma_1(\gamma_2^2+\gamma_3)}}{4(\gamma_2^2\gamma_1-\gamma_2^2+2\gamma_1\gamma_3-\gamma_3)}.
\]
Assume now that $\gamma_2^2+\gamma_3>0$. This is equivalent to $Sg(0)<0$, so we will be able to apply it to our map (see Proposition \ref{nS}). If $\gamma_1$ is close to zero, in the numerator $\gamma_1$ is negligible compared to $\sqrt{\gamma_1}$, and in the denominator $\gamma_1$ is negligible compared to a constant. Thus, approximately we have
\[
x=\pm\sqrt{\frac{-\gamma_1}{\gamma_2^2+\gamma_3}}.
\]
Therefore, the variance is $\text{Var}(X) = \frac{-\gamma_1}{\gamma_2^2+\gamma_3}$.

After Taylor expanding the map of Equation (\ref{map}) around the fixed point $b$ and comparing the cofficients to those of $g$, we obtain $\gamma_1 = 2 + a b(b-1)$, $\gamma_2 = a(b-\frac 12)(1+ab(b-1))$ and $\gamma_3 = a(1+a(\frac 16 + b(b-1))(3+ab(b-1))).$
Recalling the first bifurcation occurs at $a^*_b = 2/b(1-b)$, we thus deduce the (right) derivative of the variance with respect to $a$ at the first period-doubling bifurcation: 
\begin{equation}
\frac{d \text{Var}(X)}{da} \Big |_{a = a^{*+}_b} = -\frac{d\left(\frac{\gamma_1}{\gamma_2^2+\gamma_3}\right) }{d a}  \Big |_{a = a^{*+}_b}= \frac{3 b^3(1 - b)^3 }{2 - 6b (1 - b) },  
\end{equation}
which is a unimodal function in the interval $[0,1]$ that is symmetric around $b=0.5$, at which the maximum $0.09375$ is attained.

This allows us to deduce how fast the normalized time-average social cost increases at the first bifurcation, signaling how equilibrium Price of Anarchy metric fails as we increase $a$, or equivalently increase $N$. Namely, from Equation (\ref{eqn: PoA_var}), one finds that the derivative of the normalized time-average social cost with respect to $a$ reads 
\begin{equation}
\label{eqn: derivative_POA}
	\frac{d}{da}\left(\text{normalized time-average social cost}\right) \ \Big |_{a = a^{*+}_b} = \frac{1}{b(1-b)}\frac{d \text{Var}(X)}{da} \Big |_{a = a^{*+}_b} = \frac{3 b^2(1 - b)^2 }{2 - 6b (1 - b) }.
\end{equation}  
When $a < 2/b(1-b)$, the system equilibrates and the normalized time-average social cost is unity. However, when $a$ exceeds $ 2/b(1-b)$, the system is out of equilibrium, and normalized time-average social cost {\it suddenly} increases with a finite rate, given by Equation (\ref{eqn: derivative_POA}). At the first period-doubling bifurcation, the second-derivative with respect to $a$ becomes {\it discontinuous}, akin to the second order phase transition phenomena in statistical physics.
Fig. \ref{fig: regbound_b0p7} confirms the prediction of Equation (\ref{eqn: derivative_POA}).

Likewise, as the variance becomes positive, the time-average regret also becomes non-zero. At the first period-doubling bifurcation, the time-average regret given by Equation (\ref{eqn: reg_var}) {\it suddenly} increases with $a$ at the rate (where we use our typical normalization $a=N$) 
\begin{equation}
\label{eqn: derivative_regret}
	\frac{d}{da}\left(\text{time-average regret}\right) \ \Big |_{a = a^{*+}_b} = \frac{d \left( a \text{Var}(X)\right)}{da} \Big |_{a = a^{*+}_b} = \frac{3 b^2(1 - b)^2 }{1 - 3b (1 - b) }.
\end{equation} 
Therefore, at the first period-doubling bifurcation, where the equilibrium analysis begins to breakdown, the following equality holds
\begin{equation}
	\frac{d}{da}\left(\text{time-average regret}\right) \ \Big |_{a = a^{*+}_b} = 2\frac{d}{da}\left(\text{normalized time-average social cost}\right) \ \Big |_{a = a^{*+}_b}.
\end{equation}


\section{Chaos and Birkhoff Ergodic Theorem}
\label{s:entropy}
This section familiarizes the reader with key concepts from dynamical systems necessary for this work, e.g., chaotic behavior, absolutely continuous invariant measures, topological entropy, and ergodic theorem.

It seems that there is no universally accepted definition of chaotic behavior of a dynamical system. Most definitions of chaos concern one of the following aspects:
\begin{itemize}
\item complex behavior of trajectories, such as Li-Yorke chaos;
\item fast growth of the number of distinguishable orbits of length $n$, such as having positive topological entropy;
\item existence of absolutely continuous invariant measures;
\item sensitive dependence on initial conditions, such as Devaney or Auslander-Yorke chaos;
\item recurrence properties, such as transitivity or mixing. 
\end{itemize}

In this article, the first three are crucial. 
Also, in the presence of chaos, studying precise single orbit dynamics can be intractable; we study the average behavior of trajectories instead. Thus, it is important to know whether the average converges. This is when ergodic theorems come into play. 
\subsection{Li-Yorke chaos and topological entropy}

The origin of the definition of Li-Yorke chaos (see Definition \ref{LYchaos-def}) is in the seminal Li and Yorke's article \cite{liyorke}.
Intuitively orbits of two points from the scrambled set have to gather themselves arbitrarily close and  spring aside infinitely many times but (if $X$ is compact) it cannot happen simultaneously for each pair of points. 
 Why should a system with this property be chaotic? 
Obviously existence of the scrambled set implies that orbits of points behave in unpredictable, complex way.
More arguments come from the theory of interval transformations, in view of which it was introduced. For such maps the existence of one Li-Yorke pair implies the existence of an uncountable scrambled set \cite{KuchtaSmital} and it is not very far from implying all other properties that have been called chaotic in this context, see e.g. \cite{Ruette}. In general, Li-Yorke chaos has been proved to be a necessary condition for many other 'chaotic' properties to hold. A nice survey of properties of Li-Yorke chaotic systems can be found in \cite{BHS}.


A crucial feature of the chaotic behavior of a dynamical system is exponential growth of the number of distinguishable orbits. This happens if and only if the topological entropy of the system is positive. In fact positivity of topological entropy turned out to be an essential criterion of chaos \cite{GW93}.
This choice comes from the fact that the future of a deterministic (zero entropy) dynamical system can be predicted if its past is known (see \cite[Chapter 7]{Weiss}) and  positive entropy is related to randomness and chaos.

For every dynamical system over a compact phase space, we can define a number $h(f)\in[0,\infty]$ called the topological entropy of transformation $f$. This quantity was first introduced by Adler, Konheim and McAndrew \cite{AKM} as the topological counterpart of metric (and Shannon) entropy.

For a given positive integer $n$ we define the $n$-th Bowen-Dinaburg metric on $X$, $\rho_n^f$ as \[\rho_n^f(x,y)=\max_{0\leq i<n} dist (f^i(x),f^i(y)).\]

We say that the set $E$ is $(n,\varepsilon)$-separated if $\rho_n^f(x,y)>\varepsilon$ for any distinct $x,y\in E$ and by $s(n,\varepsilon,f)$ we denote the cardinality of the most numerous $(n,\varepsilon)$-separated set for $(X,f)$.

\begin{definition}
The  topological entropy of $f$ is defined as \[h(f)=\lim_{\varepsilon\searrow 0}\limsup_{n\to\infty}\frac 1n \log s(n,\varepsilon,f).\]
\end{definition} 

We begin with the intuitive explanation of the idea. 
Let us assume that we observe the dynamical system with the precision $\varepsilon>0$, that is, we can distinguish any two points, only if they are apart by at least $\varepsilon$. Then, we are able to see only as many points as is the cardinality of the biggest $(1,\varepsilon)$-separated set. Therefore, after $n$ iterations we will see at most $s(n,\varepsilon,f)$ different orbits. If transformation $f$ is mixing points, then $s(n,\varepsilon,f)$ will grow. Taking upper limit over $n$ will give us the exponential ratio of asymptotic growth of number of (distinguishable) orbits,  and going with $\varepsilon$ to zero will give us the quantity which can be treated as a measure of exponential speed, with which the number of orbits grow (with $n$).

Both positive topological entropy and Li-Yorke chaos are local properties; in fact, entropy depends only on a specific subset of the phase space and is concentrated on the set of so-called nonwandering points \cite{Bow70}.  The question whether positive topological entropy implies
Li-Yorke chaos 
remained open for some time, but eventually it was shown to be true; see \cite{BGKM}. 
On the other hand, there are Li-Yorke chaotic interval maps with zero topological entropy (as was shown independently by Sm\'{\i}tal \cite{Smital} and Xiong \cite{Xiong}). 
For deeper discussion of these matters we refer the reader to the excellent surveys
by Blanchard \cite{Blanchard}, Glasner and Ye \cite{GY}, Li and Ye \cite{LY14} and Ruette's book \cite{Ruette}.


{\bf Entropy of $f_{a,b}$ is positive}: After this discussion we can show how entropy behaves for $f_{a,b}$. For any interval map, we have the following:

\begin{theorem}[\cite{Mis}]\label{pertop}
For an interval map $f$, the following assertions are equivalent:
\begin{itemize}
\item[i)] $f$ has a periodic point whose period is not a power of 2,
\item[ii)] the topological entropy of $f$ is positive.
\end{itemize}
\end{theorem}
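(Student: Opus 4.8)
The plan is to prove the two implications separately, letting the arithmetic relating periods of $f$ to periods of its iterates handle most of the bookkeeping. The two analytic engines are the following. \emph{Tool (a):} a \emph{horseshoe} for a continuous interval map $g$ --- a pair of closed intervals $K_0,K_1$ with disjoint interiors satisfying $g(K_0)\cap g(K_1)\supseteq K_0\cup K_1$ --- forces $h(g)\ge\log 2>0$, since by the standard covering (loop) lemma one realizes every admissible itinerary in the full one-sided $2$-shift using only the relation $g(K)\supseteq L$. \emph{Tool (b):} for $g=f^{N}$ the $g$-period of a point $x$ of exact $f$-period $p$ equals $p/\gcd(p,N)$, and topological entropy obeys $h(f^{N})=N\,h(f)$.

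For (i)$\Rightarrow$(ii), I would write the given period as $2^{n}q$ with $q\ge 3$ odd and pass to $g=f^{2^{n}}$. Since $\gcd(2^{n}q,2^{n})=2^{n}$, tool (b) shows the given orbit becomes a $g$-orbit of odd period exactly $q\ge 3$. It then suffices to show that an odd period $q\ge 3$ forces $h(g)>0$. I would do this via the Markov (covering) graph of the orbit: label the basic intervals between consecutive orbit points, record the coverings $I_a\to I_b$ whenever $g(I_a)\supseteq I_b$, and exhibit inside this graph the \emph{\v{S}tefan loop structure}, whose transition matrix has Perron eigenvalue strictly larger than $1$ (for $q=3$ this loop is literally a two-interval horseshoe, recovering Tool (a)). Because $h(g)$ is bounded below by the logarithm of this eigenvalue, $h(g)>0$, and hence $h(f)=2^{-n}h(g)>0$.

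For (ii)$\Rightarrow$(i), I would argue directly from $h(f)>0$ rather than through the contrapositive. Invoke the structural theorem that positive topological entropy of an interval map forces some iterate $g=f^{N}$ to possess a horseshoe (this is the deep half, and is essentially the content of \cite{Mis}). A horseshoe lets $g$ realize every admissible itinerary, so in particular $g$ has a periodic point $x$ of exact $g$-period $3$. By tool (b) its $f$-period $p$ satisfies $3=p/\gcd(p,N)$, i.e.\ $p=3\gcd(p,N)$, which is divisible by $3$ and therefore is not a power of $2$.

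I expect the main obstacle to be exactly the ingredient imported in (ii)$\Rightarrow$(i): the implication ``positive entropy $\Rightarrow$ horseshoe for some iterate,'' which is genuinely nontrivial and constitutes the hard direction. The remaining pieces are either routine (the entropy lower bound from a horseshoe, the period arithmetic) or reduce to a finite computation (the Perron-root estimate for the \v{S}tefan cycle, whose characteristic polynomial $t^{q}-2t^{q-2}-1$ is classical and has a root exceeding $1$ for every odd $q\ge 3$); the Sharkovsky combinatorics \cite{sha,liyorke} underlie why an odd period forces that loop structure. As an alternative to the Markov-graph estimate, one can run (i)$\Rightarrow$(ii) through the Misiurewicz--Szlenk lap-number formula $h(f)=\lim_{m}\tfrac1m\log\ell(f^{m})$, showing directly that an odd period forces exponential growth of the number $\ell(f^{m})$ of monotone laps.
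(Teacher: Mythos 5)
The paper does not prove this statement at all: it is quoted verbatim as a known theorem of Misiurewicz, with \cite{Mis} as the source, so there is no internal argument to compare yours against. Judged on its own terms, your outline is the standard textbook proof and is essentially sound. The bookkeeping facts you use are correct: a point of exact $f$-period $p$ has $f^{N}$-period $p/\gcd(p,N)$, $h(f^{N})=N\,h(f)$, and a horseshoe forces $h\ge\log 2$. For (i)$\Rightarrow$(ii), passing to $g=f^{2^{n}}$ to get an odd $g$-period $q\ge 3$ and then extracting positive entropy from the Markov graph is the classical route; the one place you are slightly loose is the claim that the graph of \emph{the given} orbit exhibits the \v{S}tefan loop structure --- an arbitrary orbit of odd period need not contain that graph literally. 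The standard fix is to take a periodic orbit of \emph{minimal} odd period $>1$ (it exists by Sharkovsky's theorem once some odd period $>1$ occurs), for which \v{S}tefan's lemma says the orbit is a \v{S}tefan cycle, whose transition matrix has the Perron root you quote (largest root of $t^{q}-2t^{q-2}-1$); alternatively one invokes the Block--Guckenheimer--Misiurewicz--Young bound $h(f)\ge 2^{-d}\log\lambda_{q}$ directly. For (ii)$\Rightarrow$(i), your argument is correct but, as you acknowledge, it rests entirely on ``positive entropy implies a horseshoe for some iterate,'' which is precisely the content of the reference the paper cites --- so your proposal reduces the hard half to the cited theorem rather than supplying an independent proof of it, which is exactly the posture the paper itself takes. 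One last cosmetic point: when extracting an exact period-$3$ point from a horseshoe, handle the degenerate case where the itinerary point lands in the (at most one-point) intersection $K_{0}\cap K_{1}$; the standard semiconjugacy-to-the-$2$-shift argument disposes of this.
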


Thus, Theorem \ref{pertop} combined with Corollary \ref{chaos} strenghten the latter. 
\begin{corollary}\label{chaosentropy}
If $b\in (0,1)\setminus\{1/2\}$, then there exists $a_b$ such that if
$a>a_b$ then $f_{a,b}$ has periodic orbits of all periods, positive topological entropy and is
Li-Yorke chaotic.
\end{corollary}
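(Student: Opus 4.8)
The plan is to obtain this strengthened corollary by simply combining the two results that immediately precede it in the text, namely Corollary~\ref{chaos} and Theorem~\ref{pertop}. Corollary~\ref{chaos} already supplies two of the three claimed conclusions outright: for $b\in(0,1)\setminus\{1/2\}$ there is a threshold $a_b$ such that whenever $a>a_b$, the map $f_{a,b}$ possesses periodic orbits of all periods and is Li-Yorke chaotic. Thus the only genuinely new assertion is the positivity of the topological entropy, and I would retain the very same threshold $a_b$ throughout.

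To establish positive entropy I would invoke the characterization recorded in Theorem~\ref{pertop}: for an interval map, positive topological entropy is \emph{equivalent} to the existence of a periodic point whose period is not a power of $2$. Since Corollary~\ref{chaos} guarantees periodic orbits of all periods for $a>a_b$, in particular $f_{a,b}$ has a periodic point of period $3$. Because $3$ is not a power of $2$, the equivalence in Theorem~\ref{pertop} yields $h(f_{a,b})>0$ at once. Here the period-$3$ orbit produced in Theorem~\ref{per3} is doing all the work: via the Sharkovsky theorem it forces orbits of every period, and via the Misiurewicz criterion it forces positive entropy.

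Collecting these observations finishes the proof: for $a>a_b$ the map $f_{a,b}$ simultaneously has periodic orbits of all periods (Corollary~\ref{chaos}), positive topological entropy (Theorem~\ref{pertop} applied to the period-$3$ orbit), and is Li-Yorke chaotic (Corollary~\ref{chaos}). I do not anticipate any real technical obstacle, since the entire substantive content has been front-loaded into Theorem~\ref{per3}. The single point that warrants a moment's care is verifying that the same $a_b$ furnished by Corollary~\ref{chaos} can be used verbatim, which it can: the period-$3$ orbit invoked by Theorem~\ref{pertop} is precisely the orbit constructed in Theorem~\ref{per3} that underlies Corollary~\ref{chaos} in the first place, so no new threshold is needed.
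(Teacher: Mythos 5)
Your proposal is correct and follows exactly the paper's own argument: the paper derives Corollary~\ref{chaosentropy} by combining Corollary~\ref{chaos} with Theorem~\ref{pertop}, the period-$3$ orbit from Theorem~\ref{per3} (not a power of $2$) forcing positive topological entropy, with the same threshold $a_b$. Nothing further is needed.
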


{\bf Calculating entropy}: In general, computing the entropy is not an easy task. However, in the context of interval maps, topological entropy can be computed quite straightforwardly --- it is equal to the exponential growth rate of the minimal number of monotone subintervals for $f^n$.

\begin{theorem}[\cite{MS}]
Let $f$ be a piecewise monotone interval map and, for all $n\geq 1$, let $c_n$ be the minimal cardinality of a monotone partition for $f^n$. Then \[h(f)=\lim_{n\to \infty}\frac 1n \log c_n =\inf_{n\geq 1} \frac 1n \log c_n.\]
\end{theorem}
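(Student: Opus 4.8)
The plan is to split the statement into two independent claims: that $\frac1n\log c_n$ converges to its infimum, and that this common value equals $h(f)$. For the first claim I would establish the submultiplicativity $c_{m+n}\le c_m c_n$: if $J$ is one of the $c_n$ maximal intervals on which $f^n$ is monotone, then $f^{m+n}|_J=f^m\circ(f^n|_J)$ fails to be monotone only where $f^n|_J$ crosses the boundary of a monotonicity interval of $f^m$, so $J$ breaks into at most $c_m$ monotone pieces; summing over the $c_n$ intervals $J$ gives $c_{m+n}\le c_m c_n$. Hence $n\mapsto\log c_n$ is subadditive, and Fekete's lemma yields $\lim_n\frac1n\log c_n=\inf_n\frac1n\log c_n=:s(f)$, with $\frac1n\log c_n\ge s(f)$ for every $n$. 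It then remains to prove $h(f)=s(f)$ through two inequalities, working directly from the $(n,\varepsilon)$-separated-set definition of $h(f)$.

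For the upper bound $h(f)\le s(f)$ I would show that refining the monotonicity structure of $f^n$ down to scale $\varepsilon$ costs only a subexponential factor. Fix $\varepsilon>0$, let $\mathcal A$ be the partition of the interval into the laps of $f$, and let $\mathcal E$ be a partition into finitely many, say $m=m(\varepsilon)$, intervals of length $<\varepsilon$. If two points lie in a common element of $\mathcal P_n:=\bigvee_{i=0}^{n-1}f^{-i}(\mathcal A\vee\mathcal E)$, then for every $i<n$ their images under $f^i$ land in one $\mathcal E$-interval, so they are $(n,\varepsilon)$-close; thus $s(n,\varepsilon,f)\le\#\mathcal P_n$. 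I would then bound $\#\mathcal P_n$ by counting its cut points: the preimages of the turning points of $f$ are exactly the $c_n-1$ turning points of $f^n$, while each of the $m$ endpoints of $\mathcal E$ has at most $c_i$ preimages under $f^i$ (since $f^i$ is injective on each of its $c_i$ laps), contributing at most $m\sum_{i=0}^{n-1}c_i$ further cuts. Because $\frac1i\log c_i\to s(f)$, for any $\delta>0$ one has $c_i\le e^{(s(f)+\delta)i}$ for large $i$, so $\sum_{i<n}c_i$ grows with exponential rate at most $s(f)$; the constant factor $m$ is irrelevant to the rate, and therefore $\limsup_n\frac1n\log s(n,\varepsilon,f)\le s(f)$ for every $\varepsilon$, giving $h(f)\le s(f)$.

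The lower bound $h(f)\ge s(f)$ is where the real difficulty lies, and it is the step I expect to be the main obstacle. Having $c_n$ distinct monotonicity intervals does not immediately produce $c_n$ genuinely $(n,\varepsilon)$-separated orbits, because adjacent laps of $f^n$ abut along a shared turning point, and two points on opposite sides of it may have orbits that never spread apart by more than $\varepsilon$. The strategy is to recover separation from intermediate times: a turning point $p$ of $f^n$ is ``born'' at the least $j<n$ for which $f^j(p)$ is a turning point $c$ of $f$, and then $f^{j+1}$ has a genuine local extremum at $p$ with critical value $f(c)$, so points straddling a sufficiently ``deep'' turning point become $\varepsilon$-separated already at time $j+1$. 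The heart of the argument is to show that the turning points of $f^n$ that are deep enough to force such separation themselves grow at exponential rate $s(f)$ — equivalently, to construct an $(n,\varepsilon)$-separated set whose cardinality grows like $c_n$ — and controlling this, rather than the bookkeeping in the other two steps, is the crux of the proof. Once it is in place, combining the two inequalities with $s(f)=\lim_n\frac1n\log c_n=\inf_n\frac1n\log c_n$ completes the argument.
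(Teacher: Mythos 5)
This statement is not proved in the paper at all: it is quoted as a known theorem of Misiurewicz and Szlenk (the reference [MS], see also Proposition 4.2.3 in Alsed\`a--Llibre--Misiurewicz), so your attempt has to be judged against the standard argument rather than a proof in the text. The parts you do carry out are correct and are exactly the routine parts: the lap-counting submultiplicativity $c_{m+n}\le c_m c_n$ together with Fekete's lemma gives $\lim_n\frac1n\log c_n=\inf_n\frac1n\log c_n=:s(f)$, and your upper bound $h(f)\le s(f)$ --- refine the lap partition by a partition into $m$ intervals of length $<\varepsilon$, note that points in a common atom of $\bigvee_{i<n}f^{-i}(\mathcal A\vee\mathcal E)$ are not $(n,\varepsilon)$-separated, and bound the number of atoms by counting cut points, namely the $c_n-1$ turning points of $f^n$ plus at most $c_i$ preimages under $f^i$ of each endpoint of $\mathcal E$ (this uses strict monotonicity on laps) --- is the classical computation, and the sum $\sum_{i<n}c_i$ indeed contributes nothing beyond rate $s(f)$.

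The genuine gap is the inequality $h(f)\ge s(f)$, which you yourself flag as ``the crux'' and do not prove, and the mechanism you sketch for it does not work. You propose to extract $(n,\varepsilon)$-separation from pairs of points straddling a ``deep'' turning point $p$ of $f^n$ born at time $j$, claiming they separate at time $j+1$. At a fold the opposite happens: both points are mapped by $f^{j+1}$ close to the critical value $f(c)$, and they may never separate again. The full tent map is a concrete counterexample to the heuristic: $x$ and $1-x$ straddle the turning point and satisfy $f(x)=f(1-x)$, so no straddling pair is separated at any time $\ge 1$, yet $c_n=2^n$ and $h(f)=\log 2$; the separated orbits witnessing the entropy are not located by looking across turning points. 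The actual proofs of the lower bound use a different mechanism --- for instance discarding the subexponentially many laps of $f^n$ with short image and producing separated points only inside the remaining ``essential'' laps, or passing through covers by intervals and the growth of the maximal number of monotone preimages, as in [MS] --- and none of that bookkeeping is present in your proposal. As it stands you have proved $h(f)\le\lim_n\frac1n\log c_n=\inf_n\frac1n\log c_n$ but not the equality asserted in the theorem.
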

Moreover, for piecewise monotone interval maps, the entropy computed with any
partition into intervals, on which the map is monotone, is the {\it topological
entropy} \cite[Prop. 4.2.3]{alseda2000combinatorial}.
This gives us a way to understand what positive entropy of $f_{a,b}$ means from a game-theoretic perspective. For $a>\frac{1}{b(1-b)}$ the map $f_{a,b}$ is a bimodal map with two critical points $x_l,x_r$  (defined in (\ref{xlxf})) and a (unique in $(0,1)$) equilibrium $b\in (x_l,x_r)$. Because $x$ is the probability of choosing the first strategy, we can say that if $x<x_l$ or $x>x_r$, then one of the strategies is overused and if $x$ is close to $b$, $x\in[x_l,x_r]$, then the system is approximately at equilibrium.
Now, we can take a partition  $\{[0,x_l),[x_l,x_r], (x_r,1]\}$  into three intervals on which $f_{a,b}$ is monotone. For every $x\in [0,1]$ and for every $n\geq 1$ we encode three events for the $n$-th iteration of $x$:   $\mathbf{x}[n]=A$ if the system is approximately at equilibrium, that is if $f_{a,b}^n(x)\in [x_l,x_r]$; $\mathbf{x}[n]=B$ if the second strategy is overused, that is when $f_{a,b}^n(x)\in [0,x_l)$ and $\mathbf{x}[n]=C$ if the first strategy is overused,  $f_{a,b}^n(x)\in (x_r,1]$. This way for every $x\in [0,1]$ we get an infinite sequence  $\mathbf{x}$ on the alphabet $\{A,B,C\}$.
Now, the fact that $h(f_{a,b})>0$ implies that the number of different blocks of length $n$, which we can observe looking at different $ \mathbf{x}$ we generated this way, will grow exponentially.


\subsection{Invariant measures and ergodic theorem}

We can also discuss a discrete dynamical system in terms of a measure preserving transformation defined on the probability space. This approach can handle not only purely mathematical concepts but also physical phenomena in nature. This subsection is devoted to invariant measures, absolutely continuous measures and the most fundamental idea in ergodic theory --- the Birkhoff Ergodic Theorem, which states that with probability one the average of a function along an orbit of an ergodic transformation is equal to the integral of the given function.

{\bf Definitions.} Let  $(X,\mathcal{B},\mu)$ be a probability space and $f\colon X\mapsto X$ be a measurable map. The measure $\mu$ is $f$-invariant (a map $f$ is $\mu$-invariant) if $\mu(f^{-1}E)=\mu (E)$ for every $E\in \mathcal{B}$. For $f$-invariant measure $\mu$ we say that $\mu$ is ergodic ($f$ is ergodic) if $E\in\mathcal{B}$ satisfies $f^{-1}E=E$ if and only if $\mu(E)=0$ or $1$. A measure $\mu$ is absolutely continuous with respect to Lebesgue measure if and only if for every set $E\in\mathcal{B}$ of zero Lebesgue measure $\mu(E)=0$.

We can now state ergodic theorem.
\begin{theorem}[Birkhoff Ergodic Theorem]
Let $(X,\mathcal{B},\mu)$ be a probability space. If $f$ is $\mu$-invariant and $g$ is integrable, then \[\lim_{n\to\infty}\frac 1n \sum_{k=0}^{n-1}g(f^k (x))=g^*(x)\] for some $g^*\in L^1(X,\mu)$ with $g^*(f(x))=g^*(x)$ for almost every $x$. Furthermore if $f$ is ergodic, then $g^*$ is constant and \[\lim_{n\to\infty}\frac 1n \sum_{k=0}^{n-1}g(f^k (x))=\int_Xg \;d\mu\]
for almost every $x$. 
\end{theorem}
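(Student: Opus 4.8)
The plan is to prove almost-everywhere convergence by controlling the upper and lower limits of the Birkhoff averages with a single a priori estimate — the \emph{maximal ergodic theorem} — and then to identify the limit. Write $S_n g=\sum_{k=0}^{n-1}g\circ f^k$ and $A_n g=\frac1n S_n g$, and set $\overline{g}=\limsup_{n}A_n g$ and $\underline{g}=\liminf_{n}A_n g$. The identity $A_n g(f(x))=\frac{n+1}{n}A_{n+1}g(x)-\frac1n g(x)$, together with the integrability of $g$ (so that $g$ is finite a.e.), shows after letting $n\to\infty$ that $\overline{g}$ and $\underline{g}$ are $f$-invariant; they are clearly measurable with values in $[-\infty,\infty]$. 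The whole problem then reduces to proving $\overline{g}=\underline{g}$ a.e.\ and that this common value is integrable with the correct integral.

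The main obstacle, and the technical heart of the argument, is the maximal ergodic theorem: \emph{if $h\in L^1(\mu)$ and $B=\{\sup_{n\ge 1}S_n h>0\}$, then $\int_B h\,d\mu\ge 0$}. I would prove it by Garsia's telescoping trick. Put $M_N=\max_{0\le n\le N}S_n h\ge 0$, where the value $S_0=0$ is included in the maximum. For $1\le n\le N$ one has $S_n h=h+S_{n-1}h\circ f\le h+M_N\circ f$, and taking the maximum over $1\le n\le N$ gives $\max_{1\le n\le N}S_n h\le h+M_N\circ f$. On $B_N=\{M_N>0\}$ the defining maximum is attained at some $n\ge 1$, so $M_N\le h+M_N\circ f$ there, whence $h\ge M_N-M_N\circ f$ on $B_N$. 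Integrating, and using $M_N\ge 0$, the $f$-invariance of $\mu$, and $M_N=0$ off $B_N$, yields
\[
\int_{B_N}h\,d\mu\ \ge\ \int_{X}M_N\,d\mu-\int_{X}M_N\circ f\,d\mu\ =\ 0 .
\]
Letting $N\to\infty$ and dominating by $|h|$ gives $\int_{B}h\,d\mu\ge 0$.

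With this in hand I would run the standard squeeze. For rationals $\alpha>\beta$ the set $E_{\alpha,\beta}=\{\underline{g}<\beta<\alpha<\overline{g}\}$ is $f$-invariant, so the Birkhoff sums of any function supported on it stay inside it; applying the maximal theorem to $h=(g-\alpha)\mathbf{1}_{E_{\alpha,\beta}}$ gives (because $\overline{g}>\alpha$ forces $\sup_n A_n g>\alpha$ on $E_{\alpha,\beta}$, so the maximal set is exactly $E_{\alpha,\beta}$) the bound $\int_{E_{\alpha,\beta}}g\,d\mu\ge\alpha\,\mu(E_{\alpha,\beta})$; the symmetric choice $h=(\beta-g)\mathbf{1}_{E_{\alpha,\beta}}$ gives $\int_{E_{\alpha,\beta}}g\,d\mu\le\beta\,\mu(E_{\alpha,\beta})$. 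Since $\alpha>\beta$, these force $\mu(E_{\alpha,\beta})=0$, and the countable union over rational pairs shows $\overline{g}=\underline{g}$ a.e. Hence $g^{*}:=\lim_n A_n g$ exists a.e., is $f$-invariant, and lies in $L^1$ by Fatou, since $\int|A_n g|\,d\mu\le\int|g|\,d\mu$ for every $n$ by invariance. Finally, each $g\circ f^{k}$ has the same distribution as $g$, so the convex combinations $A_n g$ are uniformly integrable; one may therefore pass to the limit in $\int A_n g\,d\mu=\int g\,d\mu$ (valid for all $n$ by invariance) to conclude $\int g^{*}\,d\mu=\int g\,d\mu$.

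For the ergodic case I would argue that any $f$-invariant measurable function is constant $\mu$-a.e.: if $g^{*}$ were non-constant, some level set $\{g^{*}>c\}$ would be an invariant set of measure strictly between $0$ and $1$, contradicting ergodicity. Thus $g^{*}$ equals its own mean, and by the integral identity just established $g^{*}=\int_{X}g\,d\mu$ a.e., which is the stated formula. I expect the only genuinely delicate point to be the maximal ergodic theorem together with the uniform-integrability step that upgrades a.e.\ convergence to the integral identity; the invariance and squeeze steps are routine once that estimate is available.
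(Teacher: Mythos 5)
Your proposal is a correct and complete proof, but note that the paper itself offers no proof to compare against: the Birkhoff Ergodic Theorem is quoted in Section \ref{s:entropy} purely as classical background (it is used there only to pass from time averages to space averages, e.g.\ for Corollary \ref{cmmeas}), so the comparison is really between your argument and the standard textbook treatments the paper implicitly defers to. What you wrote is exactly the classical route: Garsia's telescoping proof of the maximal ergodic theorem, the squeeze over rational pairs $\beta<\alpha$ on the invariant sets $E_{\alpha,\beta}$, Fatou for integrability of $g^{*}$, and a uniform-integrability (Vitali) step to get $\int g^{*}\,d\mu=\int g\,d\mu$; all of these steps check out, including the identification $B=E_{\alpha,\beta}$ in the maximal inequality, which works because the orbit of a point of the invariant set stays in it and $h$ vanishes identically off it.

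Two minor points you may wish to tighten, both routine. First, in the ergodic case $g^{*}\circ f=g^{*}$ only $\mu$-a.e., so the level set $\{g^{*}>c\}$ is invariant only up to a null set; either modify $g^{*}$ on a null set to make it strictly invariant, or invoke the standard fact that under ergodicity an a.e.-invariant set has measure $0$ or $1$. Second, the strict invariance of $E_{\alpha,\beta}$ (needed for $B=E_{\alpha,\beta}$) requires fixing a finite-valued representative of $g$ so that the pointwise identity for $A_{n}g\circ f$ yields $\overline{g}\circ f=\overline{g}$ everywhere, not just a.e.; you essentially say this when you invoke integrability, but it is worth making explicit. Alternatively, the uniform-integrability step can be bypassed by applying the maximal inequality to $(g-\alpha)\mathbf{1}_{A}$ for arbitrary invariant sets $A$ to get $\int_{A}g^{*}\,d\mu=\int_{A}g\,d\mu$ directly, but your Vitali argument is equally valid.
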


Lastly, why absolutely continuous invariant measures matter? Computer-based investigations are widely used to gain insights into the dynamics of chaotic phenomena. However, one must exercise caution in the interpretation of computer simulations. 
Often, chaotic systems exhibit multiple ergodic invariant measures \cite{gao2014summability}; it is thus important to distinguish between the measure exhibited by an actual orbit, and the measure of the orbit obtained from computer simulations, which may differ due to accumulated computational round-off errors. But if the absolutely continuous measure with respect to Lebesgue measure exists, then computer simulations will yield the measure that we expect \cite{boyarsky2012laws}. Thus, the theoretical measure and the computational measure coincide in this work. 


\section{Related Work}
\label{s:related}
The study of learning dynamics in game theory has a long history, dating back to the work of \citet{Brown1951} and \citet{Robinson1951} on fictitious play in zero-sum games, which shortly followed von Neumann's seminal work on zero-sum games \cite{Neumann1928,Neumann1944}.
A representative set of reference books are the following: \citet{Fudenberg98,Cesa06,young2004strategic,Hofbauer98,sandholm10,sergiu2013simple}.

{\bf Main precursors.} \citet{palaiopanos2017multiplicative} put forward the study of chaotic dynamics arising from Multiplicative Weights Update (MWU) learning in congestion games. They established the existence of an attracting limit cycle of period two and of Li-Yorke chaos for MWU dynamics in {\it atomic} congestion games with two agents and two links with linear cost functions. Symmetry of the game (i.e., the existence of a symmetric equilibrium where both agents select each path with probability $0.5$) results in a limit cycle of period two. They also showed for a {\it specific instance} of a game with an asymmetric equilibrium that MWU leads to Li-Yorke chaos, provided that agents adapt the strategies with a sufficiently large learning rate (step size) $\epsilon$ (equivalently, if agents use a fixed learning rate $\epsilon$ but their costs are scaled up sufficiently large).  
Shortly afterwards, \citet{CFMP} established that Li-Yorke chaos is prevalent in {\it any} two-agent {\it atomic} congestion games with two parallel links and linear cost functions, provided the equilibrium is asymmetric. Namely, in {\it any} $2\times2$ congestion game with an asymmetric equilibrium, Li-Yorke chaos emerges as the cost functions grow sufficiently large, but only if the initial condition is symmetric, i.e., both agents start with the same initial conditions. Furthermore, \cite{CFMP} established for the first time that, despite periodic or chaotic behaviors, the time-average strategies of both agents {\it always} converge {\it exactly} to the interior Nash equilibrium. While our current work leverages techniques from \cite{CFMP}, it also investigates other definitions of chaos, e.g., positive topological entropy, studies {\it non-atomic} congestion games, and relates the results to the Price of Anarchy and system efficiency analysis. Moreover, whereas in \cite{palaiopanos2017multiplicative,CFMP}  chaotic behavior is contained in a one-dimensional invariant subspace of the two dimensional space, in this paper the dimensionality of the system is already equal to one and hence the chaotic results are relevant for the whole statespace.
 Lastly, in the appendices, we provide preliminary results for learning dynamics in larger and more complex congestion games with many degrees of freedom.

{\bf Chaos in game theory.} Under the assumption of perfect rationality, it is not surprising that Nash equilibria are central concepts in game theory. However, in reality, players do not typically play a game following a Nash equilibrium strategy. The seminal work of \citet{SatoFarmer_PNAS} showed analytically by computing the Lyapunov exponents of the system that even in a simple two-player game of rock-paper-scissor, replicator dynamics (the continuous-time analogue of MWU) can lead to chaos, rendering the equilibrium strategy inaccessible. 
 For two-player games with a large number of available strategies (complicated games), \citet{GallaFarmer_PNAS2013} argue that experienced weighted attraction (EWA) learning, a behavioral economics model of learning dynamics, exhibits also chaotic behaviors in a large parameter space. The prevalence of these chaotic dynamics also persists in games with many players, as shown in the recent follow-up work \cite{GallaFarmer_ScientificReport18}. Thus, careful examinations suggest a complex behavioral landscape in many games (small or large) for which no single theoretical framework currently applies.
   \citet{VANSTRIEN2008259} and \citet{VANSTRIEN2011262} prove that fictitious play learning dynamics for a class of 3x3 games, including the Shapley's game and zero-sum dynamics, possess rich periodic and chaotic behavior. \citet{CP2019} prove that many online learning algorithms, including MWU, with a constant step size is Lyapunov chaotic when applied to zero-sum games. Finally, \citet{2017arXiv170109043P} has established experimentally that a variant of reinforcement learning, known as Experience-Weighted Attraction (EWA), leads to limit cycles and high-dimensional chaos
 in two agent games with negatively correlated payoffs. This is strongly suggestive that chaotic, non-equilibrium results can be further generalized for other variants of zero-sum games.

{\bf Other recent non-equilibrium phenomena in game theory.}
In recent years, the (algorithmic) game theory community has produced several  non-equilibrium results. 
\citet{daskalakis10} showed that MWU does not converge even in a time-average sense in a specific $3\times3$ game.
 \citet{paperics11} established non-convergent dynamics for replicator dynamics in a $2\times2\times2$ game and show as a result that the system social welfare converges to states that dominate all Nash equilibria. 
 \citet{ostrovski2013payoff} analyzed continuous-time fictitious play in a $3\times3$ game and showed similarly that the dynamics dominate in performance Nash equilibria.
Our results add a new chapter in this direction, providing detailed understanding of the non-equilibrium phenomena arising from MWU in {\it non-atomic} congestion games, as well as their important implications on regret and social costs. 
 
 In evolutionary game theory contexts, which typically study continuous-time variant of MWU (replicator dynamics), numerous non-convergence results are known but again are commonly restricted to small games \cite{sandholm10}. 
  \citet{piliouras2014optimization} and \citet{PiliourasAAMAS2014} showed that replicator dynamics in (network) zero-sum games exhibit a specific type of repetitive behavior, known as Poincar\'{e} recurrence.  
  Recently, \citet{GeorgiosSODA18} proved that Poincar\'{e} recurrence also shows up in a more general class of continuous-time dynamics known as Follow-the-Regularized-Leader (FTRL).  \citet{2017arXiv171011249M} established that the recurrence results for replicator extend to dynamically evolving zero-sum games.
  Perfectly periodic (i.e., cyclic) behavior  for replicator dynamics may arise in team competition \cite{DBLP:journals/corr/abs-1711-06879} as well as in network competition \cite{nagarajan2018three}. 
Works in this category combine  arguments such as volume preservation and the existence of constants of motions (``conservation of energy") to show cyclic or recurrent behaviors.  \citet{pangallo2019best} established empirically the emergence of cycles and more generally non-equilibrium behavior in numerous learning dynamics in games and showed correlations between their behavior and the behavior of much simpler best-response dynamics. Some formal connections between limit behaviors of complex learning dynamics and better response dynamics are developed in \cite{Entropy18}.

{\bf Game dynamics as physics.} 
Recently, \citet{BaileyAAMAS19} established a robust connection between game theory, online optimization and a ubiquitous class of systems in classical physics known as Hamiltonian dynamics, which naturally exhibit conservation laws.  
 In the case of discrete-time dynamics, such as MWU or gradient descent, the system trajectories are first order approximations of the continuous dynamics; conservation laws as well as recurrence then no longer hold. Instead  we get ``energy" increase and divergence to the boundary, as shown by \citet{BaileyEC18}, as well as volume expansion and Lyapunov chaos in zero-sum games, as shown by \citet{CP2019}. Despite this divergent, chaotic behavior, gradient descent with fixed step size, has vanishing regret in zero-sum games \cite{2019arXiv190504532B}. 
  So far, it is not clear to what extent the connections with Hamiltonian dynamics can be generalized; however, \citet{ostrovski2011piecewise} have considered a class of piecewise affine Hamiltonian vector fields whose orbits are piecewise straight lines and developed the connections with best-reply dynamics.  The connection between game theory and physics can hopefully enable us to understand and possibly exploit the hidden structure in non-equilibrium game dynamics, similarly to how in this paper we showed formally that chaotic dynamics have their time-average converging to the value equal to the equilibrium.


{\bf Game dynamics as dynamical systems.}
Finally, \citet{Entropy18,papadimitriou2019game} put forward a program for linking game theory to topology of dynamical systems, specifically to Conley's fundamental theorem of dynamical systems \cite{conley1978isolated}. 
 This approach shifts attention from Nash equilibria to a more general notion of recurrence, called chain recurrence. This notion generalizes both periodicity and Poincar\'{e} recurrence and as such can express the above results in a single framework. Whether and to what extent this framework will become useful depends on numerous factors, including the possibility of successfully incorporating it into a computational, experimental framework (see \cite{omidshafiei2019alpha} for a current approach). Note that our paper also attempts to form a bridge to the dynamical systems literature, especially to the richly developed theory of interval maps as well as to ergodic theory.


\section{Discussion on fixed versus shrinking step sizes and regret}
\label{s:discussion}
{\bf What about shrinking step sizes and vanishing regret?}
In this paper, we examine MWU with a fixed step size $\epsilon$ that has  non-vanishing regret. Can our results be disregarded if the agents leverage shrinking step sizes (that depend on the length of the game history), e.g. $\epsilon=1/\sqrt{T}$, which results in a vanishing regret of $O(1/\sqrt{T})$? Is applying shrinking step sizes  a quick and painless fix? \textit{The answer is no. The reason lies inside the big O notation.}

{\bf $O(1/\sqrt{T})$-regret: What is inside the O?}
When agents implement MWU with shrinking step-size $\epsilon(T)=1/\sqrt{T}$  and the cost in each time step is \( c_n : \mathcal{A} \rightarrow \mathbb{R} \), with \( c_n(s) \in [0,M] \), then its regret is  
\[
 \underbrace{\sum_{t=1}^T \E_{a_n \sim x_n} c_n(a_n)}_{\text{MWU with $1/\sqrt{T}$ step size}} < \underbrace{\min_{a \in A} \sum_{n=1}^T c_n(a)}_{\text{best fixed action}} + (M+1) \sqrt{T \log(|\mathcal{A}|)}
 \]
where $|\mathcal{A}|$ is the number of strategies available to the agents (see also  \cite{Cesa06} [Sections 2.6, 2.8, Remark 2.2] for the discussion on why the term \( O(M\sqrt{T\log(|\mathcal{A}|)}) \) cannot be further improved in general optimization settings). Hence, the time-average regret is $\frac{(M+1) \sqrt{\log(|\mathcal{A}|)}}{\sqrt{T}}$, which vanishes as $T\rightarrow \infty$. However, for large enough $M$ the amount of time $T$ for the regret to become negligibly small can be impractically large.
 In the case of games, due to the stability of the online payoff streams one can prove stronger regret bounds \cite{Syrgkanis:2015:FCR:2969442.2969573,foster2016learning} including $\Theta(1/T)$ \cite{GeorgiosSODA18}  for all (continuous-time) Follow-the-Regularized-Leader (FTRL) dynamics, which encompasses MWU. However, these bounds imply that in order to reach a state of small regret $\epsilon$, we still require a number of steps that is polynomial in $M/\epsilon$, where $M$ is the largest possible cost value in our game.

What is the value of $M$ in our setting of congestion games? It is the worst  possible cost $M=N\max\{ \alpha, \beta\}$.   So, for a large population size $N$, even for MWU with shrinking $\epsilon$, the wait until the regret is negligibly small can be impractically long. For any meaningful time horizon, the regret of the agents can still be so large that the $(\lambda,\mu)$-robustness type of results \cite{Roughgarden09} cannot be applied. A new theoretical framework to study these long transient periods with large regrets is needed. 

{\bf The cost normalization ``trick" only masks the problem of slow convergence.} In any game, including congestion games with many agents, one can normalize the costs so that they lie in $[0,1]$ instead of $[0,M]$ and indeed this is the standard practice in the Price of Anarchy literature when analyzing no-regret dynamics. In this case, the regret term appears more innocuous: 

\[
 \underbrace{\sum_{t=1}^T \E_{a_n \sim x_n} c_n(a_n)}_{\text{MWU with $1/\sqrt{T}$ step size}} < \underbrace{\min_{a \in A} \sum_{n=1}^T c_n(a)}_{\text{best fixed action}}\ \ \ \ \ + \underbrace{2\sqrt{T \log(|\mathcal{A}|)}}_{\text{regret in the unit of worst possible cost}}
 \]

Of course, this does not fix the problem as the regret term only looks smaller but in fact it is expressed in really large units. This cost normalization suggests that if e.g. in NYC all drivers use the same road, then the cost experienced by them is equal to $1$. 
 Naturally this nightmarish scenario, if it could ever be enforced, would translate into a monstrous traffic jam that would require hundreds or thousands of hours to resolve. So, the real regret in this setting is $2\sqrt{T \log(|\mathcal{A}|)}\times$(the duration of the worst traffic jam possible in a city of millions of people). This is an enormous number and can only be amortized by running the system for e.g. hundreds of years in the case of NYC. The usage of measuring units that scale up with the problem that is being measured (like a measuring tape that keeps adjusting the notion of a meter) creates quite a bit of confusion about how efficient are the states reachable within a reasonable time horizon. In this work, we show that the effective scaling up of the cost functions due to the increase in total demand/population can lead to a qualitatively different behavior (chaos instead of equilibration) hence these effects cannot be safely discounted by normalizing costs but instead need to be carefully studied.

{\bf The slow convergence of no-regret algorithms to correlated equilibria is well supported by theoretical work in the area.}
No-regret algorithms are rather inefficient in finding coarse correlated equilibria in large games, exactly because their regrets vanish at a slow rate. This is why different centralized ellipsoid methods algorithms have been developed to compute correlated equilibria in games with many agents that necessitate compact descriptions \cite{Papadimitriou:2008:CCE:1379759.1379762,jiang2011polynomial}.  Quoting from \cite{Papadimitriou:2008:CCE:1379759.1379762}, 

\begin{quote} 
``(No-regret) learning methods require an exponential number of iterations to
converge, proportional to $(1/\epsilon)^k$ for a constant $k > 1$. Our ellipsoid-based algorithm, on the other hand is polynomial in  $\log (\frac{1}{\epsilon})$.'' \dots ``Intractability of an equilibrium concept would make it implausible as a model of behavior. In the words of Kamal Jain: ``If your PC cannot find it, then neither can the market.""
\end{quote}


Our analysis explains the algorithmic behavior along this  {\it long} transient  (metastable) epoch, showing that the behavior can be rather different and much more inefficient than the standard asymptotic equilibrium analysis suggests. 
 In fact, computational experiments on real-life congestion games such as in wireless networks appear to be in agreement with our theoretical analysis of 
 slow/non-stabilization and bad performance of MWU and variants, which we now discuss.

{\bf MWU and its variants fail to converge efficiently in real-world applications of congestion settings.}  \citet{appavoo2018shrewd} studies, via simulations, resource selection problems for mobile networks, which are formulated as congestion games. They study the performance of MWU and  EXP3, a well known multi-armed bandit variant of MWU, and show that even in relatively small instances (20 devices/agents and 3 networks/resources), these algorithms fail to equilibrate, and in fact  perform worse than naive greedy solutions.  \cite{2019arXiv190107768M}  iterates more clearly on these failures, identifying the slow rate of stabilization as the main culprit behind the poor performance. 

\begin{quote}
``\dots we consider the problem of wireless network selection. Mobile devices are often required to choose the right network to associate with for optimal performance, which is non-trivial. The excellent theoretical properties of EXP3, a leading multi-armed bandit algorithm, suggest that it should work well for this type of problem. Yet, it performs poorly in practice. A major limitation is its slow rate of stabilization."
\end{quote}
 
 {\bf Is a step-size of $2^{-100}$ realistic as a model of human behavior?}
A continuously shrinking learning step size is a rather artificial model of human learning behavior; a sufficiently small step size means learning almost does not happen. More plausibly, to exclude the situation with unrealistic no-learning behavior, a lower bound on the learning rate $\epsilon$ should be imposed.  It is this exact parameter that we shall adopt as the fixed learning rate of our MWU model. The fact that we are using a fixed (but arbitrarily small) instead of continuously shrinking step size with a limit value of zero, is a feature of our model that does not artificially curtail agent adaptivity merely to enable theoretical results that only become binding after unreasonably long time horizons.

{\bf Shrinking step-sizes versus increasing populations.} Last but not least, we now show how the analysis for a fixed learning rate $\epsilon$ can easily be extended to capture non-equilibrating phenomena for arbitrary sequences of shrinking step sizes, as long as we allow for a dynamically evolving, increasing population. It should be already clear that the step-size $\epsilon$ and the population size $N$ (or equivalently the value of the maximum cost $M$) are competing forces that control system's stability. The larger population size implies the larger maximum cost $M$, which in turn implies the larger time horizon for MWU with a shrinking step-size algorithm to acquire smaller time-average regret, and for the classic equilibrium, Price of Anarchy, analysis to restore its predictive power. Unfortunately, if the population increases at a sufficiently fast rate to counter the shrinking step-size rate, the time-average regret will never vanish. Specifically, from our analysis, we proved that the relevant parameter that controls the long-time dynamics (e.g. equilibration, limit cycles, or chaos) and the social cost is 
$a=(\alpha+\beta) N \ln\left(\frac1{1-\eps}\right)$, see Section \ref{var}. As long as at every time step $n$,  $a(n)=(\alpha+\beta) N(n) \ln\left(\frac1{1-\eps(n)}\right)$ is greater than the chaotic threshold then the system will always remain in the chaotic regime despite the step-size going to zero. For example, for $\eps(n)=1/\sqrt{n}$, it suffices that $N\geq \frac{a_b}{(\alpha+\beta) \ln\left(\frac1{1-1/\sqrt{n}}\right)}$ where $a_b$ is the chaotic threshold defined in Theorem \ref{per3}.
Simple calculations show that it suffices $N\geq \frac{a_b}{(\alpha+\beta)}  \sqrt{n} \geq \frac{a_b}{(\alpha+\beta) \ln\left(\frac1{1-1/\sqrt{n}}\right)} $, 
 Namely, a slowly (sublinearly) increasing population suffices  for the system to remain forever in its non-equilibrating, inefficient,  chaotic regime.  


\section{Conclusion}
\label{s:conclusion}

We explore the Multiplicative Weight Update algorithm in two-strategy non-atomic congestion games. We find that standard game-theoretic equilibrium analysis, such as the Price of Anarchy, fails to capture extremely rich non-equilibrium phenomena of our simple model. Even when the Price of Anarchy is equal to $1$, the system can be dynamically unstable when the total demand $N$ increases. 
Every system has a carrying capacity, above which the dynamics become non-equilibrating. In fact, they become chaotic when $N$ is sufficiently large, provided the equilibrium flow is asymmetric.

 This demand-driven instability is a robust phenomenon that holds for many different setting of congestion games (many paths, atomic/non-atomic congestion games, different cost functions, etc).
    In the case of linear cost functions the dynamics we study here also exhibit a remarkable time-average property; namely, in any non-equilibrating regime driven by large total demand, the time-average flows/costs of the paths converge {\it exactly} to the Nash equilibrium value, a property reminiscent of the behavior of regret minimizing dynamics in zero-sum games.
    In the case of polynomial cost functions, the time-average costs exhibit a similar regularity in the sense that no individual path appears significantly cheaper on average.
     Interestingly, when we keep increasing the total system demand, congestion games eventually ``break down" and flip their characteristics to become more like zero-sum games.  On the other hand,
    the time-average convergence property {\it does not} guarantee small regret nor low social costs, even when the Price of Anarchy is equal to $1$. In fact, time-average regret and time-average social costs increase with fluctuations from the equilibrium value, which can be maximally large in the non-equilibrating regime. In the case of a symmetric equilibrium flow, fluctuations arise from an extreme swing; almost all users will take the same route and simultaneously alternate between the two routes, when the population size is large. The time-average social cost in this situation can thus be as high as it can get.

Our benign-looking learning in games model is full of surprises and puzzles. The dynamical system approach provides a useful framework to investigate the unusual connections between non-equilibrating dynamics, and the classic game-theoretic (equilibrium) metrics such as regret, and Price of Anarchy. 
 For instance, we show in the Appendix that, in certain non-equilibrating regimes, the system despite having a unique equilibrium may have multiple distinct attractors and hence the time-average regret and social cost depend critically on initial conditions. Also in the Appendix,
 we report other interesting observations, discuss future directions and even include extensions to numerous settings.
   Notable properties of periodic orbits, such as the coexistence of two attracting periodic orbits, and the Feigenbaum's period-doubling bifurcation route to chaos are also presented.

\section*{Acknowledgements}

Thiparat Chotibut and Georgios Piliouras acknowledge SUTD grant SRG ESD 2015 097, MOE AcRF Tier 2 Grant 2016-T2-1-170,  grant PIE-SGP-AI-2018-01 and NRF 2018 Fellowship NRF-NRFF2018-07.
Fryderyk Falniowski acknowledges the support of the
National Science Centre, Poland, grant 2016/21/D/HS4/01798 
and COST Action CA16228 ``European Network for Game Theory''.
Research of Micha{\l} Misiurewicz was partially supported by grant
number 426602 from the Simons Foundation.

\bibliographystyle{ACM-Reference-Format} 
\bibliography{ms} 


\appendix
\label{s:appendix}
\newpage
\begin{center}
{\large {\bf Appendices }}	
\end{center}

\section{Properties of attracting orbits}
\label{s:properties_orbits}

In this section, we investigate the properties of the attracting periodic orbits associated with the interval map $f_{a,b}: [0,1] \rightarrow [0,1]$
\begin{equation}
\label{eqn: 1dmap}
f_{a,b}(x) = \frac{x}{x + (1-x)\exp\left(a(x-b) \right)}.
\end{equation} 

We've argued in the main text that, when $b = 0.5$, the dynamics will converge toward the fixed point $b = 0.5$ whenever $a < 8$. And for any $a \ge 8$, the long-time dynamics will converge toward the attracting periodic orbits of period 2 located at $\{ \sigma_a, 1 - \sigma_a \}$. The bifurcation diagram is thus symmetric around $b = 0.5$ as shown in the top picture of Fig. \ref{fig: regbound_b0p5}. In this case, the time-average regret is well-approximated by its upper bound, and the normalized time-average social cost asymptotes to the maximum value of $2$.

\begin{figure}[hbt!]
  \centering
   {\includegraphics[width = 0.9\textwidth]{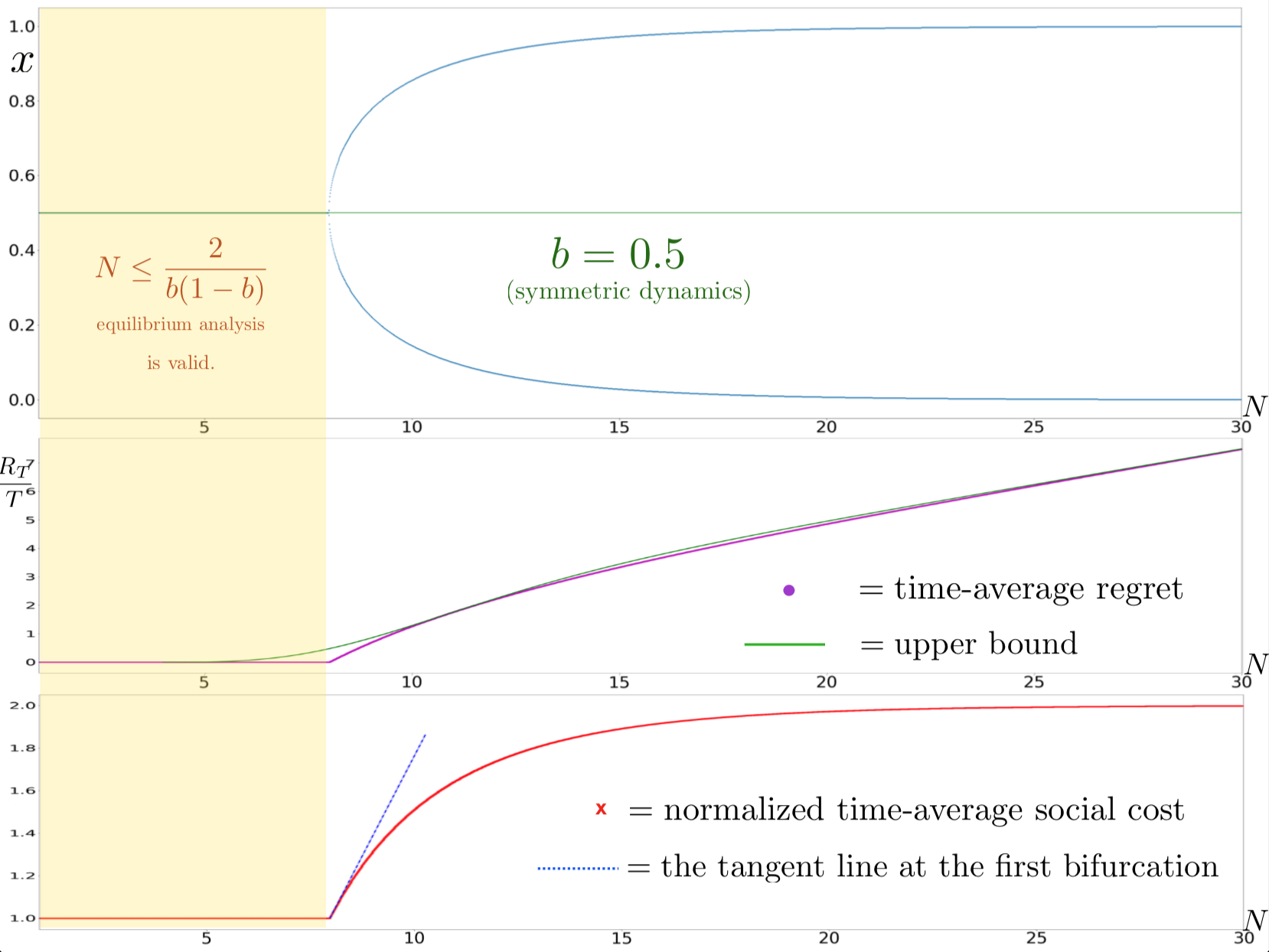}}
   \caption{ Even in the symmetric case of $b=0.5$, bifurcation diagram (Top) demonstrates instability of the routing game driven by the increase in total demand $N.$ Here, $\epsilon = 1 - 1/e$ so $a = N$ as usual. In this symmetric case, the capacity of the network, under which long-time dynamics equilibrate, is $N^*_b = 8$. Above the capacity, attracting periodic orbits of period 2 emerges.  (Middle) The time-average regret, shown in the purple circle symbol, suddenly becomes strictly positive at the bifurcation. The regret bound also well approximates the actual values. (Bottom) normalized time-average social cost also {\it suddenly} becomes greater than 1 at the bifurcation. Even in the symmetric case, the classic Price of Anarchy metric fails for $N > N^*_b = 8$.}
  \label{fig: regbound_b0p5}
\end{figure}

When $b$ differs from $0.5$, we have argued in the main text that the emergence of chaos is inevitable, provided $a$ is sufficiently large. The period-doubling bifurcations route to chaos is guaranteed to arise. Fig. \ref{fig: intro_summary} of the main text shows chaotic bifurcation diagrams when   $b = 0.7.$ In this asymmetric case, standard equilibrium analysis only applies when the fixed point $b$ is stable, which is when $ |f'_{a,b}(b)| \le 1$, or equivalently when $ a \le 2/b(1-b)$.

\begin{figure}[hbt!]
  \centering
   {\includegraphics[width = 0.8\textwidth]{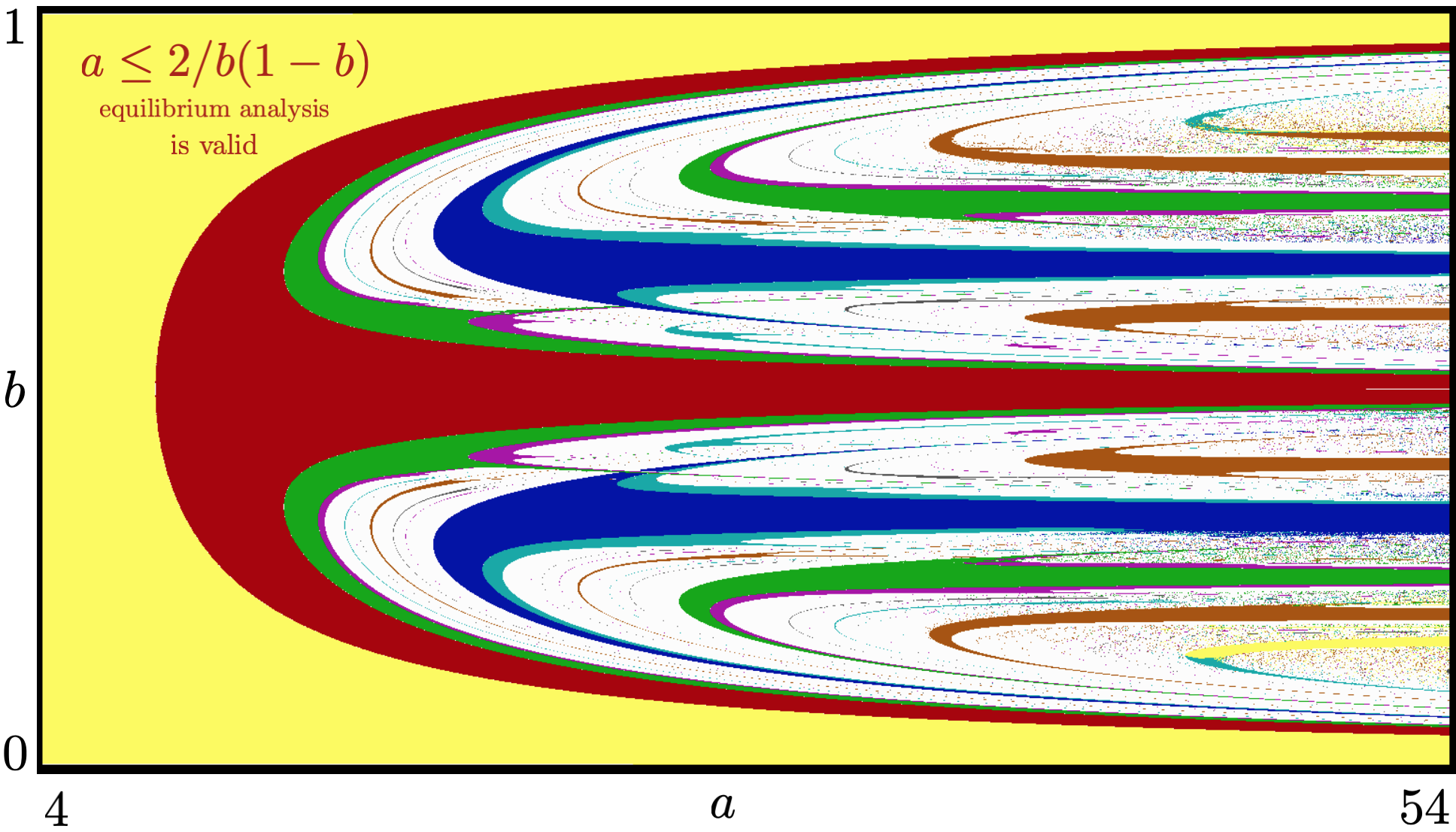}}
   \caption{Period diagrams of the small-period attracting periodic orbits associated with the map (\ref{eqn: 1dmap}). The colors encode the periods of attracting periodic orbits as follows: period 1 (fixed point) = {\color{yellow}yellow}, period 2 = {\color{red}red}, period 3 = {\color{blue}blue}, period 4 = {\color{green}green}, period 5 = {\color{brown}brown}, period 6 = {\color{cyan}cyan}, period 7 = {\color{darkgray}darkgray}, period 8 = {\color{magenta}magenta}, and period larger than 8 = white. The equilibrium analysis is only viable when the fixed point $b$ is stable, i.e. when $a \le 2/b(1-b)$. In other region of the phase-space, non-equilibrating dynamics arise and system proceeds through the period-doubling bifurcation route to chaos in the white region. The picture is generated from the following algorithm: 20000 preliminary iterations are discarded. Then a point is considered periodic of period $n$ if $|f^n(x)-x|<0.0000000001$ and it is not periodic of any period smaller than $n$. Slight asymmetry is caused by the fact that the starting point is the left critical point $x_l = 1/2 - \sqrt{1/4-1/a}$. In addition, for a fixed $a$, as we vary $b$ and penetrate into the chaotic regimes (white) from the outer layers, we numerically observe Feigenbaum's universal route to chaos as discussed below.}
  \label{fig: periods_color}
\end{figure}

{\bf Feigenbaum's universal route to chaos}:
The period diagrams as a function of the two free-parameters $a$ and $b$ are shown in Fig. \ref{fig: periods_color}. It's interesting to report numerical observations of Feigenbaum's route to chaos for our bimodal map $f_{a,b}$. Although Feigenbaum's universality is known to apply among a one-dimensional {\it unimodal} interval map with a quadratic maximum \cite{feig, landford, tabor}, we also observe the Feigenbaum's period-doubling route to chaos for our {\it bimodal} interval map. Specifically, by fixing $a$ and varying $b$, we numerically measure the ratios 
\begin{equation}\label{eqn: feig_def}  
\delta_n \equiv \frac{b_{n+1} - b_n}{b_{n+2} - b_{n+1}},\ \ \ \alpha_n \equiv \frac{d_n}{d_{n+1}},
\end{equation}
where $b_n$ denotes the value at which a period $2^n$-orbits appears, and $d_n = f_{a,b}^{2^{n-1}}(x_l) - x_l$ such that the left critical point $x_l=\frac 12\left(1-\sqrt{1-\frac 4a}\right)$ (the point at which $f_{a,b}$ attains its maximum) belongs to the $2^n$-orbits\footnote{In this way, we can numerically approximate the {\it signed} second Feigenbaum constant $\alpha$ \cite{Strogatz2000}.}.
As $n$ grows large (we truncate our observation at $n = 12$), we find 
\begin{equation}\label{eqn: feig}  
\delta_{n = 12} \approx 4.669\dots, \ \ \ \alpha_{n=12} \approx -2.502\dots, \ \  \end{equation}  
which agree, to 4 digits, with the Feigenbaum's universal constants, $\delta = 4.669201609102990\dots$ and $\alpha = -2.502907875\dots$, 
that appear, for example, in the period-doubling route to chaos in the logistic map.

{\bf Coexistence of two attracting periodic orbits and non-uniqueness of regret and social cost}:
The map $f_{a,b}$ has a negative Schwartzian derivative when $a>4$,  thus it has at most two  attracting or neutral periodic orbits. Although the time-average of every periodic orbits converges {\it exactly} to the Nash equilibrium $b,$ the variance $\lim_{T \rightarrow \infty} \frac{1}{T} \sum_{n=1}^T (x_n - b)^2$ of the coexisting periodic orbits can differ. Thus, the normalized time-average social cost and the time-average regret, which depend on the variance, can be multi-valued. Which value is attained depends on the variance of the attracting periodic orbits that the dynamics asymptotically reaches, which itself depends on the initial condition $x_0$. 
Period diagrams of Fig. \ref{fig: periods_coexistence} reveal how the two coexisting initial condition-dependent attracting periodic orbits are intertwined, and  Fig. \ref{fig: nonunique_orbits} reports the evidence of two coexisting periodic orbits whose variances differ, leading to multi-valued time-average regret and social cost.

\begin{figure}[hbt!]
  \centering
   {\includegraphics[width = 0.8\textwidth]{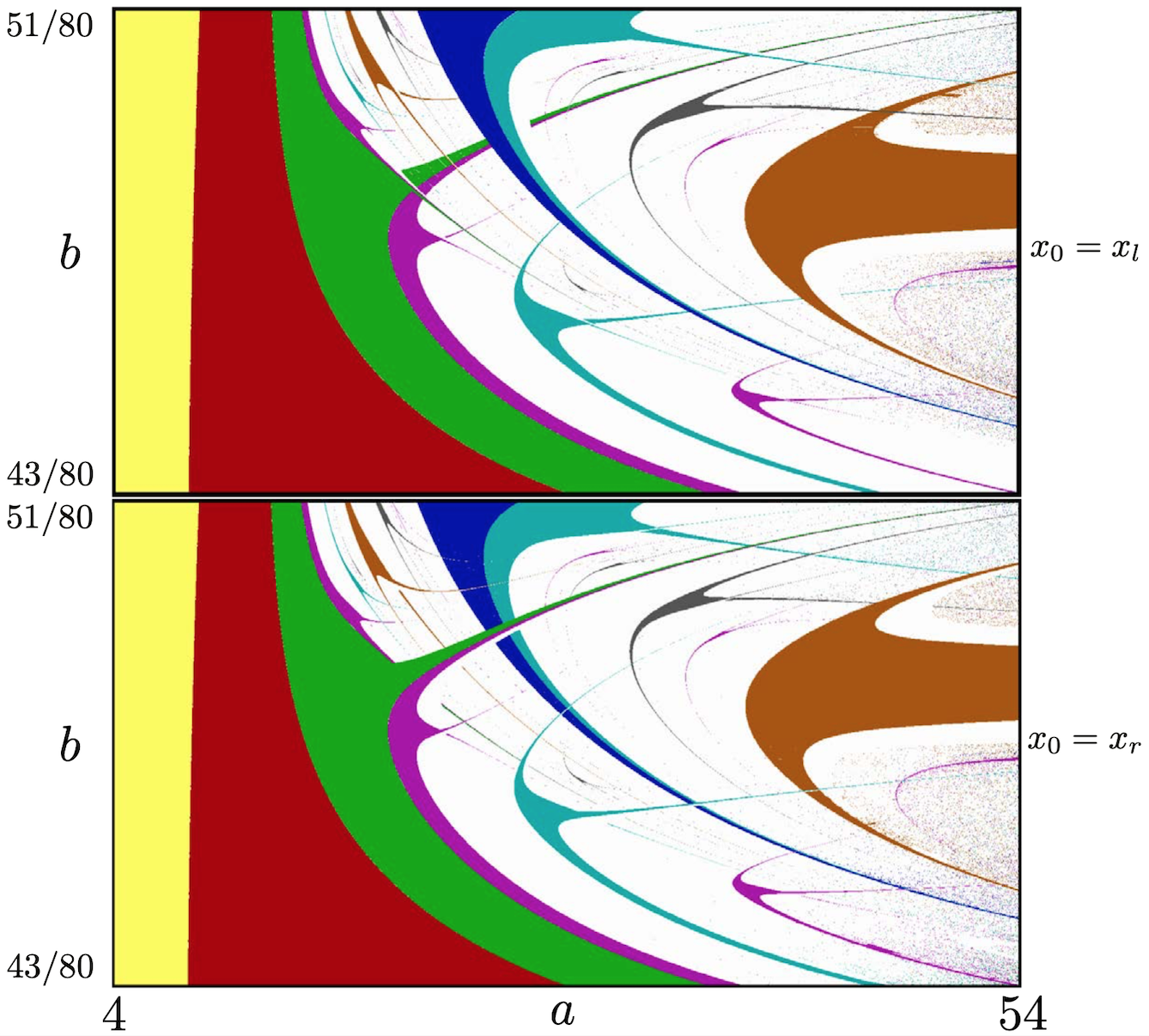}}
   \caption{Coexistence of two initial condition-dependent attracting periodic orbits. The pictures are generated from the same procedure as explained in Fig.\ref{fig: periods_color}, except that here the initial conditions for the top and the bottom pictures are located at the left and the right critical points, respectively. Also, $b \in [43/80, 51/80]$ and $a \in [4, 54].$ The color schemes are the same as those of Fig. \ref{fig: periods_color} : period 1 (fixed point) = {\color{yellow}yellow}, period 2 = {\color{red}red}, period 3 = {\color{blue}blue}, period 4 = {\color{green}green}, period 5 = {\color{brown}brown}, period 6 = {\color{cyan}cyan}, period 7 = {\color{darkgray}darkgray}, period 8 = {\color{magenta}magenta}, and period larger than 8 = white.}
  \label{fig: periods_coexistence}
\end{figure}

\begin{figure}[hbt!]
  \centering
   {\includegraphics[width = 0.9\textwidth]{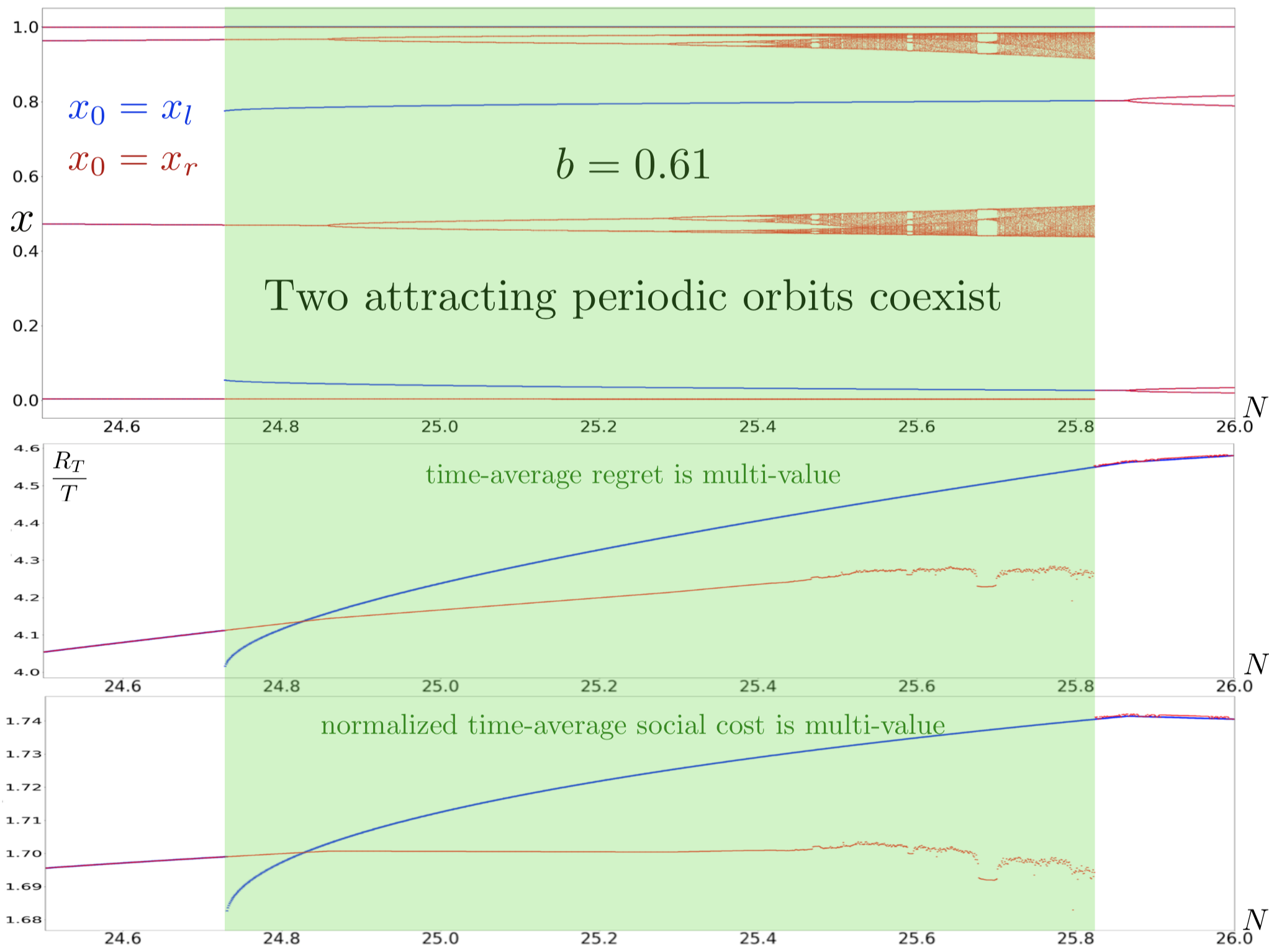}}
   \caption{Coexistence of two attracting periodic orbits at $b = 0.61$ with two different variances implies {\it non-uniqueness} of time-average regret and normalized time-average social cost. As usual, we set $\epsilon = 1 - 1/e$ so that $N = a$. (Top) The range of $N$ in the shaded green region show coexistence of two attracting periodic orbits. The blue (red) periodic orbits is selected if the initial condition is the left (right) critical point $x_l$ ($x_r$). There are at most 2 coexisting attracting periodic orbits, as guaranteed by the negative Schwartzian derivative for our bimodal map $f_{a,b}$. The variance of the two periodic orbits are clearly different; thus, the time-average regret (middle) and the normalized time-average social cost (bottom) which depend on the variance are multi-valued. Which values are attained depend on initial conditions. }
  \label{fig: nonunique_orbits}
\end{figure}

\begin{figure}[hbt!]
  \centering
   {\includegraphics[width = 0.7\textwidth]{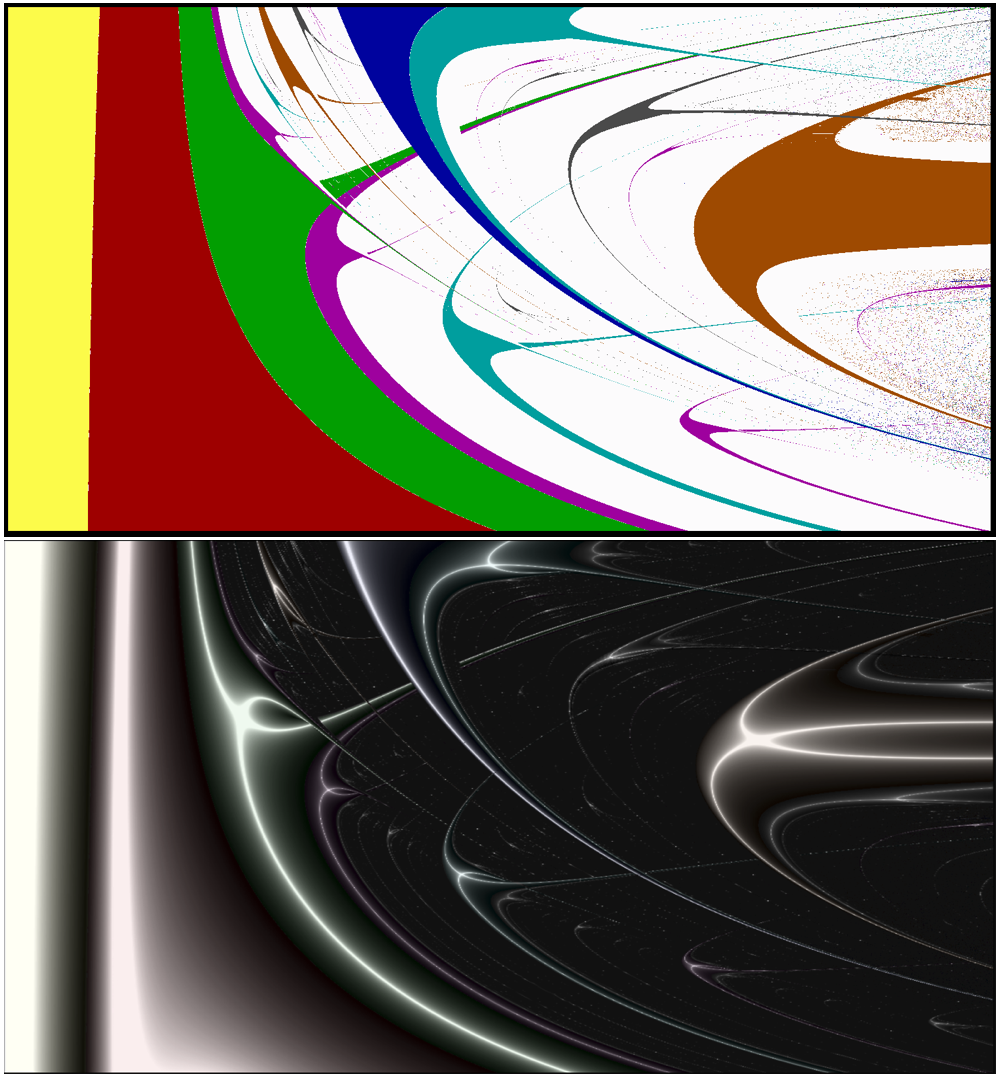}}
   \caption{(Bottom) Lyapunov exponents $\langle \log |f'_{a,b}|\rangle$ numerically approximated by $\frac{1}{T}\sum_{n=1}^T \log |f'_{a,b}(x_n)|$ with $T = 2000$, shown in gray scale, superposed on the period diagrams (Top) adopted from Fig. \ref{fig: periods_coexistence} (Top). The color scheme of the Lyapunov exponents is such that $\langle \log |f'_{a,b}|\rangle < -1.5$ is shown in white (very stable orbits) and $\langle \log |f'_{a,b}|\rangle > 0$ is shown in black (unstable or chaotic). One can clearly see that the extended-leg structures arise from having superstable orbits as the skeleton of each attracting periodic orbits regime. When both critical points are elements of the attracting orbit, the two extended legs intersect. As expected, close to the bifurcation boundaries and in the chaotic regime, the orbits becomes unstable, as represented by the black color.}
  \label{fig: lyapunov}
\end{figure}

{\bf Stability of the orbits}: 
In addition to the period diagrams, we investigate the stability of the attracting orbits by considering the Lyapunov exponents $\langle \log | f'_{a,b} | \rangle$, where $\langle \cdot \rangle$ denotes time-average. Fig. \ref{fig: lyapunov} (bottom) shows the Lyapunov exponents associated with different attracting orbits, revealing that extended-leg structures arise from the situations when the orbits become superstable, that is when one of the two critical points is an element of the orbits\footnote{Recall that the orbit is superstable if one of the critical points is an element of the orbits, so that $f'_{a,b}(x_c) = 0.$ This means the Lyapunov exponents in principle is $- \infty$, visualized as a white bright color.}. Within the regime of the same period (same color), there are situations when the two superstable extended-leg curves intersect. These scenarios happen when {\it both} critical points are elements of periodic orbits.

Also, note Fig. \ref{fig: periods_color} reveals that the qualitatively similar extended-leg structures in the period diagrams appear in layers, with a chaotic regime sandwiched between two layers. Notice also that the consecutive layers have periods differ by 1. To understand why these layers with increasing periods appear, we investigate superstable periodic orbits in these layers and found that,  all elements of the orbits, except for the left critical points $x_l = 1/2 - \sqrt{1/4 - 1/a}$ and its image $f_{a,b}(x_l)$, are approximately 0, independent of the period of the orbits. With this observation, we now approximate one of the superstable regions within each layer, using the time-average convergence to the Nash equilibrium property of Corollary \ref{cmper}. Namely, let $x_l$ be an element of a periodic orbit of period $p$ such that only $x_l$ and $f_{a,b}(x_l)$ are significantly larger than 0, then from Corollary \ref{cmper} we have 
\begin{equation}\label{eq: approx_ssorbits}
x_l + f_{a,b}(x_l) + \underbrace{\left\{ f_{a,b}^2(x_l) + \dots + f_{a,b}^{p-1}(x_l) \right\}}_{\approx \ 0}	 = pb.
\end{equation}
Numerical results show that the approximation that every elements of the periodic orbits except $x_l$ and $f_{a,b}(x_l)$ are close to 0 becomes better and better for periodic orbits with larger periods; hence, we're interested in the limit of $a \gg 1$. To leading order in $\frac{1}{a}$, $x_l \approx \frac{1}{a}$ and $f_{a,b}(x_l) \approx \frac{1}{1+a e^{1-ab}}$ so that (\ref{eq: approx_ssorbits}) gives $\frac{1}{a} + \frac{1}{1+a e^{(1-ab)}} \approx pb.$
\begin{figure}[hbt!]
  \centering
   {\includegraphics[width = 0.72\textwidth]{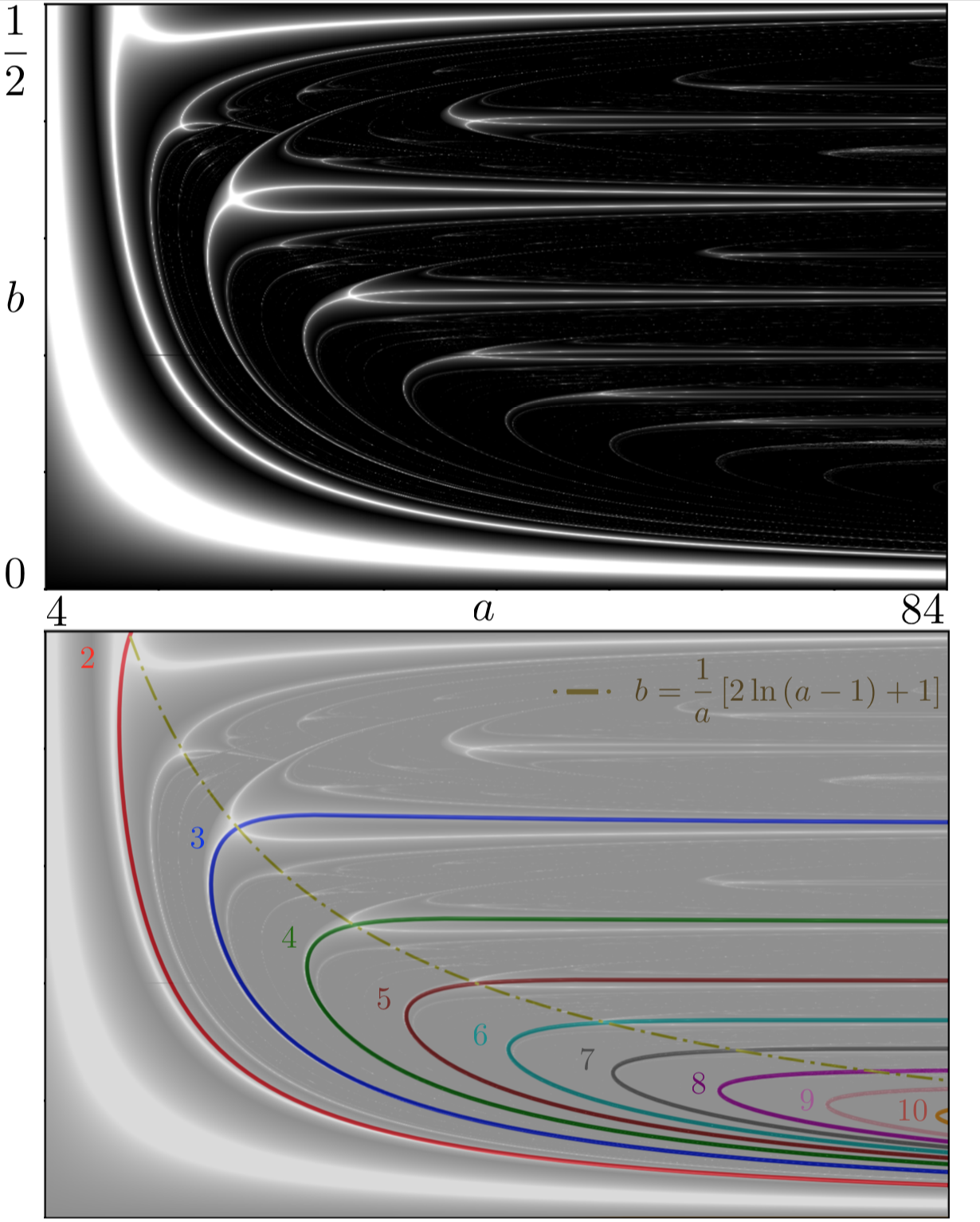}}
   \caption{Layers of extended-leg structures with increasing periods arise from specific permutations of superstable periodic orbits. As argued in the stability of the orbits section, $S(a,b) = p$ defines the superstable periodic orbits of period $p$ with the property that only $x_l$ and $f(x_l)$ are the only two elements of the periodic orbits that are not near $0$. The level sets of $S(a,b)$ at $p = 2, 3, \dots, 10$ are displayed in different colors (bottom), which accurately track extended-leg curves with large negative Lyapunov exponents (top). The color scheme of the Lyapunov exponents is such that $\langle \log |f'_{a,b}|\rangle < -1.5$ is shown in white (very stable orbits) and $\langle \log |f'_{a,b}|\rangle > 0$ is shown in black (unstable or chaotic).  The dashed olive green curve $b = \frac{1}{a} \left[ 2 \ln(a-1) + 1\right]$ obtained from (\ref{eqn: bothcps}) encompasses the situations when {\it both} critical points are the only two non-near-zero elements of the periodic orbits, i.e. when the two superstable curves in each region of the same period $p$ intersect. These results provide a reasonable answer to why layers of extended-leg structures with increasing periods appear in the period diagram of Fig. \ref{fig: periods_color}. }
  \label{fig: ssorbits_crit}
\end{figure}
Defining 
\begin{equation}
S(a,b) = \frac{1}{ab} + \frac{1}{b + (ab) e^{(1-ab)}},
\end{equation}
we obtain the condition
\begin{equation}\label{eqn: levelset}
S(a,b) \approx p, 
\end{equation}
that should become more accurate as $a\gg1$, for $x_l$ to be on the periodic orbit of period $p$ with the aforementioned property. Fig. \ref{fig: ssorbits_crit} reveals that the level sets of $S(a,b)$ for $p = 2, 3, \dots, 10$ accurately tracks the extended-leg structures  with increasing periods, showing that these superstable orbits are the skeletons of the extended-leg layers shown in Fig. \ref{fig: periods_color}.

In addition, we can approximate the condition when {\it both} critical points $x_l$ and $x_r$ become the elements of these superstable periodic orbits. In these specific permutations of the orbits, we require $x_r = f_{a,b}(x_l)$. And from (\ref{eq: approx_ssorbits}) we obtain $x_l + f_{a,b}(x_l) \approx pb$. Since $x_l + x_r = 1$, we conclude that both critical points will be on the periodic orbit of period $p$ with the aforementioned property when 

\begin{equation}\label{eqn: bothcps}
b \approx \frac{1}{p}, \ \ \ \text{and   }\ \ \ \  \frac{1}{a}\left[ 2 \ln(a-1) + 1\right] \approx \frac{1}{p},
\end{equation}
where the condition on $a$ follows from (\ref{eqn: levelset}) and $b \approx \frac{1}{p}$.
Therefore, if we plot the relationship $b = \frac{1}{a}\left[ 2 \ln(a-1) + 1\right]$, the graph will encompass the situations when both critical points are on the periodic orbits with the aforementioend property. This is illustrated by the dashed olive green line of Fig. \ref{fig: ssorbits_crit} that passes through the intersections between two superstable curves within each period-$p$ region.

\section{Extensions to congestion games with many strategies} 
\label{s:extension}

In this section we extend our results on Li-Yorke chaos and time-average convergence to Nash equilibrium to the case of many strategies.

We will consider a $m$-strategy \emph{congestion game} with a continuum of players/agents, where all of them use  \emph{multiplicative
  weights update}. Each of the players controls an infinitesimally small fraction of the flow.
  We will assume that the total flow of all the agents is equal to $N$. We will denote the fraction of the players using $i$ one of the  $m$ strategies  at time $n$ as $x_i(n)$ where $i \in \{1, \dots, m\}$.
  Intuitively, this model captures how a large population of players chooses between multiple alternative, parallel paths for going from point $A$ to point $B$.
 If a large fraction of the players choose the same
strategy, this leads to  congestion/traffic, and the cost increases. We will
assume that the cost is proportional to the \emph{load}. If we denote
by $c(i)$ the cost of any player playing 
strategy number $i$, and the coefficients of proportionality are
$\alpha_i$, then we get
\begin{equation}\label{many_cost}
\begin{aligned}
c(i)&=\alpha_i N x_i ~\forall  i \in \{1, \dots, m\}.
\end{aligned}
\end{equation}

At time $n+1$ the players know already the cost of the strategies at
time $n$ and update
their choices. Since we have a continuum of agents we will assume that 
the fractions of users using the first, second, $m$-th path are respectively equal to the probabilities $x_1(n), \dots,  x_{m}(n)$.
Once again we will update the probabilities using MWU. 
 The update rule in the case of $m$ strategies is as follows:

\begin{equation}\label{mwu2}
\begin{aligned}
x_i(n+1)&=x_i(n)\frac{(1-\eps)^{c(i)}}{\sum_{j \in \{1,\dots,m 
\}} x_j(n)(1-\eps)^{c(j)}},\\
\end{aligned}
\end{equation}

The (Nash) equilibrium flow $(b_1, \dots, b_m )$ of the congestion games is the unique flow such that the cost of all paths are equal to each other. Specifically, the equilibrium is defined by $b_i= \frac{1/\alpha_i}{\sum_{j\{1,\dots,m\}}\alpha_j}$.

The MWU dynamics introduced by \eqref{mwu2} can be interpreted as the
dynamics of the map $f$ of the simplex $\Delta=\{(x_1,\dots,x_m):x_i\ge
0,\ \sum_{j=1}^m x_j=1\}$ to itself, given by
\begin{equation} \label{f1m}
f(x_1,\dots,x_m)=\left(\frac{y_1}Y,\dots,\frac{y_m}Y\right),
\end{equation}
where
\[
y_i=x_i\exp(-a_ix_i)\ \,  \textrm{with}\   a_i=N \alpha_i \ln\big(\frac{1}{1-\epsilon}\big)\ \  \textrm{and}\ \ Y=\sum_{j=1}^m y_j.
\]
We set $a_i=Np_i$, $p_i=\alpha_i \ln (1/(1-\varepsilon))$, and see what happens as $N\to\infty$.
We will show that 
if $N$ is sufficiently large then $f$ is Li-Yorke chaotic, except when
$m=2$ and $p_1=p_2$, that is $\alpha_1=\alpha_2$.

{

\subsection{Proof of the existence of Li-Yorke chaos}

In this section we will provide a generalization of Theorem \ref{per3} showing that even in congestion games with many strategies, increasing the population/flow will result to instability and chaos. Thus, the emergence of chaos is robust both in the size of the game (holds in games with few as well as many paths/strategies) as well as to the actual cost functions on the edges (chaos emerges for effectively any tuple of linear cost functions).

\begin{theorem}
\label{t:chaos_many}
Given any non-atomic congestion game with $m$ actions as described by model (\ref{many_cost}),(\ref{mwu2}), 
except for the case\footnote{Given that the case  $m=2$ with $\alpha_1=\alpha_2$ is analyzed in Theorem \ref{trajf} (emergence of a periodic orbit of period $2$) we have a complete understanding of all cases.} of $m=2$ with $\alpha_1=\alpha_2$, then there exists a total system demand $N_0$ such that for if $N\geq N_0$ the system has periodic orbits of all periods, positive topological entropy and is Li-Yorke chaotic. \end{theorem}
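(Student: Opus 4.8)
The plan is to reduce the $m$-dimensional dynamics of \eqref{f1m} to the one-dimensional map $f_{a,b}$ of \eqref{map} on a suitable one-dimensional invariant subset of the simplex $\Delta$, and then transfer the conclusions of Corollary~\ref{chaos} back to $f$. Two structural features of \eqref{f1m} make this possible. First, since $y_i=x_i\exp(-a_ix_i)$ vanishes whenever $x_i=0$, every face of $\Delta$ obtained by setting some coordinates to $0$ is forward invariant. Second, $f$ commutes with any permutation of coordinates having equal coefficients $\alpha_i$, so the fixed set of such a permutation is invariant as well. In each case I would isolate a one-dimensional invariant set on which $f$ restricts to an interval map conjugate to some $f_{a',b'}$ with $b'\neq 1/2$.

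\emph{Case 1: not all $\alpha_i$ are equal.} Choose $i\neq j$ with $\alpha_i\neq\alpha_j$ and restrict to the invariant edge $E_{ij}=\{x:x_k=0\text{ for }k\neq i,j\}$. Parametrizing $E_{ij}$ by $x=x_i\in[0,1]$ (so $x_j=1-x$) and substituting into \eqref{f1m}, the normalization collapses and, using $-a_j(1-x)+a_ix=(a_i+a_j)\big(x-\tfrac{\alpha_j}{\alpha_i+\alpha_j}\big)$, the restriction becomes
\begin{equation*}
x\longmapsto \frac{x}{x+(1-x)\exp\!\big((a_i+a_j)(x-b)\big)},\qquad b=\frac{\alpha_j}{\alpha_i+\alpha_j}.
\end{equation*}
Thus $f|_{E_{ij}}$ is conjugate, via the affine parametrization, to $f_{a_i+a_j,\,b}$ with $a_i+a_j=(\alpha_i+\alpha_j)N\ln\frac{1}{1-\epsilon}$. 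Since $\alpha_i\neq\alpha_j$ forces $b\neq 1/2$, Corollary~\ref{chaos} applies once $N$ is large.

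\emph{Case 2: all $\alpha_i$ equal (and $m\geq 3$).} Now every edge reduces to the symmetric, non-chaotic map of Theorem~\ref{trajf}, so I would exploit the symmetry instead. With $\alpha$ the common coefficient, consider the curve $C=\{x:x_2=x_3,\ x_k=0\text{ for }k\geq 4\}$, invariant as the intersection of an invariant face with the fixed set of the transposition of coordinates $2$ and $3$. Parametrizing $C$ by $x_1\in[0,1]$ (so $x_2=x_3=(1-x_1)/2$) and substituting into \eqref{f1m}, the common factor again cancels and, using $x_1-\tfrac{1-x_1}{2}=\tfrac32\big(x_1-\tfrac13\big)$, the restriction becomes
\begin{equation*}
x_1\longmapsto \frac{x_1}{x_1+(1-x_1)\exp\!\big(\tfrac{3a}{2}(x_1-\tfrac13)\big)},
\end{equation*}
i.e.\ $f|_C$ is conjugate to $f_{3a/2,\,1/3}$ with $b'=1/3\neq 1/2$, so Corollary~\ref{chaos} applies for $N$ large.

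In both cases the restricted map is an interval map with a period-$3$ orbit and an uncountable scrambled set for all sufficiently large $N$, and it remains to lift these properties through the invariant embedding. Periodic orbits of the restriction are genuine periodic orbits of $f$, giving orbits of every period; the scrambled set lies in a set on which the affine (hence bi-Lipschitz) parametrization makes the simplex metric comparable to the interval metric, so it stays scrambled for $f$, yielding Li-Yorke chaos; and since topological entropy is monotone under passing to a closed invariant subset, $h(f)\geq h(f_{a',b'})>0$ by Theorem~\ref{pertop} (period $3$ is not a power of $2$). The main obstacle is precisely the symmetric case: no face of $\Delta$ is chaotic there, so one must identify the correct symmetry-reduced invariant curve and verify that its reduced map has the form $f_{a',b'}$ with $b'\neq 1/2$. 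The remaining bookkeeping---invariance of faces and symmetry sets, and conjugacy-invariance of the three notions of chaos---is routine.
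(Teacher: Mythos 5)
Your proposal is correct, and its overall strategy -- find a one-dimensional invariant subset of $\Delta$ on which $f$ restricts to a map of the form $f_{a',b'}$ with $b'\neq\tfrac12$, invoke Corollary~\ref{chaos} (and Theorem~\ref{pertop} for entropy), and lift periodic orbits, the scrambled set, and entropy from the invariant subset -- is exactly the strategy of the paper. The concrete reduction differs, though. You split into cases: when some $\alpha_i\neq\alpha_j$ you restrict to the boundary edge $E_{ij}$, where the dynamics are literally the two-strategy map $f_{a_i+a_j,\,\alpha_j/(\alpha_i+\alpha_j)}$; when all $\alpha_i$ coincide and $m\ge3$ you pass to the symmetry-reduced curve $x_2=x_3,\ x_k=0$ for $k\ge4$, obtaining $f_{3a/2,\,1/3}$. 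The paper instead uses a single invariant segment through the interior, $x_i=\tfrac{px}{p_i}$ for $i<m$ with $p=\bigl(\sum_{k<m}1/p_k\bigr)^{-1}$ and $x_m=1-x$, on which $f$ reduces to $f_{N(p+p_m),\,p_m/(p+p_m)}$; the exceptional non-chaotic case $m=2$, $\alpha_1=\alpha_2$ then falls out of analyzing when $p_m/(p+p_m)=\tfrac12$ for every choice of the distinguished index. Your edge-based argument is more elementary in the asymmetric case (it is just the already-analyzed two-path game embedded as a face, and in the fully symmetric case your curve is essentially the paper's segment restricted to a three-coordinate face), while the paper's construction has the advantages of treating all cases uniformly and of exhibiting the chaotic invariant set inside the interior of the simplex, i.e., at flows in which every path carries positive traffic, rather than on a face where the remaining strategies are never used. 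Your remarks on lifting the three properties (periodic orbits transfer directly, the affine parametrization is bi-Lipschitz so Li--Yorke pairs persist, and entropy is monotone under restriction to closed invariant sets) are accurate and are no less detailed than what the paper itself provides.
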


\begin{proof}

Set
\[
p=\frac1{\sum_{k=1}^{m-1}\frac1{p_k}}
\]
and consider the segment
\[
I=\left\{(x_1,\dots,x_m)\in\Delta:x_i=\frac{px}{p_i}\ \ \textrm{for}\ \ i<m,
\ 0\le x\le 1\right\}.
\]
We have
\[
x_m=1-\sum_{k=1}^{m-1}x_k=1-x,
\]
so indeed, $I\subset\Delta$.

We have $y_i=\frac{px}{p_i}\exp(-Npx)$ for $i<m$ and
\[
y_m=(1-x)\exp(-Np_m(1-x)),
\]
so
\[
Y=x\exp(-Npx)+(1-x)\exp(-Np_m(1-x)).
\]
Therefore, for $i<m$ we get
\[\begin{split}
\frac{y_i}Y&=\frac{\frac{px}{p_i}\exp(-Npx)}{x\exp(-Npx)+
  (1-x)\exp(-Np_m(1-x))}\\
&=\frac{p}{p_i}\ \cdot\ \frac{x}{x+(1-x)\exp(-Np_m(1-x)+Npx)}.
\end{split}\]

Thus, $f(I)\subset I$, and the map $f$ on $I$ (in the variable $x$) is
given by the formula
\[
x\mapsto \frac{x}{x+(1-x)\exp(-Np_m(1-x)+Npx)}.
\]
This formula can be rewritten as
\[
x\mapsto \frac{x}{x+(1-x)\exp\left(N(p+p_m)
\left(x-\frac{p_m}{p+p_m}\right)\right)},
\]
and we already know that this map is Li-Yorke chaotic and has positive
topological entropy and periodic orbits of all possible periods for sufficiently large $N$, provided
$\frac{p_m}{p+p_m}\ne\frac12$. Therefore in this case $f$ is Li-Yorke
chaotic and has positive topological entropy on the whole $\Delta$ for
sufficiently large $N$.

Let us now investigate the exceptional case
$\frac{p_m}{p+p_m}=\frac12$. Then $p_m=p$, so
\[
\frac1{p_m}=\sum_{k=1}^{m-1}\frac1{p_k}.
\]
Therefore,
\[
\frac2{p_m}=\sum_{k=1}^m\frac1{p_k}.
\]
However, our choice of $m$ as a special index was arbitrary, so the
only case when we do not get Li-Yorke chaos and positive topological
entropy is when
\[
\frac2{p_i}=\sum_{k=1}^m\frac1{p_k}
\]
for every $i=1,2,\dots,m$. In this case, $p_1=p_2=\dots=p_m$, so we
get $\frac2{p_i}=\frac{m}{p_i}$, so $m=2$ and $p_1=p_2$. Eventually $p_1=p_2$ if and only if $\alpha_1=\alpha_2$.
\end{proof}


\subsection{Time average convergence to equilibrium}
\label{s:average}

The goal in this section is to generalize Theorem \ref{t:Cesaro} about the time-average of the flow converging to the (Nash) equilibrium flow.  We will start with a technical lemma showing that the all (interior) initial conditions converge to an interior invariant set.
%
We consider the map $f$ given by \eqref{f1m}.
%
If $x=(x_1,\dots,x_m)\in\Delta$, then we write
$\xi(x)=\min(x_1,\dots,x_m)$.

\begin{lemma}
\label{lem:interior}
If $x\in\Delta$ and $\xi(x)>0$ then
\begin{equation}\label{e1}
\inf_{n\ge 0}\xi(f^n(x))>0.
\end{equation}
\end{lemma}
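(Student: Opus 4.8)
The plan is to show that the boundary $\partial\Delta$ acts as a repeller: no coordinate can drop to $0$ in a single step, and any coordinate that becomes small is pushed back up. Write $f^n(x)=(x_1(n),\dots,x_m(n))$ and record the per-coordinate multiplicative update implied by \eqref{f1m},
\[
\frac{x_i(n+1)}{x_i(n)}=\frac{\exp(-a_ix_i(n))}{Y(n)},\qquad Y(n)=\sum_{j=1}^m x_j(n)\exp(-a_jx_j(n)),
\]
where all $a_i=Np_i>0$. Set $a_{\max}=\max_i a_i$ and $a_{\min}=\min_i a_i$, both strictly positive. Everything will hinge on two bounds for the normalizer $Y(n)$.

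First, since each factor $\exp(-a_jx_j)\le 1$, we have $Y(n)\le\sum_j x_j(n)=1$. Together with $\exp(-a_ix_i(n))\ge e^{-a_{\max}}$ (as $x_i(n)\le 1$) this yields the uniform one-step bound
\[
x_i(n+1)\ge x_i(n)\,e^{-a_{\max}},
\]
so in a single step no coordinate shrinks by more than the fixed factor $e^{-a_{\max}}$. Second --- and this is the quantitative form of the boundary being repelling --- I would bound $Y(n)$ strictly below $1$. Because the coordinates sum to $1$, the largest of them, $x_k(n)$, satisfies $x_k(n)\ge 1/m$, hence
\[
1-Y(n)=\sum_{j=1}^m x_j(n)\bigl(1-e^{-a_jx_j(n)}\bigr)\ge \frac1m\bigl(1-e^{-a_{\min}/m}\bigr)=:c>0.
\]
Thus $Y(n)\le 1-c$ for all $n$, with $c$ depending only on $m$ and the $a_i$.

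Combining the two bounds finishes the proof. Fix any threshold $\epsilon_0\in(0,\,c/a_{\max})$. If $x_i(n)<\epsilon_0$ then $\exp(-a_ix_i(n))>e^{-a_{\max}\epsilon_0}\ge 1-a_{\max}\epsilon_0>1-c\ge Y(n)$, so the ratio above exceeds $1$ and the coordinate strictly increases. Now set $L=\epsilon_0\,e^{-a_{\max}}$ and prove by induction on $n$ that $x_i(n)\ge\min(x_i(0),L)$ for every $i$. In the inductive step, if $x_i(n)\ge\epsilon_0$ the one-step bound gives $x_i(n+1)\ge\epsilon_0 e^{-a_{\max}}=L$; if $x_i(n)<\epsilon_0$ then $x_i(n+1)>x_i(n)\ge\min(x_i(0),L)$. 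Taking the minimum over $i$ yields $\xi(f^n(x))\ge\min(\xi(x),L)$, which is positive since $\xi(x)>0$ and $L>0$; this is exactly \eqref{e1}.

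The only genuinely substantive step is the lower bound $1-Y(n)\ge c$: it captures why a vanishing coordinate is self-correcting and is what prevents the harmless factor-$e^{-a_{\max}}$ decay from compounding indefinitely. The remaining estimates are elementary. Note also that the constraint $\epsilon_0<c/a_{\max}$ automatically forces $\epsilon_0<1/m$, so the largest coordinate is never below threshold and no conflict with $\sum_i x_i(n)=1$ arises. Since the argument uses only $a_i>0$, it holds for every $N>0$ and sets the stage for the generalization of Theorem~\ref{t:Cesaro} to $m$ strategies.
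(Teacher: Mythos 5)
Your proof is correct and follows essentially the same route as the paper's: both arguments rest on bounding the normalizer away from $1$ via the largest coordinate (your $Y(n)\le 1-c$ is exactly the paper's $Y\le C$ with $C=1-c$), and then splitting into the cases where a coordinate is below a small threshold (and therefore cannot decrease) or above it (and therefore stays above a fixed floor after one step). Your version merely makes the paper's limit argument and final combination explicit, via the inequality $e^{-t}\ge 1-t$ and an induction, which is a presentational rather than substantive difference.
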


\begin{proof}
We use notation from the definition of $f$. Set
\[
A=\max(a_1,\dots,a_m),\ \ \ \alpha=\min(a_1,\dots,a_m).
\]
Since for every $i$ we have $0\le x_i\le 1$, we get
\begin{equation}\label{e2}
y_i\ge x_i\exp(-Ax_i)\ge x_i\exp(-A).
\end{equation}
There exists $k$ such that $x_k\ge 1/m$. Then
$y_k\le x_k\exp(-\alpha/m)$, so
\[
x_k-y_k\ge\frac1m\left(1-\exp\left(-\frac\alpha{m}\right)\right).
\]
Since $x_j\ge y_j$ for all $j$, we get
\[
1-Y\ge\frac1m\left(1-\exp\left(-\frac\alpha{m}\right)\right).
\]
Set
\[
C=\frac1m\left(m-1+\exp\left(-\frac\alpha{m}\right)\right).
\]
Then $Y\le C$ and $C<1$. By~\eqref{e2}, for every $i$ we get
\[
\frac{y_i}Y\ge\frac{x_i\exp(-Ax_i)}C\ge\frac{x_i\exp(-A)}C.
\]

We have
\[
\lim_{t\to 0}\frac{\exp(-At)}C=\frac1C>1,
\]
so there is $\varepsilon>0$ such that $x_i\le\varepsilon$ then
$y_i/Y\ge x_i$. Moreover, if $x_i\ge\varepsilon$ then
$y_i/Y\ge\varepsilon\exp(-A)/C$. This proves that~\eqref{e1} holds.
\end{proof}

We are ready to prove that the time average of the flow converges to the equilibrium flow.
 The update rule in the case of $m$ strategies is given by \eqref{mwu2}.

\begin{theorem}\label{t:Cesaro_multiple}
Given any non-atomic congestion game with $m$ actions as described by model (\ref{many_cost}),(\ref{mwu2}),
if $x=(x_1,\dots,x_m)\in\Delta$, and
$\min(x_1,\dots,x_m)>0$, then
\begin{equation}\label{e:Cesaro_multiple}
\lim_{T\to\infty}\frac1T\sum_{n=0}^{T-1} x_i(n)
=b_i.
\end{equation}
\noindent
where $(b_1, \dots, b_m )$ is the (Nash) equilibrium flow of the congestion game, i.e., $b_i= \frac{1/\alpha_i}{\sum_{j\{1,\dots,m\}}\alpha_j}$.
\end{theorem}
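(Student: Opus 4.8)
The plan is to mimic the one-dimensional argument of Theorem~\ref{t:Cesaro}, but to track the pairwise log-ratios of coordinates rather than a single coordinate. Fix an interior initial condition $x=(x_1(0),\dots,x_m(0))$ with $\min_i x_i(0)>0$, and write $x_i(n)$ for the $i$-th coordinate of $f^n(x)$. Writing the update rule~\eqref{mwu2} as $x_i(n+1)=x_i(n)\exp(-a_ix_i(n))/Y(n)$ with $Y(n)=\sum_j x_j(n)\exp(-a_jx_j(n))$, the normalizing factor $Y(n)$ cancels in every ratio, so for each pair $i,j$ one gets
\[
\frac{x_i(n+1)}{x_j(n+1)}=\frac{x_i(n)}{x_j(n)}\exp\big(-a_ix_i(n)+a_jx_j(n)\big).
\]
Iterating yields the closed form
\[
\frac{x_i(n)}{x_j(n)}=\frac{x_i(0)}{x_j(0)}\exp\Big(-a_i\sum_{k=0}^{n-1}x_i(k)+a_j\sum_{k=0}^{n-1}x_j(k)\Big),
\]
which is the multi-strategy analogue of the iteration formula exploited in the proof of Theorem~\ref{t:Cesaro}.

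The next step supplies the boundedness that powers the averaging. By Lemma~\ref{lem:interior}, the quantity $\mu:=\inf_{n\ge0}\xi(f^n(x))$ is strictly positive, so $x_i(n)\in[\mu,1]$ for every $i$ and every $n$; hence each ratio $x_i(n)/x_j(n)$ lies in the fixed compact interval $[\mu,1/\mu]$. Taking logarithms in the closed form then gives, with the $n$-independent constant $K=\log(1/\mu)+\max_{i,j}|\log(x_i(0)/x_j(0))|$,
\[
\Big|\,a_i\sum_{k=0}^{n-1}x_i(k)-a_j\sum_{k=0}^{n-1}x_j(k)\,\Big|\le K.
\]
Writing $S_i(n)=\frac1n\sum_{k=0}^{n-1}x_i(k)$ and dividing by $n$, I obtain $|a_iS_i(n)-a_jS_j(n)|\le K/n\to0$ for every pair $i,j$.

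I then close the argument by a compactness/limit-point argument on the simplex. The averaged vector $S(n)=(S_1(n),\dots,S_m(n))$ lies in $\Delta$ for every $n$, since its coordinates are nonnegative and sum to $1$; as $\Delta$ is compact it suffices to show that every convergent subsequence shares the same limit. If $S(n_\ell)\to S^*$, the previous estimate forces $a_iS_i^*=a_jS_j^*$ for all $i,j$; because $a_i=N\alpha_i\ln(1/(1-\epsilon))$, this is exactly $\alpha_iS_i^*=\alpha_jS_j^*$, the equality-of-costs condition defining the equilibrium. Together with $\sum_iS_i^*=1$ this has the unique solution $S_i^*=\frac{1/\alpha_i}{\sum_j1/\alpha_j}=b_i$, so every subsequential limit equals $b$ and therefore $S(n)\to b$, which is~\eqref{e:Cesaro_multiple}.

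\textbf{Main obstacle.} The one genuinely essential ingredient is the uniform lower bound $\mu>0$ on the coordinates along the entire orbit; without it the log-ratios need not stay bounded and the telescoping estimate collapses. This is precisely the content of Lemma~\ref{lem:interior}, so the substantive analytic work is already done there, and the remaining steps are the bounded-telescoping estimate and the elementary fact that $b$ is the unique point of $\Delta$ equalizing all the $\alpha_iS_i$. I would also remark at the outset that the interior hypothesis $\min_i x_i(0)>0$ cannot be dropped: MWU preserves the support of the initial distribution, so a coordinate starting at $0$ remains $0$ for all time and its time average cannot reach the positive value $b_i$.
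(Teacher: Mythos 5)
Your proof is correct and follows essentially the same route as the paper: telescoping the pairwise ratios $x_i(n)/x_j(n)$ under the MWU update, invoking Lemma~\ref{lem:interior} for the uniform interior bound that keeps the accumulated differences $a_i\sum_k x_i(k)-a_j\sum_k x_j(k)$ bounded, dividing by $n$, and identifying every subsequential limit of the averages with the unique equilibrium $b$ via compactness of $\Delta$. The only difference is cosmetic (you phrase the update via the $a_i$-exponential form of \eqref{f1m} rather than via the costs $c_n(i)$ and the base $1-\epsilon$), so no further comparison is needed.
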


\begin{proof}
By dividing two equations of type (\ref{mwu2}) one for strategy $i$ and one for strategy $j$ we derive that:

\begin{equation}
\begin{aligned}
\frac{x_{n+1}(i)}{x_{n+1}(j)}&=\frac{x_{n}(i)}{x_{n}(j)} (1-\eps)^{c_n(i)-c_n(j)}\\
\end{aligned}
\end{equation}

By unrolling this relationship we derive:

\begin{equation}
\begin{aligned}
\frac{x_{n+1}(i)}{x_{n+1}(j)}&=\frac{x_{1}(i)}{x_{1}(j)}  (1-\eps)^{\sum_{\tau=1}^n \big(c_{\tau}(i)-c_{\tau}(j)\big)}\\
\end{aligned}
\end{equation}

By Lemma \ref{lem:interior}, given any initial condition  $(x_1(1), x_1(2), \dots, x_1(n))$ such that  $\min_i x_1(i)>0$ 
we have that there exists $\delta >0$  such that  $\inf_{n\ge 0}\min x_n(i)>\delta$ and $\sup_{n\ge 0}\max x_n(i)<1-\delta$. Then

\[ \frac{\delta}{1-\delta} \frac{x_{1}(j)}{x_{1}(i)}  <   (1-\eps)^{\sum_{\tau=1}^n \big(c_{\tau}(i)-c_{\tau}(j)\big)} <   \frac{1-\delta}{\delta} \frac{x_{1}(j)}{x_{1}(i)} \]
and
\[ \frac{1}{\ln(1-\eps)} \ln \big(  \frac{\delta}{1-\delta} \frac{x_{1}(j)}{x_{1}(i)} \big)  <   \sum_{\tau=1}^n \big(c_{\tau}(i)-c_{\tau}(j)\big) <    \frac{1}{\ln(1-\eps)} \ln \big(  \frac{1-\delta}{\delta} \frac{x_{1}(j)}{x_{1}(i)} \big). \]
Dividing all sides of the inequality by $n$ we get
\[ \frac{\frac{1}{\ln(1-\eps)} \ln \big(  \frac{\delta}{1-\delta} \frac{x_{1}(j)}{x_{1}(i)} \big)}{n}  <  \frac{ \sum_{\tau=1}^n \big(c_{\tau}(i)-c_{\tau}(j)\big)}{n} <   \frac{ \frac{1}{\ln(1-\eps)} \ln \big(  \frac{1-\delta}{\delta} \frac{x_{1}(j)}{x_{1}(i)} \big) }{n}.\]

By taking limits we have that for any $i, j$ $\lim_{n} \frac{ \sum_{\tau=1}^n \big(c_{\tau}(i)-c_{\tau}(j)\big)}{n} = 0$. 
For any subsequence such that the limits $\lim_{n} \frac{ \sum_{\tau=1}^n c_{\tau}(i)}{n}$ exist for all $i$
we have that: 

\begin{equation}
\label{eq:time-averages}
\lim_{n\to\infty} \frac{ \sum_{\tau=1}^n c_{\tau}(i)}{n} = \lim_{n\to\infty} \frac{ \sum_{\tau=1}^n \big(c_{\tau}(i)-c_{\tau}(j)\big) + \sum_{\tau=1}^n c_{\tau}(j)}{n}  = \lim_{n\to\infty} \frac{ \sum_{\tau=1}^n c_{\tau}(j)}{n}
\end{equation}

Since the cost functions are linear, i.e., $c_n(j) = a_jN x_n$, equation \eqref{eq:time-averages} implies that:

\begin{equation}
\label{eq:convergence-avg}
 a_i \lim_{n\to\infty}  \frac{ \sum_{\tau=1}^n x_{\tau}(i)}{n} = a_j \lim_{n\to\infty} \frac{ \sum_{\tau=1}^n x_{\tau}(j)}{n}
\end{equation}

\noindent
and the point $\big(\lim_{n}  \frac{ \sum_{\tau=1}^n x_{\tau}(1)}{n} ,\dots,  \lim_{n}  \frac{ \sum_{\tau=1}^n x_{\tau}(m)}{n}\big)$ is the unique equilibrium flow of the congestion game. Clearly, the same argument can be made for any other subsequence such that $\lim_{n} \frac{ \sum_{\tau=1}^n c_{\tau}(i)}{n}$ exists by possibly defining its own subsequence so that all the other limits also exist (which is always possible due to compactness).
By \eqref{eq:time-averages}, \eqref{eq:convergence-avg} the value must once again agree with the unique equilibrium of the game. Hence, for any $i$,  $\lim_{n} \frac{ \sum_{\tau=1}^n c_{\tau}(i)}{n}$ exists and 
$\big(\lim_{n}  \frac{ \sum_{\tau=1}^n x_{\tau}(1)}{n} ,\dots,  \lim_{n}  \frac{ \sum_{\tau=1}^n x_{\tau}(m)}{n}\big)$ is the unique equilibrium flow.
\end{proof}
}


{
\section{Extensions to congestion games with polynomial costs}
\label{s:nonlinear}

For simplicity, we will return to the case with exactly two actions/paths. 
As usual we will denote by $c(j)$ the cost of selecting the
strategy number $j$ (when $x$ fraction of the agents choose the first strategy).
We will focus on cost functions which are monomials with the same degree\footnote{This is convenient as it immediately implies that the Price of Anarchy is equal to $1$, since the potential is equal $\frac{1}{p+1}$ of the social cost function in these games and thus the equilibrium flow minimizes both the potential and the social cost.} 

\begin{equation}\label{poly_cost}
c_1(x)=\alpha N^p x^p \hspace{50pt} c_2(x)=\beta N^p x^p, \hspace{80pt} 
 p \in \mathbb{N}.
\end{equation}

The (Nash) equilibrium flow corresponds to the unique split $(x_b^*,1-x_b^*)$ such that 
$c_1(x_b^*)=c_2(x_b^*)$. 

\subsection{Multiplicative weights with polynomial costs}

Once again, applying the multiplicative weights updates rule we get formula~\eqref{mwu}. 
By substituting into~\eqref{mwu} the values of the  polynomial cost functions
from~\eqref{poly_cost} we get:
\begin{equation}\label{mwu_poly}
\begin{aligned}
x_{n+1}&=\frac{x_n(1-\eps)^{\alpha N^p x^p_n}}{x_n(1-\eps)^{\alpha N^p x^p_n}+
(1-x_n)(1-\eps)^{\beta N^p (1-x_n)^p}}\\
&=\frac{x_n}{x_n+(1-x_n)(1-\eps)^{N^p (\beta (1-x_n)^p - \alpha  x^p_n)}}.
\end{aligned}
\end{equation}

We introduce the new variables
\begin{equation}\label{varp}
a=(\alpha+\beta) N^p \ln\left(\frac1{1-\eps}\right),\ \ \ b=\frac{\beta}{\alpha+\beta}.
\end{equation}


Once again, we see that $b=1/2$ if and only if the two paths are totally symmetric (same cost function). In this case 
$x^*_b=1/2$ as well as the equilibrium flow splits the total demand equally in both paths.

We will thus study the dynamical systems generated by the one-dimensional map:
\begin{equation}\label{map_poly}
\begin{aligned}
f_{a,b}(x)&=\frac{x}{x+(1-x)\exp(aP_b(x))}.\\
\end{aligned}
\end{equation}

Clearly $f_{a,b}:[0,1]\rightarrow [0,1]$, where $0<b<1$, $a>0$, and
$P_b(x)=(1-b)x^n-b(1-x)^n.$
We have $P_b(0)=-b$, $P_b(1)=1-b$, and $P_b$ is strictly increasing.
Therefore, there exists the unique point $x_b^*\in(0,1)$ such that
$P_b(x_b^*)=0$. Observe that $f_{a,b}(x_b^*)=x_b^*$ is exactly the equilibrium flow.
Moreover 

\begin{equation}\label{monomderiv} f_{a,b}'(x)=\frac{(1-ax(1-x)P'_b(x))\exp (aP_b(x))}{(x+(1-x)\exp (aP_b(x)))^2}.\end{equation}

From \eqref{monomderiv} we have that
\[f_{a,b}'(0)=\exp (-aP_b(0))=\exp(ab) >1\;\;\text{and}\;\; f_{a,b}'(1)=\exp (aP_b(1))= \exp(a(1-b))>1.\]
Thus $0$ and $1$ are repelling. 
This fact implies (the proof of this fact is the same as of Lemma 3.1 from \cite{CFMP}) that there exists  an invariant attracting subset $I_{a,b}$ of the unit interval.

Although the time-average convergence of the flow does not necessarily converge to the equilibrium flow, one can still prove a theorem analogous to Theorem \ref{t:Cesaro} that reflects the \textit{time-average of the costs} of different paths. Informally,  although the traffic flows through each path evolve chaotically from day-to-day, if an outsider observer was to keep track of their time-average cost, all paths would appear to be experience similar delays. It is due to this inability to learn a preferred path that chaos (as we will argue next) is self-sustaining even under polynomial cost functions despite the application of learning/optimizing dynamics. 



\begin{theorem}
If cost functions $c_1$, $c_2$ are given by \eqref{poly_cost}, then \begin{equation} \label{cost_Cesaro} \lim_{n\to\infty}\frac 1n \sum_{k=0}^{n-1}\left(c_1(x_k)-c_2(x_k)\right)=0.\end{equation}
\end{theorem}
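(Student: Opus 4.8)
The plan is to follow the proof of Theorem~\ref{t:Cesaro} almost verbatim, with the single but essential change that the relevant running sum is now $\sum_k (c_1(x_k)-c_2(x_k))$ rather than $\sum_k(x_k-b)$. This works because the cost difference is precisely the exponent generated by one step of the MWU recursion. Writing $\ell=\ln\bigl(\tfrac1{1-\eps}\bigr)$ and dividing the numerator of~\eqref{mwu_poly} by its denominator, I would first record the ratio form of the update,
\[
\frac{x_{n+1}}{1-x_{n+1}}=\frac{x_n}{1-x_n}\exp\bigl(-\ell\,(c_1(x_n)-c_2(x_n))\bigr),
\]
which is consistent with the map~\eqref{map_poly} since $aP_b(x)=\ell\,(c_1(x)-c_2(x))$ after substituting $a=(\alpha+\beta)N^p\ell$ and $b=\beta/(\alpha+\beta)$. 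Telescoping this identity down to the initial condition gives
\[
\frac{x_n}{1-x_n}=\frac{x_0}{1-x_0}\exp\Bigl(-\ell\sum_{k=0}^{n-1}(c_1(x_k)-c_2(x_k))\Bigr),
\]
so the entire partial sum of cost differences is encoded, up to a logarithm, in the single quantity $x_n/(1-x_n)$.

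The second ingredient is the invariant absorbing interval. Since $0$ and $1$ are repelling (the text computes $f'_{a,b}(0),f'_{a,b}(1)>1$), there is a closed interval $I_{a,b}\subset(0,1)$ that is forward invariant and attracts every orbit from $(0,1)$, exactly as in the remark modeled on Lemma~3.1 of~\cite{CFMP}. I would use this to fix $\delta\in(0,1)$ and an entry time $n_0$ with $x_n\in[\delta,1-\delta]$ for all $n\ge n_0$, so that the left-hand side of the telescoped identity is trapped in the compact range $[\delta/(1-\delta),(1-\delta)/\delta]$.

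Taking logarithms and using this two-sided bound then pins the running sum to a constant: for every $n\ge n_0$,
\[
\Bigl|\sum_{k=0}^{n-1}(c_1(x_k)-c_2(x_k))\Bigr|\le \frac1\ell\Bigl(\bigl|\ln\tfrac{x_0}{1-x_0}\bigr|+\ln\tfrac{1-\delta}{\delta}\Bigr),
\]
a bound independent of $n$. Dividing by $n$ and letting $n\to\infty$ yields~\eqref{cost_Cesaro}. No separate estimate of the transient block $k<n_0$ is needed, since those terms are already carried inside the bounded left-hand side; one may also observe that $c_1-c_2$ is continuous on the compact interval $[0,1]$, so the transient contributes only $O(1)$ in any case.

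I do not expect a real obstacle in the mechanics; the point deserving care is conceptual. For $p=1$ the exponent is affine in $x$, so bounding $\sum_k(c_1(x_k)-c_2(x_k))$ is equivalent to bounding $\sum_k(x_k-b)$ and the time-average of the \emph{flow} converges to $b$. For $p>1$ the map $x\mapsto c_1(x)-c_2(x)$ is genuinely nonlinear, so a vanishing time-average of the cost gap does \emph{not} force the time-average flow to equal the equilibrium split $x_b^*$. The theorem is therefore deliberately stated for the cost difference, and the argument above controls exactly that observable and nothing stronger.
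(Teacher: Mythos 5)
Your proposal is correct and follows essentially the same route as the paper's proof: both express the entire running sum $\sum_{k=0}^{n-1}(c_1(x_k)-c_2(x_k))$ through the closed-form/odds identity for $x_n$, trap the orbit in the invariant absorbing interval $I_{a,b}\subset(\delta,1-\delta)$, and conclude that the sum is bounded uniformly in $n$, so its Ces\`aro average vanishes. Your telescoped ratio form and the way you absorb the transient into the $\lvert\ln(x_0/(1-x_0))\rvert$ term are only cosmetic variants of the paper's sandwich estimate and its separate treatment of initial points outside $I_{a,b}$.
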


\begin{proof}
There is a closed interval $I_{a,b}\subset (0,1)$
which is invariant and attracting for $f_{a,b}$. Thus, there is $\delta\in(0,1)$ such
that $I_{a,b}\subset (\delta,1-\delta)$.

Fix $x=x_0\in[0,1]$ and use our notation $x_n=f_{a,b}^n(x_0)$. By induction we get

\begin{equation} \label{monomrec} x_n=\frac{x}{x+(1-x)\exp \left[a\sum_{k=0}^{n-1}(c_1(x_k)-c_2(x_k))\right]}.\end{equation}

Assume that $x=x_0\in I_{a,b}$. Since $\delta<x_n<1-\delta$, we have
\[
\frac{x}{1-\delta}<x+(1-x)\exp\left(a\sum_{k=0}^{n-1}(c_1(x_k)-c_2(x_k))\right)
<\frac{x}{\delta},
\]
so
\[
\delta^2<x\frac{\delta}{1-\delta}<(1-x)\exp\left(a\sum_{k=0}^{n-1}(c_1(x_k)-c_2(x_k))
\right)<x\frac{1-\delta}{\delta}<\frac1\delta.
\]
Therefore
\begin{equation}\label{est1}
\delta^2<\exp\left(a\sum_{k=0}^{n-1}(c_1(x_k)-c_2(x_k)) \right)<\frac1{\delta^2},
\end{equation}
so
\[
\left|a\sum_{k=0}^{n-1}(c_1(x_k)-c_2(x_k))\right|<2\log(1/\delta).
\]
This inequality can be rewritten as
\[
\left|\frac1n\sum_{k=0}^{n-1}(c_1(x_k)-c_2(x_k))\right|<\frac{2\log(1/\delta)}{an},
\]
and~\eqref{cost_Cesaro} follows.

If $x\in(0,1)\setminus I_{a,b}$, then by the definition of $I_{a,b}$
 there is $n_0$ such that $f_{a,b}^{n_0}(x)\in
I_{a,b}$, so~\eqref{cost_Cesaro} also holds.
\end{proof}

\subsection{Proof of the existence of Li-Yorke chaos}

\begin{theorem}
 For any $b\in(0,1/2)\cup(1/2,1)$   there exists $a_0$ such that if
$a>a_0$ then $f_{a,b}$ given by (\ref{map_poly}) has a periodic orbit of period $3$, and
therefore it has periodic orbits of all periods, positive topological entropy and is Li-Yorke chaotic.
\end{theorem}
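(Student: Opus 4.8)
The plan is to mirror the proof of Theorem~\ref{per3} for linear costs, replacing the affine expression $x-b$ by the monomial-difference function $P_b$ and locating a single point whose third iterate falls below it while its first iterate lies above. By the conjugacy $\phi\circ f_{a,b}\circ\phi=f_{a,1-b}$ with $\phi(x)=1-x$ (which follows directly from \eqref{map_poly} using the identity $P_{1-b}(x)=-P_b(1-x)$, and which preserves periodic orbits, entropy and Li--Yorke chaos), it suffices to treat $b\in(0,1/2)$.

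First I would record the two sign equivalences coming from the iterated form of \eqref{map_poly} (cf.\ \eqref{monomrec}, whose exponent is a positive multiple of $\sum_k P_b(x_k)$). Since $P_b$ is strictly increasing with unique zero $x_b^*$, we have $f_{a,b}(x)>x$ iff $P_b(x)<0$ iff $x<x_b^*$; and, writing $x_k=f_{a,b}^k(x)$, we have $f_{a,b}^3(x)<x$ iff $\exp\big(a\sum_{k=0}^2 P_b(x_k)\big)>1$, i.e.\ iff $P_b(x)+P_b(f_{a,b}(x))+P_b(f_{a,b}^2(x))>0$. The goal is therefore to produce one $x$, chosen independently of $a$, with $x<x_b^*$ and with this three-term sum positive once $a$ is large; this gives $f_{a,b}^3(x)<x<f_{a,b}(x)$ and hence a period-$3$ orbit by the result of \cite{LMPY}.

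The key asymptotics are as follows. For any fixed $x<x_b^*$ the value $P_b(x)<0$ is a fixed negative constant, so $\exp(aP_b(x))\to0$ and thus $f_{a,b}(x)\to1$ as $a\to\infty$, whence $P_b(f_{a,b}(x))\to P_b(1)=1-b$; meanwhile $P_b\ge P_b(0)=-b$ on $[0,1]$, giving the crude but sufficient bound $P_b(f_{a,b}^2(x))\ge-b$. Consequently $\liminf_{a\to\infty}\big(P_b(x)+P_b(f_{a,b}(x))+P_b(f_{a,b}^2(x))\big)\ge P_b(x)+(1-b)-b=P_b(x)+1-2b$. Now I would pick $x<x_b^*$ close enough to $x_b^*$ that $P_b(x)>2b-1$; this is possible because $P_b(x_b^*)=0>2b-1$ (here $b<1/2$) and $P_b$ is continuous. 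For such $x$ the limiting lower bound is strictly positive, so there exists $a_0$ with the sum positive, hence $f_{a,b}^3(x)<x<f_{a,b}(x)$ for all $a>a_0$.

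The final step is purely citational: a period-$3$ point yields, via the Sharkovsky Theorem \cite{sha,liyorke}, periodic orbits of all periods; by \cite{liyorke} the map is Li--Yorke chaotic; and by Theorem~\ref{pertop} a period that is not a power of $2$ forces positive topological entropy. I expect the only genuinely delicate point to be controlling the middle term $P_b(f_{a,b}^2(x))$: unlike the linear case, where the explicit limit of the second iterate was unnecessary, here one must ensure the $a\to\infty$ behaviour of $f_{a,b}^2(x)$ does not sabotage the sum. The observation that the \emph{global} bound $P_b\ge-b$ already suffices---because we retained slack $1-2b>0$ by taking $x$ near $x_b^*$---is what lets the argument close without analysing the double iterate in detail.
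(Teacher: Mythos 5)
Your proposal is correct, but it reaches the key inequality $f_{a,b}^3(x)<x<f_{a,b}(x)$ by a genuinely different route than the paper. The paper's appendix proof works with an $a$-dependent chain: it sets $x_1=1-\frac1a$, takes $y=x_b^*/2$, finds $x_0\in(y,x_b^*)$ with $f_{a,b}(x_0)=x_1$ by the intermediate value theorem, and then, introducing $b^*=\frac34-\frac b2\in(b,1-b)$, bounds the iterates explicitly by $x_2\le a\exp(-ab^*)$ and $x_3\le a\exp(a(b-b^*))\to0$, so that $x_3<y<x_0<x_1$ for large $a$; the case $b\in(1/2,1)$ is handled by the same conjugacy $f_{a,1-b}(1-x)=1-f_{a,b}(x)$ that you use, and the concluding citations (\cite{LMPY}, the Sharkovsky Theorem, \cite{liyorke}, Theorem~\ref{pertop}) are identical to yours. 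You instead transplant the linear-case argument of Theorem~\ref{per3}: via the telescoped form \eqref{monomrec} you convert $f_{a,b}^3(x)<x$ into the sum criterion $P_b(x)+P_b(f_{a,b}(x))+P_b(f_{a,b}^2(x))>0$, take an $a$-independent witness $x<x_b^*$ with $P_b(x)>2b-1$, and close the estimate with $P_b(f_{a,b}(x))\to P_b(1)=1-b$ and the crude global bound $P_b\ge P_b(0)=-b$, so the second iterate never has to be analysed; this is the exact analogue of using $f_{a,b}^2(x)\ge0$ in the linear case, and all the individual steps (the sign equivalences, the monotonicity of $P_b$, the identity $P_{1-b}(x)=-P_b(1-x)$) check out. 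Both arguments hinge on the same asymmetry $b\ne\frac12$: in yours it appears as the slack $1-2b>0$, in the paper's as $b-b^*<0$. What your route buys is brevity and a witness point fixed independently of $a$; what the paper's route buys is an explicit quantitative picture of the emerging $3$-cycle (a point near the equilibrium is thrown to within $1/a$ of full congestion of one path, which then collapses to an exponentially small load), and it does not require choosing the starting point close to $x_b^*$.
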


\begin{proof}
Fix $b\in(0,1/2)$. It is enough to show that if $a$ is sufficiently
large, then there exist $x_0,x_1,x_2,x_3$ such that
$f_{a,b}(x_i)=x_{i+1}$ and $x_3<x_0<x_1$.

Our points $x_i$ will depend on $a$. We start by taking
\[
x_1=1-\frac1a\ \ \textrm{and}\ \ y=\frac{x_b^*}2.
\]
Note that $y$ does not depend on $a$ and $P_b(y)<0$. The inequality
$f_{a,b}(y)>x_1$ is equivalent to
\[
y>(a-1)(1-y)\exp(aP_b(y)),
\]
which holds for sufficiently large $a$. Moreover, for sufficiently
large $a$ we have $f_{a,b}(x_b^*)<x_1$. Therefore, for sufficiently
large $a$ there exists $x_0\in(y,x_b^*)$ such $f_{a,b}(x_0)=x_1$. In
particular, we have $x_0<x_1$.

Set
\[
b^*=\frac34-\frac{b}2.
\]
Since $b<1/2$, we have $b^*<1-b$, so if $a$ is sufficiently large,
then $P_b(x_1)>b^*$, and thus
\[ 
x_2=f_{a,b}(x_1)=\frac{x_1}{x_1+\frac1a\exp(aP_b(x_1))}\le
\frac a{\exp(ab^*)}=a\exp(-ab^*). 
\]
Since $P_b(x_2)\ge-b$, we get
\[\begin{split}
x_3=f_{a,b}(x_2)&=\frac{x_2}{x_2+(1-x_2)\exp(aP_b(x_2))}\le
\frac{x_2}{x_2+(1-x_2)\exp(-ab)}\\
&=\frac{x_2\exp(ab)}{x_2\exp(ab)+1-x_2}\le x_2\exp(ab)\le
a\exp(a(b-b^*)).
\end{split}\]
Since
\[
b-b^*=b-\frac34+\frac{b}2=\frac{3(b-\frac12)}2<0,
\]
we have $\lim_{a\to\infty}a\exp(a(b-b^*))=0$, and therefore if $a$ is
sufficiently large, then $x_3<y<x_0$. Hence, $f_{a,b}$ has a periodic
orbit of period 3.

We have $f_{a,1-b}(1-x)=1-f_{a,b}(x)$, so $f_{a,1-b}$ is conjugate to
$f_{a,b}$. Therefore the theorem holds also for $b\in(1/2,1)$.
\end{proof}

We did not use too many properties of $P_b$, so the theorem holds for
a larger class of those functions.

\begin{corollary}
Given any non-atomic congestion game with polynomial cost functions described by model (\ref{mwu_poly}), 
except for the symmetric case with $\alpha=\beta$, then there exists a total system demand $N_0$ such that for if $N\geq N_0$ the system has periodic orbits of all periods, positive topological entropy and is Li-Yorke chaotic.
\end{corollary}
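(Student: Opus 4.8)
The plan is to recognize this statement as an immediate corollary of the preceding theorem, obtained purely through the change of variables (\ref{varp}). All of the substantive dynamical content—existence of a period-three orbit, and hence, via the Sharkovsky and Li--Yorke machinery, orbits of all periods, positive topological entropy, and Li--Yorke chaos—has already been established for the one-dimensional map $f_{a,b}$ of (\ref{map_poly}). What remains is only the bookkeeping that ties the abstract parameter $a$ back to the physical total demand $N$.

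First I would verify that the asymmetry hypothesis of the game, $\alpha\neq\beta$, is exactly the hypothesis $b\neq 1/2$ required by the theorem. Since $b=\frac{\beta}{\alpha+\beta}$, we have $b=1/2$ if and only if $\alpha=\beta$; thus excluding the symmetric case in the game language is precisely excluding $b=1/2$ in the map language. Second, I would recall that the MWU update (\ref{mwu_poly}), after the substitution (\ref{varp}), \emph{is} literally the iteration of the map $f_{a,b}$ of (\ref{map_poly}); there is no conjugacy to construct, only a renaming of parameters. Consequently every chaotic conclusion proved for $f_{a,b}$ transfers verbatim to the dynamics of the original congestion game.

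Finally, I would exploit the monotonicity of $a$ as a function of $N$. Because $a=(\alpha+\beta)N^p\ln\!\big(\tfrac{1}{1-\epsilon}\big)$ is strictly increasing in $N$ (as $\alpha+\beta>0$, $p\in\mathbb{N}$, and $\ln(1/(1-\epsilon))>0$), the chaotic threshold $a_0$ furnished by the theorem pulls back to a demand threshold
$$
N_0=\left(\frac{a_0}{(\alpha+\beta)\,\ln\!\big(\tfrac{1}{1-\epsilon}\big)}\right)^{1/p},
$$
so that every $N\ge N_0$ yields $a\ge a_0$ and places us in the hypotheses of the theorem.

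I do not expect a genuine obstacle here: the corollary is a reparametrization of an already-proved result. The only point that requires any care is confirming that the game's normalization conventions render $a$ a strictly monotone function of $N$, so that a single scalar threshold $N_0$ suffices to capture the entire large-demand regime; once this is checked, the three chaotic conclusions follow with no further argument.
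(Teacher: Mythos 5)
Your proposal is correct and matches the paper's intended argument exactly: the corollary is an immediate consequence of the preceding theorem via the change of variables $a=(\alpha+\beta)N^{p}\ln\bigl(\tfrac{1}{1-\epsilon}\bigr)$, $b=\tfrac{\beta}{\alpha+\beta}$, with $\alpha\neq\beta$ translating precisely to $b\neq\tfrac12$ and the monotonicity of $a$ in $N$ giving the demand threshold $N_0$. The only cosmetic point is the strict versus non-strict inequality ($a>a_0$ in the theorem versus $N\geq N_0$ in the corollary), which is resolved by taking $N_0$ marginally larger, exactly as the paper implicitly does.
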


}


\section{Extensions to congestion games with heterogeneous users}
\label{s:mixture}

This is the model for the case of heterogeneous population. 
We will start with the simplest possible case where there are only two subpopulations.
We will consider a two-strategy \emph{congestion game}  with two continuums of players/agents, where all of them use  \emph{multiplicative
  weights update}. Each of the players controls an infinitesimal small fraction of the flow.
 Out of the total  flow/demand $N$  of the first population has size $N\eta_1$ whereas the total flow of the second population is $N\eta_2$.
 A canonical example would be $\eta_1=\eta_2=0.5$.
  
  We will denote the fraction of the players of the first (resp. second) population using the first strategy at time $n$ as $x_n$  
  (resp. $y_n$). The second strategy is chosen by $1-x_n$ (resp. $1-y_n$) fraction of the players. Intuitively, this model captures how two large population of players/cars (e.g. taxis versus normal cars) chooses between two alternative, parallel paths for going from point $A$ to point $B$.
 If a large fraction of the players choose the same
strategy, this leads to congestion/traffic, and the cost increases. We will
assume that the cost is proportional to the \emph{load}. If we denote
by $c(j)$ the cost of the player playing the
strategy number $j$, and the coefficients of proportionality are
$\alpha,\beta$, then we get
\begin{equation}\label{cost_mix}
\begin{aligned}
c(1)&=\alpha N (\eta_1x + \eta_2y), & c(2)&=\beta N (\eta_1+\eta_2 - \eta_1x - \eta_2y)
\end{aligned}
\end{equation}

For \emph{multiplicative weights update} (MWU), for the first (resp. second), there is a
parameter $\eps_1\in(0,1)$, (resp. $\eps_2\in(0,1)$) which can be treated as the  common learning rate of
all players of that population. Thus, we get

\begin{equation}\label{mwu_mix}
\begin{aligned}
x_{n+1}&=\frac{x_n(1-\eps_1)^{c(1)}}{x_n(1-\eps_1)^{c(1)}+
(1-x_n)(1-\eps_1)^{c(2)}},\\
y_{n+1}&=\frac{y_n(1-\eps_2)^{c(1)}}{y_n(1-\eps_2)^{c(1)}+
(1-y_n)(1-\eps_2)^{c(2)}}.
\end{aligned}
\end{equation}


By combining equations $(\ref{cost_mix})$ and $(\ref{mwu_mix})$, 

\begin{equation}\label{model_mix}
\begin{aligned}
x_{n+1}&=\frac{x_n}{x_n+
(1-x_n)(1-\eps_1)^{N(\beta (\eta_1+\eta_2) - (\alpha+\beta)(\eta_1x_n+\eta_2y_n))}},\\
y_{n+1}&=\frac{y_n}{y_n+
(1-y_n)(1-\eps_2)^{N(\beta (\eta_1+\eta_2) - (\alpha+\beta)(\eta_1x_n+\eta_2y_n)) }}.
\end{aligned}
\end{equation}

After a similar change of variables as in the homogeneous case
 formula~\eqref{model_mix} becomes
\begin{equation}\label{mix1}
\begin{aligned}
x_{n+1}&=\frac{x_n}{x_n+(1-x_n)\exp(a_1(\eta_1x_n+\eta_2y_n-b))},\\
y_{n+1}&=\frac{y_n}{y_n+(1-y_n)\exp(a_2(\eta_1x_n+\eta_2y_n-b))}.\\
\end{aligned}
\end{equation}

In the simplest case of the equal shares/mixtures (i.e. $\eta_1=\eta_2=0.5$) we have:

\begin{equation}
\begin{aligned}\label{map_mix}
x_{n+1}&=\frac{x_n}{x_n+(1-x_n)\exp(a_1(0.5(x_n+y_n)-b))},\\
y_{n+1}&=\frac{y_n}{y_n+(1-y_n)\exp(a_2(0.5(x_n+y_n)-b))}.\\
\end{aligned}
\end{equation}

{\bf Dimensional reduction}:
Although the heterogeneous model contains more independent variables than the homogeneous case, the dynamics are constrained in a lower-dimensional manifold. That is, we will show that the function  $I(x,y)=\frac {(1-x)^{a_2}y^{a_1}} {(1-y)^{a_1}x^{a_2}}$ is an invariant function for population mixtures.
This means that the curves $I(x,y)=c$ are invariant for any time step $n$, where $c$ parametrizes the family of invariant curves.
\begin{lemma}
The function  $I(x,y)=\frac {(1-x)^{a_2}y^{a_1}} {(1-y)^{a_1}x^{a_2}}$ is an invariant function (first integral) of the dynamics.
\end{lemma}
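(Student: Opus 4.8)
The plan is to pass to \emph{log-odds} coordinates, in which the MWU map becomes affine and the common cost signal cancels. For an interior point $(x,y)\in(0,1)^2$ set $u=\ln\frac{x}{1-x}$ and $v=\ln\frac{y}{1-y}$. First I would rewrite the update \eqref{mix1} in these coordinates. Writing $s_n=\eta_1 x_n+\eta_2 y_n-b$ for the quantity shared by both populations (it is exactly the common argument of the exponential in both lines of \eqref{mix1}), a direct manipulation of the $x$-update gives
\[
\frac{x_{n+1}}{1-x_{n+1}}=\frac{x_n}{(1-x_n)\exp(a_1 s_n)}=\frac{x_n}{1-x_n}\exp(-a_1 s_n),
\]
and taking logarithms yields $u_{n+1}=u_n-a_1 s_n$. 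The identical computation for the $y$-update gives $v_{n+1}=v_n-a_2 s_n$.

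The key step is then to observe that, although each of $u_n$ and $v_n$ is shifted by a multiple of the \emph{single} scalar $s_n$, the particular linear combination $a_2 u_n-a_1 v_n$ is unaffected:
\[
a_2 u_{n+1}-a_1 v_{n+1}=(a_2 u_n-a_1 v_n)-a_2 a_1 s_n+a_1 a_2 s_n=a_2 u_n-a_1 v_n.
\]
Hence $a_2 u_n-a_1 v_n$ is constant along every orbit. Exponentiating and recalling $e^{u}=\frac{x}{1-x}$, $e^{v}=\frac{y}{1-y}$, this is precisely the statement that
\[
\exp(a_1 v_n-a_2 u_n)=\frac{(1-x_n)^{a_2}\,y_n^{a_1}}{(1-y_n)^{a_1}\,x_n^{a_2}}=I(x_n,y_n)
\]
is constant in $n$, which is the claim.

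I do not expect a genuine obstacle here: once the right coordinate change is chosen, the verification is a one-line cancellation, and the only real content is guessing the log-odds transformation together with the exponents $a_2,a_1$ appearing in $I$ (which are forced by the requirement that the $s_n$-terms cancel). The point worth emphasizing in the write-up is that the cancellation uses neither $\eta_1=\eta_2=\tfrac12$ nor the specific linear form of $s_n$ in $(x,y)$: the argument goes through verbatim for the general heterogeneous map \eqref{mix1}, since both populations react to the \emph{same} cost signal $s_n$ and differ only through their learning rates $a_1,a_2$. Thus the same first integral $I$ confines the two-dimensional dynamics to the one-parameter family of invariant curves $\{I=c\}$, which is exactly the dimensional reduction the section is after. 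The only mild care needed is to restrict to the interior $(0,1)^2$ so that $u$ and $v$ are well defined; the boundary behavior is handled separately, as in the homogeneous case.
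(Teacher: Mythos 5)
Your proof is correct and is essentially the paper's argument in logarithmic form: the paper rewrites each update as $\frac{x_{n+1}}{1-x_{n+1}}=\frac{x_n}{1-x_n}e^{-a_1 s_n}$ (and similarly for $y$), raises the two relations to the powers $a_2$ and $a_1$ and divides, which is exactly your cancellation $a_2u_{n+1}-a_1v_{n+1}=a_2u_n-a_1v_n$ expressed multiplicatively. No gap; your remark that the argument needs neither $\eta_1=\eta_2$ nor linearity of the cost signal is consistent with the paper, whose proof is also carried out for the general map \eqref{mix1}.
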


\begin{proof}
It is easy to check that  the set of equations (\ref{mix1}) is equivalent to

\begin{equation}
\begin{aligned}
\frac{x_{n+1}}{1-x_{n+1}}&=\frac{x_n}{(1-x_n)\exp(a_1(\eta_1x_n+\eta_2y_n-b))},\\
\frac{y_{n+1}}{1-y_{n+1}}&=\frac{y_n}{(1-y_n)\exp(a_2(\eta_1x_n+\eta_2y_n-b))}.\\
\end{aligned}
\end{equation}

By raising the first equation to power $a_2$ and the second equation to power $a_1$ and dividing them we derive that:

$$\frac{x_{n+1}^{a_2}(1-y_{n+1})^{a_1}}{(1-x_{n+1})^{a_2}y_{n+1}^{a_1}} = \frac{x_{n}^{a_2}(1-y_{n})^{a_1}}{(1-x_{n})^{a_2}y_{n}^{a_1}}$$

That is the function  $I(x,y)=\frac {(1-x)^{a_2}y^{a_1}} {(1-y)^{a_1}x^{a_2}}$ is an invariant function (first integral) of the dynamics.
\end{proof}

{\bf Time-average convergence of the mixture to Nash equilibrium $b$.}
For the considered heterogeneous model we can show a result similar to Theorem \ref{t:Cesaro} for the homogeneous population, that is that  $b$ is  Ces\`{a}ro attracting mixture of trajectories. 

\begin{theorem}\label{t:Cesaro-mix}
For every $a_1,a_2>0$, $b\in(0,1)$ and $(x_0,y_0) \in(0,1)^2$ we have
\begin{equation}\label{e:Cesaro_mixture}
\lim_{T\to\infty}\frac1T\sum_{n=0}^{T-1}\left(\eta_1x_n+\eta_2 y_n\right)=b.
\end{equation}
\end{theorem}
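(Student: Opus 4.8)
The plan is to mimic the proof of Theorem~\ref{t:Cesaro} from the homogeneous case, with the shared congestion signal $s_n=\eta_1 x_n+\eta_2 y_n-b$ playing the role that $x_n-b$ played there. First I would unroll the two recursions in~\eqref{mix1}. Rewriting each equation in the form $\frac{x_{n+1}}{1-x_{n+1}}=\frac{x_n}{1-x_n}\exp(-a_1 s_n)$ and its analogue for $y$, an immediate induction gives the closed forms
\begin{equation}
x_n=\frac{x_0}{x_0+(1-x_0)\exp(a_1 S_n)},\qquad
y_n=\frac{y_0}{y_0+(1-y_0)\exp(a_2 S_n)},
\end{equation}
where $S_n=\sum_{k=0}^{n-1} s_k$ is the accumulated signal. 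The point of this step is that both coordinates are driven by the single scalar $S_n$, so the whole problem collapses to understanding the scalar sequence $S_n$.

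Next I would observe that $S_n$ obeys an autonomous scalar recursion $S_{n+1}=S_n+h(S_n)$, where
\begin{equation}
h(S)=\eta_1\frac{x_0}{x_0+(1-x_0)e^{a_1 S}}+\eta_2\frac{y_0}{y_0+(1-y_0)e^{a_2 S}}-b.
\end{equation}
Because $x_0,y_0\in(0,1)$ and $a_1,a_2>0$, each summand is strictly decreasing in $S$, so $h$ is continuous and strictly decreasing, with $h(S)\to 1-b$ as $S\to-\infty$ and $h(S)\to -b$ as $S\to+\infty$ (using $\eta_1+\eta_2=1$). Hence $h$ has a unique zero $S^*$, is positive for $S<S^*$ and negative for $S>S^*$. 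Crucially, the increments are uniformly bounded: since $x_n,y_n\in(0,1)$ and the weights sum to $1$, we have $h(S)\in(-b,1-b)$, so $|h(S)|<1$.

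The main step --- the one I expect to carry the weight of the argument --- is to deduce that $S_n$ stays bounded, uniformly in $n$. The mechanism is self-correction: whenever $S_n>S^*$ the increment $h(S_n)$ is negative, so $S_{n+1}<S_n$; whenever $S_n\le S^*$ the increment is at most $1-b$, so $S_{n+1}\le S^*+(1-b)$. A one-line induction then gives $S_n\le\max\{S_0,\,S^*+(1-b)\}$ for all $n$, and the symmetric argument (using $h(S)>-b$ when $S\ge S^*$) gives $S_n\ge\min\{S_0,\,S^*-b\}$. Thus $S_n$ remains in a fixed bounded interval; the only subtlety is to notice that the uniformly bounded increments prevent an overshoot from escaping to infinity, which is exactly where $\eta_1+\eta_2=1$ (forcing $s_n\in(-b,1-b)$) enters.

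Finally, boundedness of $S_n$ yields $\tfrac1T S_T\to 0$, and since $\tfrac1T S_T=\tfrac1T\sum_{n=0}^{T-1}(\eta_1 x_n+\eta_2 y_n)-b$, the desired identity~\eqref{e:Cesaro_mixture} follows. I would remark that, unlike the homogeneous proof which first isolates an invariant absorbing interval $I_{a,b}$ to obtain two-sided bounds on $x_n$, here the bound comes directly from the scalar recursion, so no separate absorbing-set lemma is needed; boundedness of $S_n$ also shows a posteriori that each of $x_n,y_n$ stays bounded away from $0$ and $1$.
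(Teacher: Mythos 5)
Your proof is correct, and it establishes the same key fact as the paper --- that the accumulated deviation $S_T=\sum_{n=0}^{T-1}(\eta_1 x_n+\eta_2 y_n-b)$ stays bounded --- but it packages the argument differently. The paper passes to the odds variables $z=\frac{x}{1-x}$, $w=\frac{y}{1-y}$, so that \eqref{mix1} becomes $z_{n+1}=z_ne^{-a_1 s_n}$, $w_{n+1}=w_ne^{-a_2 s_n}$, uses the invariant-curve relation $w_n=c\,z_n^{a_2/a_1}$ (the dimension-reduction lemma proved just before) to conclude that $z_n$ and $w_n$ are simultaneously close to $0$ or simultaneously large, and then argues by self-correction near $0$ and $\infty$, together with the per-step bounds $z_n\exp(-a_1(1-b))<z_{n+1}<z_n\exp(a_1 b)$, that $\inf_n z_n>0$ and $\sup_n z_n<\infty$; unrolling $z_T=z_0\exp\bigl(a_1(\sum_{n}(\eta_1x_n+\eta_2y_n)-Tb)\bigr)$ then yields \eqref{e:Cesaro_mixture}. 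Your closed forms are exactly $z_n=z_0e^{-a_1S_n}$ and $w_n=w_0e^{-a_2S_n}$, so your parametrization by the single scalar $S_n$ is the same invariant-curve structure made explicit; the genuine difference is that you turn the problem into the autonomous scalar recursion $S_{n+1}=S_n+h(S_n)$ with $h$ continuous, strictly decreasing, valued in $(-b,1-b)$ and changing sign at a unique $S^*$, and then obtain boundedness of $S_n$ by a short induction ($S_n\le\max\{S_0,S^*+(1-b)\}$ and the symmetric lower bound), precisely where the paper instead combines the qualitative behavior near $0$ and $\infty$ with the multiplicative per-step bounds on $z_n$. Both arguments are sound and both give a uniform bound on $S_T$, hence an $O(1/T)$ rate; yours is self-contained (it does not lean on the invariant-function lemma and needs no separate absorbing-set step) and yields explicit constants, while the paper's version foregrounds the geometric picture of trajectories confined to the invariant curves $w=cz^{a_2/a_1}$.
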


\begin{proof}
Let $f(x_n,y_n)=(x_{n+1},y_{n+1})$ be defined by (\ref{mix1}) where $\eta_1,\eta_2\in(0,1)$ and $\eta_1+\eta_2=1$.

The map $t\mapsto\frac t{1-t}$ is a homeomorphism of $(0,1)$ onto
$(0,\infty)$, and its inverse is given by $t\mapsto\frac t{1+t}$.
Thus, we can introduce new variables, $z=\frac x{1-x}$ and $w=\frac
y{1-y}$. In these variables our map will be $g:(0,\infty)^2\to
(0,\infty)^2$, and if $g(z_n,w_n)=(z_{n+1},w_{n+1})$, then
\begin{equation}\label{two}
\begin{aligned}
z_{n+1}&=z_n\exp\left(-a_1\left(\eta_1\frac{z_n}{1+z_n}+\eta_2
\frac{w_n}{1+w_n}-b\right)\right),\\
w_{n+1}&=w_n\exp\left(-a_2\left(\eta_1\frac{z_n}{1+z_n}+\eta_2
\frac{w_n}{1+w_n}-b\right)\right).
\end{aligned}
\end{equation}

If $w_n=cz_n^{a_2/a_1}$ then $w_{n+1}=cz_{n+1}^{a_2/a_1}$. This shows
that if $z_n$ is close to 0 then also $w_n$ is close to 0, and
by~\eqref{two} we get $z_{n+1}>z_n$. Similarly, if $z_n$ is close to
infinity, then also $w_n$ is close to infinity, and by~\eqref{two} we
get $z_{n+1}<z_n$. Together with another inequality obtained
from~\eqref{two},
\[
z_n\exp(-a_1(1-b))<z_{n+1}<z_n\exp(a_1b),
\]
this proves that if $z_0,w_0\in(0,\infty)$ then $\inf_{n\ge 0}z_n>0$
and $\sup_{n\ge 0}z_n<\infty$.

The first equation of~\eqref{two} can be rewritten as
\[
z_{n+1}=z_n\exp(a_1(\eta_1x_n+\eta_2y_n-b)),
\]
so by induction we get
\[
z_T=z_0\exp\left(a_1\left(\sum_{n=0}^{T-1}(\eta_1x_n+\eta_2y_n)-Tb
\right)\right).
\]
Therefore there exists a real constant $M$ (depending on the
parameters and the initial point $(x_0,y_0)$), such that
\[
\left|\sum_{n=0}^{T-1}(\eta_1x_n+\eta_2y_n)-Tb\right|\le M
\]
for every $T$. Dividing by $T$ and passing to the limit, we get
\begin{equation}
\lim_{T\to\infty}\frac1T\sum_{n=0}^{T-1}\left(\eta_1x_n+\eta_2y_n\right)=b.
\end{equation}
\end{proof}

We end here with numerical results to demonstrate that, perhaps not surprisingly, this class of games not only possesses complex non-equilibrium behavior, but also allows for an immediate generalization to a more realistic, larger dimensional system, in which new and even more complex non-equilibrium phenomena can arise. Developing a more complete theoretical understanding of these issues, will likely require the introduction of new tools and techniques.

Figures (\ref{fig:hetero1}) and (\ref{fig:hetero2}) show attracting orbits generated from the map (\ref{map_mix}) (with $\eta_1=\eta_2=0.5$) for fixed values of $a_1, a_2, b$. There, $5000$ random starting points are initialized. To approximate where the attractors lie, the first $1000$ iterates were made without plotting; the next $200$ were visualized.

\begin{figure}
  \centering
  \includegraphics[width=.5\linewidth]{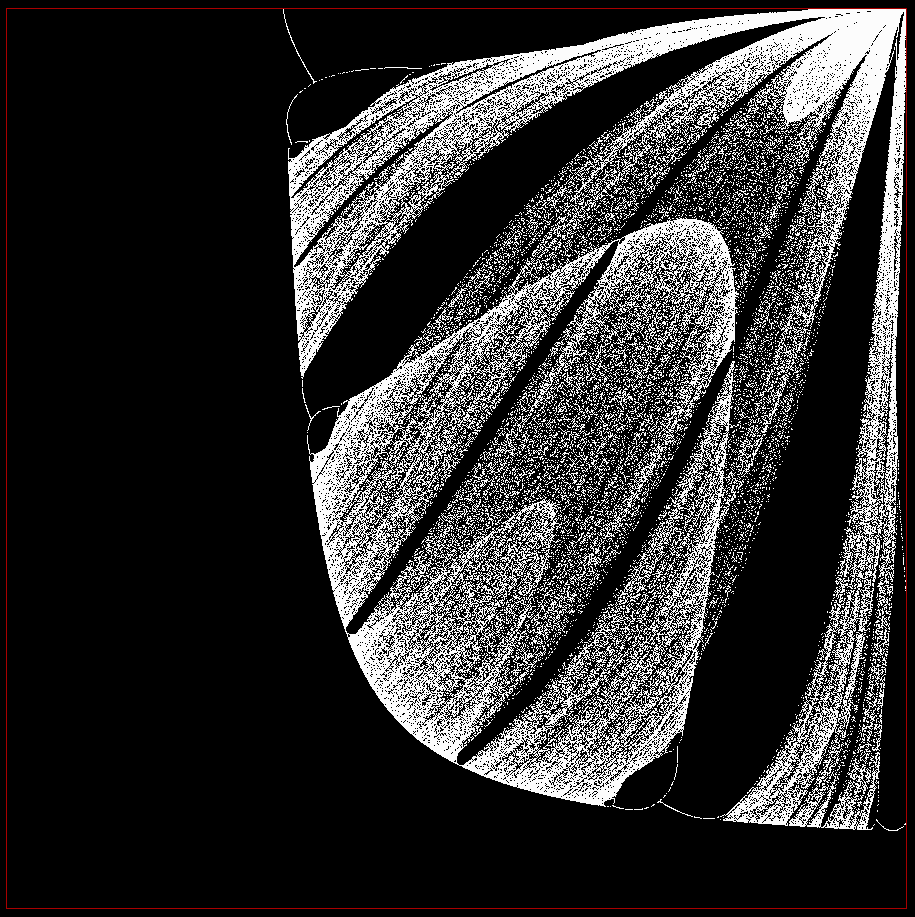}   
  \caption{Attractor for the map (\ref{map_mix}) in the two-subpopulation model with $a_1=20, a_2=30, b=0.8$. The white dots are the coordinates $(x,y)$ generated from initializing 5000 $(x_0,y_0)$'s at random from the unit square domain, iterating them with (\ref{map_mix}) 1000 times, then visualizing the next 200 iterates. 
}
  \label{fig:hetero1}
\end{figure}%

\begin{figure}
  \centering
  \includegraphics[width=.5\linewidth]{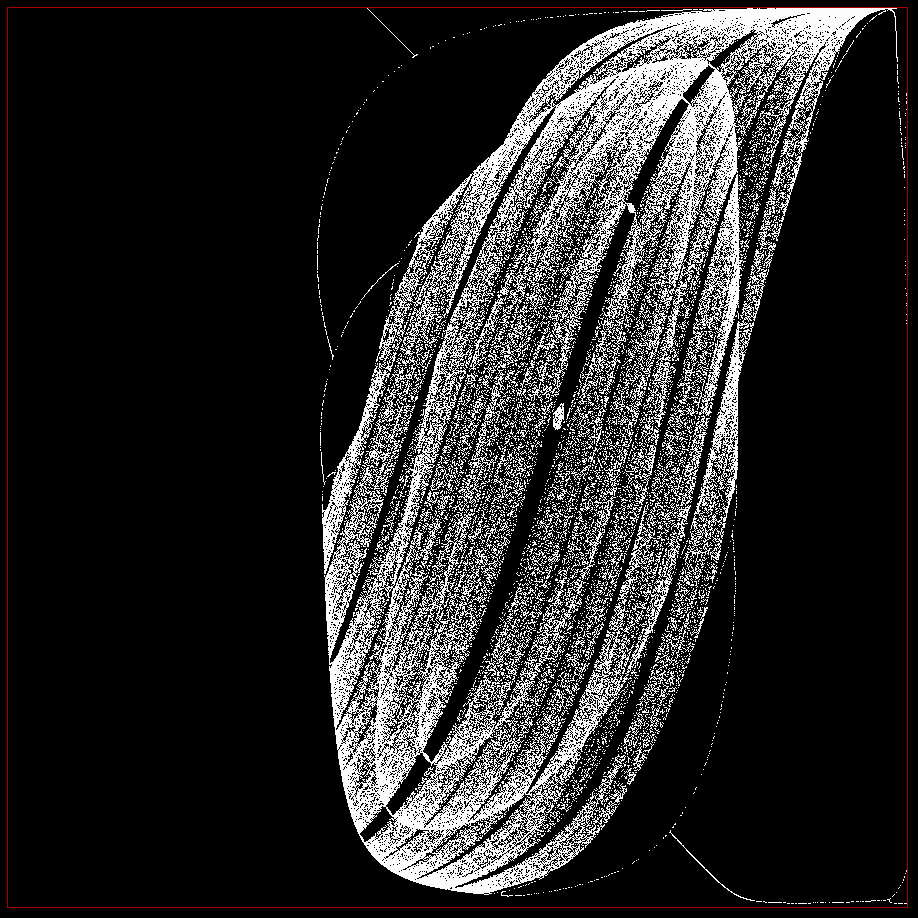}  
  \caption{Attractor for the map (\ref{map_mix}) in the two-subpopulation model with $a_1=10, a_2=30, b=0.7$. The white dots are the coordinates $(x,y)$ generated from initializing 5000 $(x_0,y_0)$'s at random from the unit square domain, iterating them with (\ref{map_mix}) 1000 times, then visualizing the next 200 iterates.  
  }
  \label{fig:hetero2}
\end{figure}

\section{Chaos in large atomic congestion games via reductions to the non-atomic case} 
\label{s:reductions}

In this paper we have focused on analyzing MWU in (mostly linear) non-atomic congestion games. In these settings each individual agent is assumed to control an infinitesimal account of the overall flow $N$. In the atomic setting each agent is controlling a discrete unsplittable amount of flow, i.e., a packet of size $1$. There are now $N$ agents that need to choose amongst the different paths that are available to them. In this section, we will show how to translate results from the case of non-atomic congestion games to their atomic counterparts. To do so we will show that the MWU maps in the case of linear atomic congestion games can be reduced to MWU maps of non-atomic games, which we have already analyzed.

 We will study MWU in a linear congestion game under the easiest information theoretic model of full information where on every day MWU receives as an input the expected cost of all actions/paths. Furthermore, we will assume that all $N$ agents are initialized with the fully mixed (interior) strategy $x$. This of course is not a generic initial condition but since we are working towards negative/complexity/chaos type of results we can choose our initial condition in an adversarial manner. Due to the symmetry of initial conditions, the payoff vectors that agents experience  on any day are common across all agents. Hence, the symmetry of initial conditions is preserved. For such trajectories we only need to keep track of a single probability distribution (the same one for all agents), which is already reminiscent of the non-atomic setting where we only have to keep track of the ratios/split of the total demand along the different paths/strategies. Let's denote by $x$ this common probability vector for all $N$ agents. We are ready to define our model in detail. 

{\bf Atomic model with $N$ agents/players.}  We have $N$ agents. Each agent can choose between $m$ strategies/paths.
The cost function for each strategy/path is a linear function on the number of the agents using that path $i$, ie, a linear function of its load. Let $\alpha_i$ be the respective multiplicative constant for strategy $i$. Suppose all agents use the same probability distribution $x$. The expected cost of any agent for using strategy $i$ is

\begin{equation}\label{many_cost_discrete}
\begin{aligned}
c(i)&=\alpha_i (1 + (N-1) x_i) ~\forall  i \in \{1, \dots, m\}.
\end{aligned}
\end{equation}

Given this payoff vector, the MWU updates follow the same format as always.
At time $n+1$ the players know already the  expected cost of the strategies at
time $n$ and update
their choices. 
 The update rule in the case of $m$ strategies is as follows:

\begin{equation}\label{mwu_discrete}
\begin{aligned}
x_i(n+1)&=x_i(n)\frac{(1-\eps)^{c(i)}}{\sum_{j \in \{1,\dots,m 
\}} x_j(n)(1-\eps)^{c(j)}},\\
\end{aligned}
\end{equation}

We are now ready to state two formal results. One for the case of games with two strategies and one for the more general case with $m$ strategies. 

\begin{theorem}
\label{t:Disc1}
Let's consider an atomic congestion game with $N$ agents and two paths of linear cost functions as described by equations (\ref{many_cost_discrete}), (\ref{mwu_discrete}). 
Let $x$ be an interior probability distribution that is a common initial condition for all $N$ agents. As long as the congestion game has a symmetric interior Nash equilibrium where both agents play the distribution $(p, 1-p)$ with $0<p<1$\footnote{The game has a symmetric interior Nash if and only if   $\alpha_2 <  N \alpha_1$ and 
$\alpha_1 <  N \alpha_2$. 
} the update rule of the probability distribution $x$ under MWU dynamics is as in the case of the non-atomic model map (\ref{map}) where  $a= (N-1) (\alpha_1+\alpha_2) \ln \big( \frac{1}{1-\epsilon} \big)$ and $b=p$. Thus, as long as $p\neq 0.5$,  there exists a threshold capacity $N_0$ such that if  the number of agents $N$ exceeds $N_0$ the system has periodic orbits of all possible periods, positive topological entropy and is Li-Yorke chaotic. If $p=0.5$,  although the Price of Anarchy of the game converges to one as $N\rightarrow \infty$, the time-average social cost can be arbitrarily close to its worst possible value.
\end{theorem}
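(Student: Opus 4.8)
The plan is to reduce the atomic dynamics \emph{exactly} to the non-atomic map $f_{a,b}$ of~\eqref{map} and then quote the results already established for that map. First I would exploit the symmetric initialization: since all $N$ agents share the interior distribution $(x,1-x)$ and the update~\eqref{mwu_discrete} is identical across agents, a single scalar $x_n$ describes the whole population at every step. For two strategies~\eqref{many_cost_discrete} gives
\[
c(2)-c(1)=\alpha_2\big(1+(N-1)(1-x)\big)-\alpha_1\big(1+(N-1)x\big),
\]
which is affine in $x$ with slope $-(N-1)(\alpha_1+\alpha_2)$ and unique zero $p=\frac{N\alpha_2-\alpha_1}{(N-1)(\alpha_1+\alpha_2)}$. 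This $p$ is precisely the indifference point, i.e.\ the symmetric interior Nash, and $p\in(0,1)$ exactly when $\alpha_2<N\alpha_1$ and $\alpha_1<N\alpha_2$, matching the footnote. Writing $c(2)-c(1)=-(N-1)(\alpha_1+\alpha_2)(x-p)$ and substituting into the two-strategy form of~\eqref{mwu_discrete} yields $x_{n+1}=f_{a,b}(x_n)$ with $a=(N-1)(\alpha_1+\alpha_2)\ln(1/(1-\epsilon))$ and $b=p$, as claimed. A one-line computation shows $p=\tfrac12\iff\alpha_1=\alpha_2$ for every $N$, so the stated dichotomy is $N$-independent.

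For $p\neq0.5$ (equivalently $\alpha_1\neq\alpha_2$) I would hold $\epsilon,\alpha_1,\alpha_2$ fixed and let $N\to\infty$: then $a=a(N)\to\infty$ linearly in $N$, while $b=p(N)\to p_\infty:=\alpha_2/(\alpha_1+\alpha_2)\neq\tfrac12$. I expect this drift of $b$ with $N$ to be the \textbf{main obstacle}, since Corollary~\ref{chaosentropy} is stated for a single fixed $b$. I would handle it by checking that the threshold $a_b$ built in the proof of Theorem~\ref{per3} can be taken uniform over $b$ in a compact subinterval $J\subset(0,1/2)$ or $J\subset(1/2,1)$ containing $p_\infty$: there the witness point $x_0$ (with $3b-1<x_0<b$) and the target inequality $f_{a,b}(x_0)>3b-x_0$ depend continuously on $b$, so $A:=\sup_{b\in J}a_b<\infty$. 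Since $p(N)\in J$ for all large $N$ (as $p(N)\to p_\infty$ stays bounded away from $1/2$), any $N_0$ with $a(N_0)>A$ forces $f_{a(N),p(N)}$ to have periodic orbits of all periods, positive topological entropy and to be Li-Yorke chaotic for every $N>N_0$.

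For $p=0.5$, write $\alpha_1=\alpha_2=:\alpha$, so $a=2(N-1)\alpha\ln(1/(1-\epsilon))\to\infty$. By Theorem~\ref{trajf} the orbit is attracted, off a countable set, to the period-$2$ cycle $\{\sigma_a,1-\sigma_a\}$, and the intermediate-value argument inside the proof of Theorem~\ref{t:SC} gives $\sigma_a\to0$ as $a\to\infty$. I would then express the expected atomic social cost of the symmetric profile as $SC_N(x)=N\alpha\big[1+(N-1)(x^2+(1-x)^2)\big]$, whose minimizer $x=\tfrac12$ underlies the standard $\mathrm{PoA}=1+1/N\to1$ computation of the introduction, and whose maximum $\alpha N^2$ is attained at $x\in\{0,1\}$. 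Since $SC_N$ is symmetric about $\tfrac12$, the time-average along the cycle equals $SC_N(\sigma_a)$, and as $N\to\infty$ the combination $\sigma_a\to0$ (driven by $a\to\infty$) together with the vanishing of the additive constants gives $SC_N(\sigma_a)/(\alpha N^2)\to1$; that is, the time-average social cost approaches its worst possible value $\alpha N^2$ even though $\mathrm{PoA}\to1$. The only care needed is the interlocking of the limits $N\to\infty$ and $\sigma_a\to0$, which is automatic because the latter is itself forced by the former.
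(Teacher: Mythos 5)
Your proposal is correct and follows essentially the same route as the paper: reduce the symmetric atomic dynamics to the one-dimensional map $f_{a,b}$ of~\eqref{map} with $a=(N-1)(\alpha_1+\alpha_2)\ln\bigl(\tfrac{1}{1-\epsilon}\bigr)$ and $b=p$, then invoke Corollary~\ref{chaos} (resp.\ Theorems~\ref{trajf} and~\ref{t:SC}) for the asymmetric (resp.\ symmetric) case. Your extra step of making the chaos threshold $a_b$ uniform over a compact interval of $b$-values containing $p_\infty=\alpha_2/(\alpha_1+\alpha_2)$ is a welcome refinement, since $b=p(N)$ drifts with $N$ and the paper's proof applies Corollary~\ref{chaos} without commenting on this dependence; likewise your explicit computation of the atomic social cost $SC_N(x)=N\alpha\bigl[1+(N-1)(x^2+(1-x)^2)\bigr]$ makes precise the paper's appeal to the ``equivalence'' with Theorem~\ref{t:SC}.
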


\begin{proof}
By substituting into~\eqref{mwu_discrete} the values of the cost functions
from~\eqref{many_cost_discrete} we get:
\begin{equation} 
\begin{aligned}
x_{n+1}&=\frac{x_n(1-\eps)^{\alpha_1 (1+ (N-1) x_n)}}{x_n(1-\eps)^{\alpha_1 (1+ (N-1) x_n)}+
(1-x_n)(1-\eps)^{\alpha_2 (1+ (N-1) (1-x_n))}}\\
&=\frac{x_n}{x_n+(1-x_n)(1-\eps)^{\alpha_2 N-\alpha_1 -(\alpha_1+\alpha_2) (N-1) x_n}}.
\end{aligned}
\end{equation}

We introduce the new variables
\begin{equation}
a=(N-1) (\alpha_1+\alpha_2) \ln \big( \frac{1}{1-\epsilon} \big),\ \ \ b=\frac{\alpha_2 N-\alpha_1}{(\alpha_1+\alpha_2) (N-1)}.
\end{equation}

Note that the symmetric strategy where all agents play according to $(b,1-b)$ is an interior Nash equilibrium. Given this new formulation we see that the map is the same as the one for the non-atomic case (\ref{map}). The claims about chaos follow by direct application of Corollary \ref{chaos}. In the case where $p=0.5$, we have that the uniform distribution is an interior Nash and this implies that $\alpha_1=\alpha_2(=\alpha)$, i.e. both paths have the same cost function. In terms of Price of Anarchy, the expected cost of any agent at a Nash equilbrium is at most $\alpha(1+\frac{N-1}{2})=\alpha \frac{N+1}{2}$. Hence the social cost of any Nash equilibrium is at most $\alpha \frac{N(N+1)}{2}$. On the other hand, the socially optimal state that divides the load as equally as possible has cost at least $\alpha \frac{N^2}{2}$ and the ratio of the two converges to $1$ as $N\rightarrow \infty$. Finally, the fact that there exist trajectories such that the time-average social cost can be arbitrarily close to its worst possible value follows from a direct application of Theorem \ref{t:SC} given the equivalence of the update rule for the atomic and non-atomic case and the fact for both systems (approximate) worst case performance is experienced when (in expectation almost) 
all users/flow are using the same strategy.
\end{proof}

Theorem \ref{t:Disc1} applies for atomic congestion games with numerous agents but only two paths. As we show next, chaos is robust and emerges in atomic congestion games regardless of the number of available paths.

\begin{theorem}
\label{t:Disc2}
Let's consider an atomic congestion game with $N$ agents and $m$ paths of linear cost functions as described by equations (\ref{many_cost_discrete}), (\ref{mwu_discrete}). Let the cost functions of the all paths be $\alpha x$ where $x$ the load of the respective path and $\alpha$ the common multiplicative constant. Let $x$ be interior probability distribution that is a common initial condition for all $N$ agents.  The update rule of the probability distribution $x$ under MWU dynamics is as in the case of the non-atomic model map (\ref{f1m}) where $a_i= (N-1) \alpha \ln \big (  \frac{1}{1-\epsilon} \big)$. Thus, for any such atomic congestion game there exists a threshold capacity $N_0$ such that if  the number of agents $N$ exceeds $N_0$ the system has periodic orbits of all possible periods, positive topological entropy and is Li-Yorke chaotic.
\end{theorem}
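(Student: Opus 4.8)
The plan is to prove the statement in two moves: first, exhibit an exact algebraic identity showing that on the symmetric (common-initial-condition) subspace the atomic full-information MWU recursion (\ref{mwu_discrete}) \emph{coincides} with the non-atomic simplex map (\ref{f1m}); and second, invoke Theorem \ref{t:chaos_many} verbatim to read off periodic orbits of all periods, positive topological entropy, and Li-Yorke chaos. The whole difficulty is concentrated in the reduction, after which the chaos conclusion is inherited for free.

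To carry out the reduction I would substitute the atomic cost (\ref{many_cost_discrete}), now with $\alpha_i=\alpha$ for every $i$, into (\ref{mwu_discrete}). Writing $c(i)=\alpha\big(1+(N-1)x_i\big)$, the update reads
\begin{equation*}
x_i(n+1)=x_i(n)\,\frac{(1-\eps)^{\alpha(1+(N-1)x_i(n))}}{\sum_{j}x_j(n)(1-\eps)^{\alpha(1+(N-1)x_j(n))}}.
\end{equation*}
The crucial observation is that the factor $(1-\eps)^{\alpha}$ arising from the additive constant ``$1$'' is common to the numerator and to every summand of the denominator, so it cancels identically. Using $(1-\eps)^{\alpha(N-1)x_i}=\exp\!\big(-(N-1)\alpha\ln\tfrac1{1-\eps}\,x_i\big)$ and setting $a_i=(N-1)\alpha\ln\tfrac1{1-\eps}$, the recursion becomes exactly
\begin{equation*}
x_i(n+1)=\frac{x_i(n)\exp(-a_ix_i(n))}{\sum_j x_j(n)\exp(-a_jx_j(n))},
\end{equation*}
which is the map $f$ of (\ref{f1m}) with $y_i=x_i\exp(-a_ix_i)$ and the stated $a_i$. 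This establishes the reduction asserted in the theorem.

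With the dynamics identified, I would apply Theorem \ref{t:chaos_many}. Because all costs coincide we have $a_1=\cdots=a_m$, equivalently $\alpha_1=\cdots=\alpha_m$, and the only configuration excluded by that theorem is $m=2$ with $\alpha_1=\alpha_2$ --- the fully symmetric two-path game, whose effective equilibrium split is $b=\tfrac12$ and which is already handled (as a period-two, non-chaotic regime) in Theorem \ref{t:Disc1}. For every $m\ge 3$ the segment construction in the proof of Theorem \ref{t:chaos_many} lands on a one-dimensional map $f_{a',b}$ with effective split $b=\tfrac{m-1}{m}\neq\tfrac12$, so Corollary \ref{chaos} together with Theorem \ref{pertop} furnishes periodic orbits of all periods, positive topological entropy, and Li-Yorke chaos once $a'$ is large enough. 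Since $a_i$ is proportional to $N-1$, the required threshold $N_0$ is obtained directly from the threshold on the effective demand supplied by Theorem \ref{t:chaos_many}.

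The main point requiring care is not analytic but bookkeeping: one must verify that the equal-coefficient assumption collides with the exclusion in Theorem \ref{t:chaos_many} \emph{only} when $m=2$, which is precisely the case peeled off into Theorem \ref{t:Disc1}; for all $m\ge 3$ the equal-cost game is automatically outside the exceptional locus, so chaos is guaranteed. Conceptually, the content of the argument is the \emph{gauge invariance} of MWU: the additive term $+\alpha$ in the atomic expected cost drops out of the normalized update, so that atomic full-information dynamics on the symmetric invariant subspace is indistinguishable from non-atomic dynamics with the rescaled demand $(N-1)$ in place of $N$, and every result proved for (\ref{f1m}) transfers unchanged.
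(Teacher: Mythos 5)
Your proposal is correct and follows essentially the same route as the paper: the explicit cancellation of the common factor $(1-\eps)^{\alpha}$ is exactly the paper's observation that the MWU update (\ref{mwu_discrete}) is invariant under shifting all costs by a constant, after which both arguments identify the dynamics with the non-atomic map (\ref{f1m}) with $a_i=(N-1)\alpha\ln\frac{1}{1-\eps}$ and conclude via Theorem \ref{t:chaos_many}. Your explicit bookkeeping of the exceptional case --- noting that equal coefficients collide with the exclusion in Theorem \ref{t:chaos_many} only when $m=2$ (the symmetric period-two regime of Theorem \ref{t:Disc1}), while for $m\ge 3$ the effective one-dimensional split is $b=\frac{m-1}{m}\neq\frac12$ --- is a point the paper's proof leaves implicit, and it is handled correctly.
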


\begin{proof}
The reduction of the map described by equations (\ref{many_cost_discrete}), (\ref{mwu_discrete}) to non-atomic model map (\ref{f1m}) follows easily once we observe that MWU, i.e. map (\ref{mwu_discrete}) is invariant to shifts of the cost vector by a constant value.\footnote{This invariance is also true for most standard regret minimizing dynamics, e.g. Follow-the-Regularized-Leader.} That is, for any $\gamma$ if we apply the vector $c'(i)= c(i)+\gamma$ to map (\ref{mwu_discrete}) it remains unchanged. Hence instead of substituting into~\eqref{mwu_discrete} the values of the cost functions $c(i)=\alpha (1+(N-1)x)$, we instead substitute the values  $c'(i)=\alpha (N-1)x$. However, this is exactly map in the case of the non-atomic model map (\ref{f1m}) with $a_i= (N-1) \alpha \ln \big (  \frac{1}{1-\epsilon} \big)$. The rest of the theorem follows immediately by applying Theorem \ref{t:chaos_many}.
\end{proof}


\end{document}